\documentclass{article}

%



\usepackage[round]{natbib}

\usepackage[utf8]{inputenc} 
\usepackage[T1]{fontenc}    
\usepackage{hyperref}       
\usepackage{url}            
\usepackage{booktabs}       
\usepackage{amsfonts}       
\usepackage{nicefrac}       
\usepackage{microtype}      

\usepackage{amsmath}
\usepackage{amssymb}
\usepackage{amsthm}
\usepackage{mathrsfs}
\usepackage{graphicx}
\usepackage{ulem}
\usepackage{algorithmic}
\usepackage{algorithm}
\usepackage{bbm}

\usepackage{multirow}

\usepackage{xcolor}

\DeclareMathOperator{\E}{\mathbb{E}}

\title{Disjunct Support Spike and Slab Priors for Variable Selection in Regression under Quasi-sparseness}

\author{
  Daniel Andrade \\ 
  \footnotesize Graduate University of Advanced Studies (SOKENDAI) \\
     \footnotesize  10-3 Midoricho, Tachikawa, Tokyo, 190-8562, Japan \\
    \footnotesize  daniel.andrade.silva@gmail.com
  \and 
  Kenji Fukumizu  \\
  \footnotesize    The Institute of Statistical Mathematics \\
  \footnotesize  10-3 Midoricho, Tachikawa, Tokyo, 190-8562, Japan \\
}

\DeclareMathOperator*{\argmax}{arg\,max}

\newtheorem{lemma}{Lemma}
\newtheorem{theorem}{Theorem}

\begin{document}

\maketitle

\begin{abstract}
Sparseness of the regression coefficient vector is often a desirable property, since, among other benefits, sparseness improves interpretability. 
In practice, many true regression coefficients might be negligibly small, but non-zero, which we refer to as quasi-sparseness.
Spike-and-slab priors as introduced in \citep{chipman2001practical} can be tuned to ignore very small regression coefficients, and, as a consequence provide a trade-off between prediction accuracy and interpretability. 
However, spike-and-slab priors with full support lead to inconsistent Bayes factors, in the sense that the Bayes factors of any two models are bounded in probability.
This is clearly an undesirable property for Bayesian hypotheses testing, where we wish that increasing sample sizes lead to increasing Bayes factors favoring the true model.
The moment matching priors as in \citep{johnson2012bayesian} can resolve this issue, but are unsuitable for the quasi-sparse setting due to their full support outside the exact value 0.
As a remedy, we suggest disjunct support spike and slab priors, for which we prove consistent Bayes factors in the quasi-sparse setting, and show experimentally fast growing Bayes factors favoring the true model.
Several experiments on simulated and real data confirm the usefulness of our proposed method to identify models with high effect size, while leading to better control over false positives than hard-thresholding.
\end{abstract}

\section{Introduction}
Sparseness of the regression coefficient vector is often a desirable property, since it (1) helps to improve interpretability, and (2) reduces the cost\footnote{In case where acquiring the value of a covariate incurs a cost.} of prediction.
In particular, we are interested here in the setting when the sparsity assumptions regarding exact zero regression coefficients are violated, and many regression coefficients might be non-zero but have negligible magnitude. 
This is also sometimes referred to as quasi-sparseness \citep{datta2016bayesian}.
In such situations, we may have to trade in a small reduction in prediction accuracy for an increase in sparseness. 

Spike-and-slab priors, as proposed by \citep{chipman2001practical}, can potentially handle such a trade-off between prediction accuracy and sparseness. Though, manual setting of these priors is difficult, since they are either too restrictive, or depend on the unknown noise variance of the response variable. 
The limitations of these previous approaches are basically due to the desire for conjugate priors which results in closed-form solutions for the marginal likelihood. 

Here, in this work, we propose a hierarchical spike-and-slab prior for the linear regression model that allows the user to explicitly specify the minimal magnitude $\delta$ of the regression coefficients that is considered practically significant. 
The proposed model decouples the response noise prior variance from the regression coefficients' prior variance, and thus making the threshold parameter $\delta$ more meaningful than previous work \citep{chipman2001practical}.
For example, $\delta$ can be set such that the Mean-Squared Error (MSE) of the prediction is only little influenced by ignoring covariates with coefficients' magnitude smaller than $\delta$.

Our proposed method also resolves another subtle issue with the spike-and-slab priors from \citep{chipman2001practical}, namely
inconsistent Bayes factors (BF). Due to the fact that the spike-and-slab priors of \citep{chipman2001practical} (and related work like \citep{ishwaran2005spike}) have full support, 
the Bayes factors of any two models is bounded in probability, for which we give a formal proof in Section \ref{sec:proofAsymptoticallyCorrectBF_regression}.
This is an undesirable property for Bayesian hypothesis testing, since we would like that the BF between the true and the wrong model grows with increasing sample size.
In order to resolve this issue, our proposed method uses disjunct support priors, which allows us to guarantee consistent Bayes factors that grow exponentially fast favoring the true model.

Though our choice of the priors allows us to prove consistent Bayes factors, our choice complicates the calculation of the marginal likelihood.
As a solution, we propose to estimate all Bayes factors by introducing a latent variable indicator vector $\mathbf{z}$, with an efficient Gibbs sampler to sample from its posterior distribution.

We note that \citep{johnson2012bayesian} also proposed the use of disjunct support priors for variable selection in linear regression. However, the difference in support of their spike and slab priors is only at $\{0\}$, which makes them unsuitable for the quasi-sparse setting. 


The rest of this article is organized as follows.
In the next section, we summarize the properties of spike-and-slab priors from previous work. 
In Section \ref{sec:proposedMethod_regression}, we introduce our model for variable selection based on disjunct support spike-and-slab priors.
In Section \ref{sec:proofAsymptoticallyCorrectBF_regression}, we prove that the disjunct support priors of our proposed method allows us to guarantee consistent Bayes factors, and compare this to the asymptotic results for other spike-and-slab priors.
In Section \ref{sec:proposedMethod_estimation_regression}, we explain our MCMC sampling strategy for estimating model probabilities.
Since the elicitation of $\delta$ can be difficult, we discuss in Section \ref{sec:deltaSpecification_regression} a strategy for determining $\delta$ by estimating the increase in mean squared error (MSE) for prediction. 
We evaluate our proposed method on several simulated data sets in Section \ref{sec:evaluation_synthetic_regression}, and real data sets in Section \ref{sec:evaluation_real_regression}.
Finally, we summarize our findings in Section \ref{sec:conclusions_regression}.

\section{Related work} \label{sec:relatedWork}

\begin{figure*}[h]
  \centering
  \includegraphics[scale=0.5]{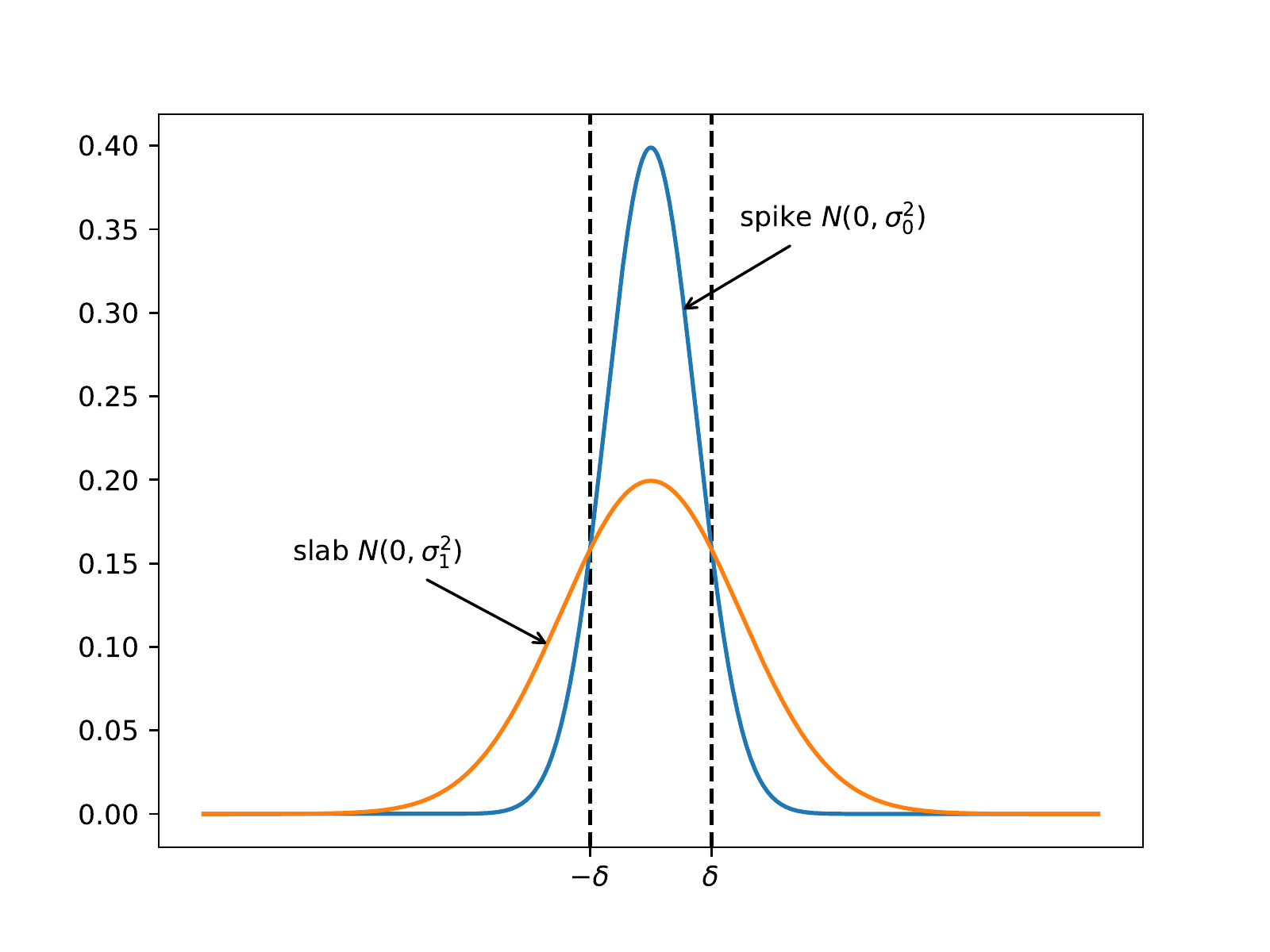}  
    \caption{Example of spike and slab prior as proposed in \citep{chipman2001practical}: both spike and slab priors are normal distributions but with different variances $\sigma_0^2$ and $\sigma_1^2$. 
    The variances $\sigma_0^2$ and $\sigma_1^2$ are specified such that the probability density of the spike prior dominates the one of the slab prior in the interval $[-\delta, \delta]$, otherwise the slab prior dominates.}
  \label{fig:spikeAndSlab_continuous}
\end{figure*} 

\begin{figure*}[h]
  \centering
  \includegraphics[scale=0.5]{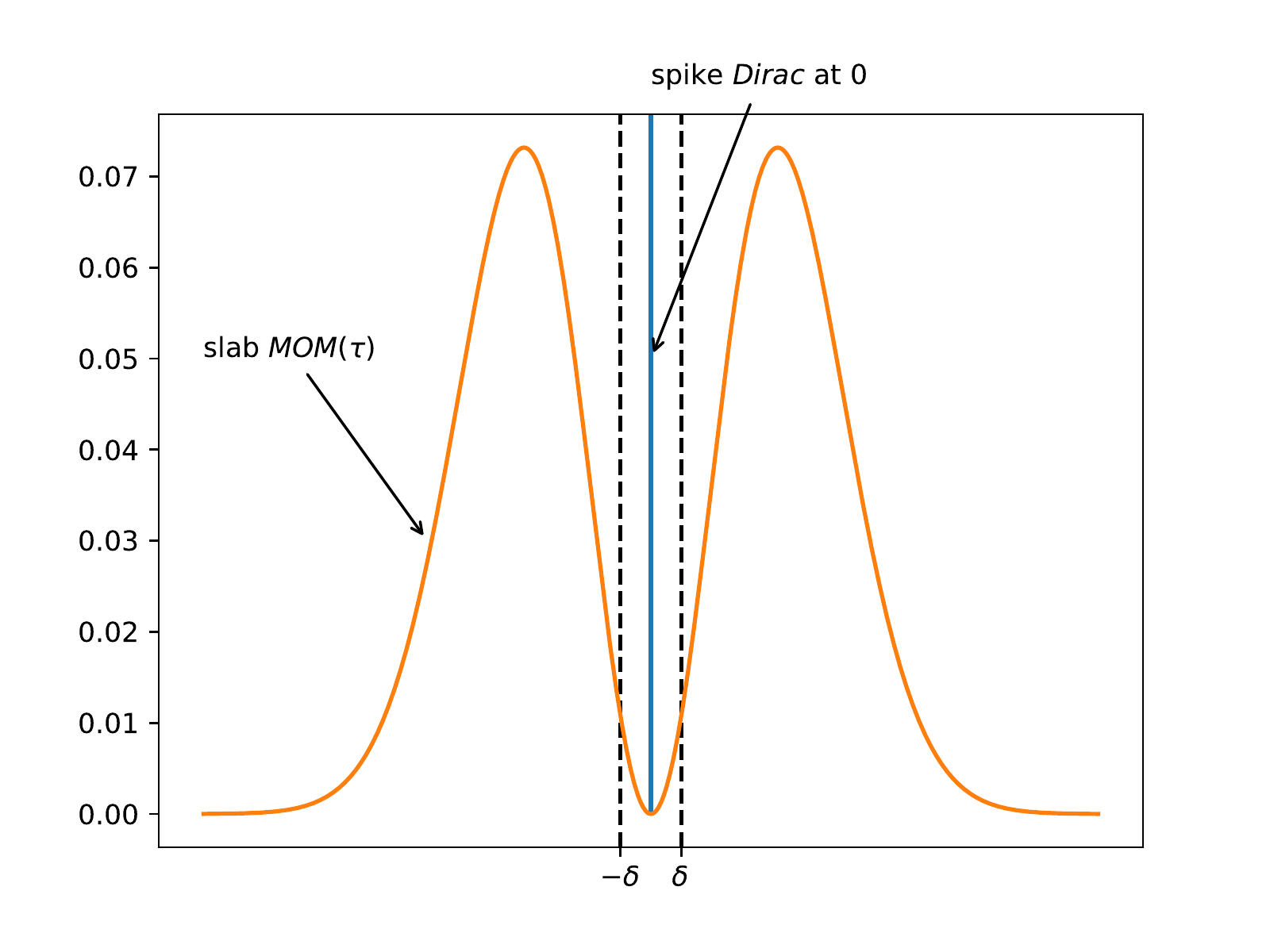}  
    \caption{Example of spike and slab prior as proposed in \citep{johnson2012bayesian}: spike distribution is a Dirac measure at point 0, and slab distribution is the moment matching prior with shape parameter $\tau$. 
    Parameter $\tau$ is specified such that the probability mass of $[-\delta, \delta]$ is 0.01 as suggested in \citep{johnson2010use}.}
  \label{fig:spikeAndSlab_MOM}
\end{figure*} 

A natural way to handle noise for variable selection are the spike-and-slab priors as proposed in \citep{chipman2001practical}.
The basic idea is to model the coefficients of the relevant and non-relevant variables by a normal distribution with variances $\sigma_1^2$ and $\sigma_0^2$, respectively, 
and $\sigma_1^2 \gg \sigma_0^2$. An example is shown in Figure \ref{fig:spikeAndSlab_continuous}. 

The variance parameters $\sigma_1^2$ and $\sigma_0^2$ must be set manually.
A difficulty of spike-and-slab priors is the correct setting of these parameters, and therefore \cite{ishwaran2005spike} proposed to place hyper-priors over these parameters in such a way 
that the resulting marginal prior $p(\boldsymbol{\beta})$ is little sensitive to the hyper-parameter choice. 
However, their prior choice does not allow for a closed-form marginal likelihood. Furthermore, their prior choice is only suitable for the situation where there is no noise, i.e. a variable $j$ is considered to be relevant if and only if the true coefficient $\beta_j$ is not zero. 

In contrast, the spike and slab priors proposed in \citep{chipman2001practical} allow to specify practical significance (what we call here ``relevance") by setting $\sigma_1^2$ to some large enough value (for example 100)  and then set $\sigma_0^2$ such that the intersection points of the two priors occur at a pre-specified value $\delta$ (and $-\delta$), see Figure \ref{fig:spikeAndSlab_continuous}. 
However, their method has some drawbacks:
\begin{itemize}
\item Their conjugate prior formulation is sensitive to the prior for the response variance, whereas their non-conjugate formulation is not sensitive to the response variance, but has no closed-form solution anymore. 
\item For any $\delta > 0$,  the Bayes factors are not consistent in the following sense.
Let $S$ be the true set of relevant variables and $S'$ any other set, then we have
\begin{align*}
\frac{p(\mathbf{y}_n | X_n, S)} {p(\mathbf{y}_n | X_n, S')} \stackrel{P}{\rightarrow} O_p(1)  \, ,
\end{align*}
where $\mathbf{y}_n := (y_1, \ldots y_n)$ and $X_n := (\mathbf{x}_1, \ldots, \mathbf{x}_n)$, are the observed responses and covariates of $n$ samples. 
This is due to the fact that the model dimension of spike-and-slab priors is the \emph{same} for model $S$ and $S'$. 
As a consequence, the influence of the prior can be ignored, in the sense that the influence of the prior is asymptotically the \emph{same} for model $S$ and $S'$. For both models, the posterior distribution of $\boldsymbol{\beta}$ will concentrate around the true regression coefficient vector, and thus, the marginal likelihood cannot be distinguished any more. A formal proof will be given in Section \ref{sec:proofAsymptoticallyCorrectBF_regression}.
\item It might be difficult to specify $\delta$ a-priori.
\end{itemize}

Another method to handle noise on regression coefficients is to use nonlocal priors, as proposed in \citep{johnson2012bayesian,rossell2017nonlocal},  for the slab which places very small probability mass on the interval $[-\delta, \delta] \setminus \{0\}$ and its density is exact zero on $\{0\}$. 
For the spike distribution they suggest to use the Dirac measure at 0. The resulting spike and slab prior is illustrated in Figure \ref{fig:spikeAndSlab_MOM}. Therefore, their proposed spike and slab priors also have disjunct support, and as such enjoy exponentially fast growing Bayes factors \citep{johnson2010use}.
However, since the spike distribution has zero mass on $[-\delta, \delta] \setminus \{0\}$, it is unsuitable for the quasi-sparse setting. We analyze the asymptotic behavior of nonlocal priors in the quasi-sparse setting in Theorem \ref{prop:proofBF_nonlocalPriors_regression}, in Section \ref{sec:proofAsymptoticallyCorrectBF_regression}, and the finite sample behavior in our experiments, in Section \ref{sec:expAnalysisBF}. 
Interestingly, the nonlocal prior can be considered as a mixture of a truncated normal distribution with threshold $\delta$ and a uniform prior for $\delta$ \citep{rossell2017nonlocal}.
The prior on $\delta$ is uniform prior on an interval around $0$, where the length of the interval is controlled by the critical parameter $\tau$.
Therefore, the difficulty of specifying $\delta$ is shifted to the problem of specifying $\tau$. 
Recently, \cite{cao2018high} proposed to place a prior on $\tau$, instead of specifying a fixed value. However, their implementation relies on a Laplace approximation which does not enjoy any theoretic guarantees.

Finally, we note that recently \cite{miller2018robust} proposed a new framework, named $c$-posteriors, which can be applied to handle slight violations from the sparsity assumption.
However, their method introduces a hyper-parameter $c$ which might be difficult to interpret. Furthermore, their approach does not allow for the calculation of Bayes factors anymore. 

\section{Proposed method} \label{sec:proposedMethod_regression}

\begin{figure*}[h]
  \centering
  \includegraphics[scale=0.5]{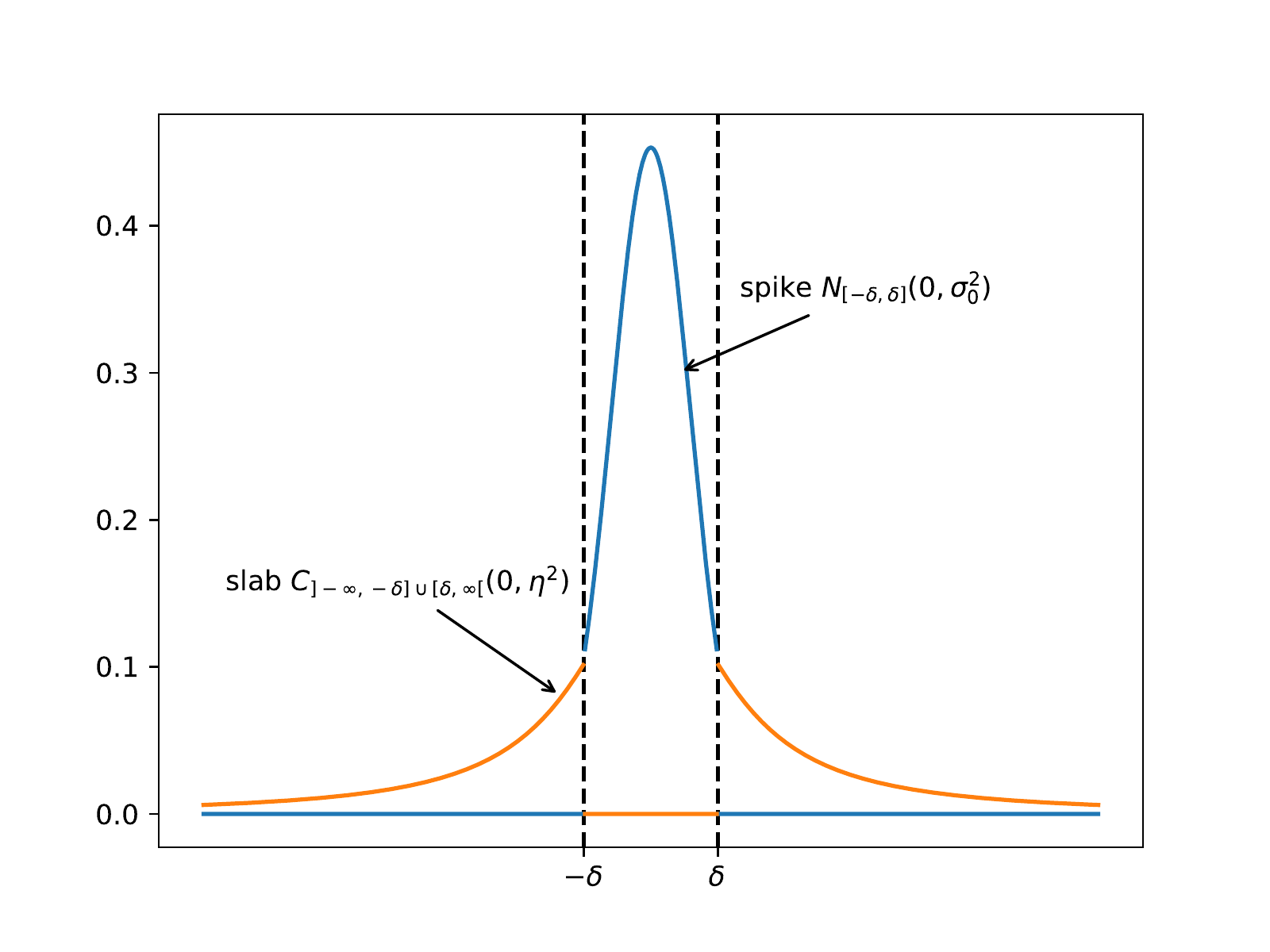}  
    \caption{Illustration of the proposed spike and slab prior. $C_{]-\infty, -\delta] \cup [\delta, \infty[}(0, \eta^2)$ denotes the Cauchy distribution with mean 0 and scale $\eta^2$. }
  \label{fig:spikeAndSlab_proposed}
\end{figure*} 

Let $S$ be the indices of the selected covariates (i.e. the covariate that are considered to be relevant), and $\mathcal{C}:= \{1,\ldots, d\} \setminus \mathcal{S}$ the set of irrelevant covariates.
Furthermore, let $s := |\mathcal{S}|$ be the number of selected covariates. 
We consider the following linear model for $y \in \mathbb{R}$ regressed on $\mathbf{x} \in \mathbb{R}^d$:
\begin{align*}
y = \mathbf{x}^T \boldsymbol{\beta} + \epsilon \, ,
\end{align*}
where 
\begin{equation*}
 \begin{aligned}
 & \left.
\begin{aligned}
& \epsilon \sim N(0, \sigma_r^2) \, , \\
& \sigma_r^2 \sim \text{Inv-}\chi^2(\nu_r, \eta_r^2) \, , 
 \end{aligned}
 \quad \qquad  \qquad \qquad  \qquad \right\} 
 \text{Prior for noise $\epsilon$} \\
 & \left.
\begin{aligned}
& s \sim \text{Multinomial}(p, \pi_{\text{rel}}) \, , \\
& \pi_{\text{rel}} \sim \text{Beta}(1,1) \, , \\
 \end{aligned}
 \quad \qquad \qquad \qquad \right\} 
 \text{Prior for number of relevant covariates $s$} \\
  & \left.
\begin{aligned}
& \sigma_1^2 \sim \text{Inv-}\chi^2(\nu_1, \eta_1^2) \, ,  \\
& \text{for $j \in \{1 , \ldots, d \}$:} \\
& \quad \text{if $j \in S$, then} \\
& \quad \quad \beta_j \sim N_{]-\infty, -\delta] \cup [\delta, \infty[} (0,  \sigma_1^2) \\
& \quad \text{else} \\
& \quad \quad \beta_j \sim N_{[-\delta, \delta]} (0,  \sigma_0^2) \, . \\
\end{aligned}
 \qquad \quad  \quad \; \right\} 
 \text{Prior for regression coefficients $\boldsymbol{\beta}$}  \\
 \end{aligned}
 \end{equation*}
 $\nu_r, \eta_r^2$ are set such that $\text{Inv-}\chi^2(\nu_r, \eta_r^2)$ is a weakly informative prior. $\text{Inv-}\chi^2$ denotes the scaled inverse chi-square distribution (see details below), where
 $\nu_r$ can be interpreted as the number of a-priori observations. 
For our experiments, we set $\nu_r$ and the prior variance $\sigma_r^2$ to 1.

$N_{[-\delta, \delta]}$  and $N_{]-\infty, -\delta] \cup [\delta, \infty[}$ denote the truncated normal distribution with support $[-\delta, \delta]$ and $]-\infty, -\delta] \cup [\delta, \infty[$ for the spike and slab prior, respectively.
The specification of $\sigma_0^2$, and $\sigma_1^2$ determines the shape of the spike and slab prior, respectively. 
For the slab prior, in order to allow for possibly large values of $\beta_j$, we place a diffuse hyper-prior on $\sigma_1^2$.
In particular, we set $\nu_1 = 1$, and $\eta_1^2 = 100$ which corresponds to a truncated Cauchy distribution with mean zero and scale $\eta_1^2$ for $p(\beta_j | j \in S, \nu_1, \eta_1^2, \delta)$.

At the boundary $\beta_j = \delta$ (and, due to symmetry $\beta_j = -\delta$) we want to be indifferent about whether $\beta_j$ was sampled from the spike or slab prior. Therefore, we set $\sigma_0^2$ such that
\begin{equation} \label{eq:priorEqualityConstraint}
p(\beta_j = \delta | j \in S, \nu_1, \eta_1^2, \delta) = p(\beta_j = \delta | j \notin S, \sigma_0^2, \delta) \,.
\end{equation}
The left hand side of Equation \eqref {eq:priorEqualityConstraint} does not have a closed-form solution. However, note that 
\begin{equation*}
p(\beta_j = \delta | j \in S, \nu_1, \eta_1^2, \delta) = \int N_{]-\infty, -\delta] \cup [\delta, \infty[} (\beta_j = \delta | 0,  \sigma_1^2) \cdot \text{Inv-}\chi^2(\sigma_1^2  | \nu_1, \eta_1^2) d \sigma_1^2 \, ,
\end{equation*}
which we solve using numerical integration.  Our proposed spike and slab prior is illustrated in Figure \ref{fig:spikeAndSlab_proposed}.

Therefore, the remaining critical hyper-parameter is only the specification of the threshold parameter $\delta$.
In Section \ref{sec:deltaSpecification_regression}, we discuss the specification of $\delta$.

Note that the prior on the number of relevant variables $s$ ensures multiplicity control and has been extensively studied in \citep{scott2010bayes,scott2006exploration}.
The probability of a variable being relevant $\pi_{\text{rel}}$ can be integrated out leading to
\begin{align*} 
p(s) = \frac{1}{d + 1} \left(
\begin{array}{c}
d \\
s
\end{array} \right)^{-1} \, .
\end{align*}

Note that the scaled inverse chi-square distribution is defined as follows (see e.g. \cite{gelman2013bayesian}): 
\begin{align*}
\text{Inv-}\chi^2(\sigma^2 | \nu, \eta^2) &= (\eta^2)^{\nu / 2} \frac{(\nu / 2)^{\nu / 2}}{\Gamma(\nu / 2)} (\sigma^2)^{-(\frac{\nu}{2} + 1)} e^{-\frac{1}{2 \sigma^2} \nu \eta^2} \, .
\end{align*}
Therefore, the joint probability density function is given by:
\begin{align*}
p(\boldsymbol{\beta}, \sigma_r^2, \sigma_1^2, \mathbf{y}, S, |  X)  
&= p(s) \cdot (2\pi)^{-\frac{n}{2}} \cdot   (\sigma_r^2)^{-\frac{n}{2}} e^{-\frac{1}{2\sigma_r^2} ||\mathbf{y} - X \boldsymbol{\beta}||_2^2}  \\
&\quad \cdot (\eta_r^2)^{\nu_r / 2} \frac{(\nu_r / 2)^{\nu_r / 2}}{\Gamma(\nu_r / 2)} \cdot (\sigma_r^2)^{-(\frac{\nu_r}{2} + 1)} e^{-\frac{1}{2 \sigma_r^2} \nu_r \eta_r^2}  \\
&\quad \cdot (\eta_1^2)^{\nu_1 / 2} \frac{(\nu_1 / 2)^{\nu_1 / 2}}{\Gamma(\nu_1 / 2)} \cdot (\sigma_1^2)^{-(\frac{\nu_1}{2} + 1)} e^{-\frac{1}{2 \sigma_1^2} \nu_1 \eta_1^2}  \\
& \quad  \cdot  \Big( \prod_{j \in \mathcal{C}}  \mathbbm{1}_{\mathcal{N}}(\beta_j) \cdot \frac{1}{\iota(\mathcal{N}, \sigma_0^2)}  \cdot e^{-\frac{1}{2 \sigma_0^2} \beta_j^2} \Big)  \\
&\quad \cdot \Big( \prod_{j \in \mathcal{S}}  \mathbbm{1}_{\mathcal{R}}(\beta_j) \cdot \frac{1}{\iota(\mathcal{R}, \sigma_1^2)}  e^{-\frac{1}{2 \sigma_1^2} \beta_j^2}    \Big)  \\
&= C_0 \cdot p(s) \cdot   (\sigma_r^2)^{-\frac{n}{2}} e^{-\frac{1}{2\sigma_r^2} ||\mathbf{y} - X \boldsymbol{\beta}||_2^2}  \\
&\quad \cdot (\sigma_r^2)^{-(\frac{\nu_r}{2} + 1)} e^{-\frac{1}{2 \sigma_r^2} \nu_r \eta_r^2}  \\
&\quad \cdot (\sigma_1^2)^{-(\frac{\nu_1}{2} + 1)} e^{-\frac{1}{2 \sigma_1^2} \nu_1 \eta_1^2}  \\
& \quad  \cdot  \Big( \prod_{j \in \mathcal{C}}  \mathbbm{1}_{\mathcal{N}}(\beta_j) \cdot \frac{1}{\iota(\mathcal{N}, \sigma_0^2)}  \cdot e^{-\frac{1}{2 \sigma_0^2} \beta_j^2} \Big)  \\
&\quad \cdot \Big( \prod_{j \in \mathcal{S}}  \mathbbm{1}_{\mathcal{R}}(\beta_j) \cdot \frac{1}{\iota(\mathcal{R}, \sigma_1^2)}  e^{-\frac{1}{2 \sigma_1^2} \beta_j^2}    \Big)  \, ,
\end{align*}
where we defined $\mathcal{N} := [-\delta, \delta]$, and $\mathcal{R} := ]-\infty, -\delta] \cup [\delta, \infty[$, and 
\begin{align*}
\iota(\mathcal{A}, \sigma^2) := \int  \mathbbm{1}_{\mathcal{A}}(x) e^{-\frac{1}{2 \sigma^2} x^2} dx
\end{align*}
and 
\begin{align*}
C_0 &:= (2\pi)^{-\frac{n}{2}}
 \cdot (\eta_r^2)^{\nu_r / 2} \frac{(\nu_r / 2)^{\nu_r / 2}}{\Gamma(\nu_r / 2)} 
  \cdot (\eta_1^2)^{\nu_1 / 2} \frac{(\nu_1 / 2)^{\nu_1 / 2}}{\Gamma(\nu_1 / 2)} \, .
\end{align*}

\section{Asymptotic Bayes factors} \label{sec:proofAsymptoticallyCorrectBF_regression}

In this section, we formally prove the asymptotic behavior of the Bayes factors between the true model and any other model,
first for our proposed method, in Theorem \ref{prop:proofBFproposed_regression}, and then for previously proposed spike and slab priors, in Theorem \ref{prop:proofBFprevious_regression} and Theorem \ref{prop:proofBF_nonlocalPriors_regression}.

In the following, we define the true set of relevant variables $S$ as 
\begin{align} \label{eq:definitionS}
S := \Big\{j \in \{1,\ldots, d\} \Big|  \; | \; \beta_{j, t} | > \delta \Big\} \, .
\end{align}
Furthermore, we denote convergence in probability by $\stackrel{P}{\rightarrow}$. 

\begin{theorem}  \label{prop:proofBFproposed_regression}
Let $S$ be the true set of relevant variables and $S'$ any other set of variables.
For the proposed method with disjunct support priors (as defined in Section \ref{sec:proposedMethod_regression}), it holds that
\begin{align*}
\frac{1}{n} \log \frac{p(\mathbf{y}_n | X_n, S)} {p(\mathbf{y}_n | X_n, S')} \stackrel{P}{\rightarrow} c  \, ,
\end{align*}
for some $c > 0$, and where $X_n := (\mathbf{x}_1, \ldots, \mathbf{x}_n)$, are $n$ samples drawn from a non-degenerated probability distribution $p(\mathbf{x})$ with finite covariance matrix,  
and $\mathbf{y}_n := (y_1, \ldots, y_n)$, where $y_i \sim p(y | \mathbf{x}_i, \sigma^2_{r,t}, \boldsymbol{\beta}_t)$, for some true parameters $\sigma^2_{r,t}$ and $\boldsymbol{\beta}_t$.
We assume that $\boldsymbol{\beta}_t$ is not on the boundary of the support of the prior $p(\boldsymbol{\beta} | S)$.
\end{theorem}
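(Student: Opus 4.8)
The plan is to reduce the Bayes factor to a comparison of two Laplace-type approximations of the marginal likelihoods $p(\mathbf{y}_n \mid X_n, S)$ and $p(\mathbf{y}_n \mid X_n, S')$, and then to identify the leading term in $\tfrac{1}{n}\log$ of the ratio. First I would write each marginal likelihood as an integral over $(\boldsymbol{\beta}, \sigma_r^2, \sigma_1^2)$ of the joint density displayed in Section~\ref{sec:proposedMethod_regression}, restricted to the support determined by the model: for model $S$, $\beta_j \in \mathcal{R}$ for $j \in S$ and $\beta_j \in \mathcal{N}$ for $j \notin S$, and analogously (with $S'$) for the competitor. The key observation is that the disjunct supports make the two integrals genuinely different optimization problems. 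Under the data-generating law, the log-likelihood $\tfrac{1}{n}\sum_i \log p(y_i \mid \mathbf{x}_i, \sigma_r^2, \boldsymbol{\beta})$ converges uniformly (on compacta, via a standard ULLN using the finite-covariance assumption on $p(\mathbf{x})$ and continuity in the parameters) to its expectation, which is maximized at the true $(\sigma_{r,t}^2, \boldsymbol{\beta}_t)$. Since $\boldsymbol{\beta}_t$ lies in the interior of the support of $p(\boldsymbol{\beta}\mid S)$ (by the non-boundary assumption and the definition of $S$ in \eqref{eq:definitionS}), the constrained maximizer for model $S$ equals the unconstrained one asymptotically, so $\tfrac{1}{n}\log p(\mathbf{y}_n \mid X_n, S) \stackrel{P}{\rightarrow} \ell^\star := \E[\log p(y\mid \mathbf{x}, \sigma_{r,t}^2, \boldsymbol{\beta}_t)]$.

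Next I would treat model $S'$. The constraint set for $S'$ \emph{excludes} $\boldsymbol{\beta}_t$: if $j \in S \setminus S'$ then $S'$ forces $\beta_j \in [-\delta,\delta]$ while $|\beta_{j,t}| > \delta$; if $j \in S' \setminus S$ then $S'$ forces $\beta_j \in \mathcal{R}$ while $|\beta_{j,t}| \le \delta$. In either case the best achievable value of $\beta_j$ under $S'$ is bounded away from $\beta_{j,t}$ (it is pinned at $\pm\delta$ or must jump a gap of size at least... well, at least it cannot reach the true value). Consequently the population log-likelihood attainable under the $S'$-constraint is strictly less than $\ell^\star$; call the constrained population maximum $\ell'_{S'} < \ell^\star$. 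A compactness argument (the optimum of the remaining coordinates is attained, the constrained set is closed, and the strict inequality $\E[\log p(y\mid\mathbf{x},\sigma^2,\boldsymbol{\beta})] < \ell^\star$ holds for every $\boldsymbol{\beta}$ with $\beta_j$ forced to the wrong side) together with the ULLN gives $\tfrac{1}{n}\log p(\mathbf{y}_n \mid X_n, S') \stackrel{P}{\rightarrow} \ell'_{S'}$. Here I need to be slightly careful that the prior factors — the bounded truncated-normal densities, the proper Inv-$\chi^2$ hyperpriors, and the $p(s)$ term — contribute only $O_p(\log n / n) \to 0$ to $\tfrac{1}{n}\log$; this is the usual Laplace/Schwarz bookkeeping, and it works precisely because all priors here are \emph{proper} and have densities bounded above (and bounded below on any compact interior neighbourhood of the relevant maximizer, which is what makes the lower bound match). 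Subtracting, $\tfrac{1}{n}\log \frac{p(\mathbf{y}_n\mid X_n,S)}{p(\mathbf{y}_n\mid X_n,S')} \stackrel{P}{\rightarrow} c := \ell^\star - \ell'_{S'} > 0$.

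The main obstacle I anticipate is making the lower-bound half rigorous: showing that $\tfrac1n\log p(\mathbf{y}_n\mid X_n,S')$ does not drop below $\ell'_{S'}$, which requires that the integral over the constrained parameter region not be dominated by a vanishing neighbourhood of the constrained maximizer. This is handled by restricting the integral to a small ball around the constrained maximizer on which the integrand is comparable to its peak value (a lower bound on the prior density on that ball, valid because the constrained maximizer for $S'$ is also in the interior of $S'$'s support — one should check this, or argue the boundary case separately), giving a matching $\tfrac1n\log$-lower bound of $\ell'_{S'} - O(\log n/n)$. A secondary subtlety is that $S'$ ranges over a fixed finite collection of alternatives, so ``$c>0$'' should be read for each fixed $S'\neq S$ (or, if a uniform statement is wanted, take the minimum of the finitely many $c$'s). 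I would also note that identifiability of $\sigma_r^2$ and the regression coefficients — guaranteed by the non-degenerate $p(\mathbf{x})$ with full-rank covariance — is what ensures the population log-likelihood has a \emph{unique} interior maximizer at $(\sigma_{r,t}^2,\boldsymbol{\beta}_t)$, so that the strict gap $\ell^\star - \ell'_{S'}$ is genuinely positive.
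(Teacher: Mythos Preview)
Your core strategy coincides with the paper's: Laplace-approximate $p(\mathbf{y}_n\mid X_n,S)$ around the unconstrained MLE (the paper invokes the Kass--Kadane--Tierney result explicitly where you sketch a ULLN/Schwarz argument), and upper-bound $p(\mathbf{y}_n\mid X_n,S')$ crudely by $p(\mathbf{y}_n\mid X_n,\hat{\boldsymbol\theta}_{S',n})$ at the constrained MLE, giving $\tfrac1n\log\mathrm{BF}\geq c_\Delta+o_p(1)$ with $c_\Delta=\ell^\star-\ell'_{S'}>0$ by identifiability (Lemma~1 in the appendix plays the role of your ``unique interior maximizer'' remark). Where you go beyond the paper is in pursuing the matching \emph{upper} bound on the Bayes factor via a lower bound on $p(\mathbf{y}_n\mid X_n,S')$; the paper's argument in fact stops at the one-sided inequality and concludes only that the log-BF diverges at a linear rate, so your treatment is actually more complete relative to the stated convergence-to-a-constant claim.

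Your flagged obstacle is real, and sharper than you allow: the constrained population maximizer under $S'$ will generically lie \emph{on} the boundary $|\beta_j|=\delta$ for every misclassified coordinate (if $j\in S\setminus S'$ the best $\beta_j$ under the constraint $|\beta_j|\le\delta$ is $\pm\delta$, and symmetrically for $j\in S'\setminus S$), so the parenthetical hope that it sits in the interior of $S'$'s support is typically false and you must take the boundary route. That route still works---restrict the integral to a one-sided box of edge $n^{-1/2}$ inside the support abutting the boundary point; the truncated-normal prior densities are finite and strictly positive at $\pm\delta$, so the prior contributes $O(\log n)$ and the likelihood drop over the box is $O(1)$---but it is not the interior-ball Laplace lower bound you first reach for.
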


Theorem \ref{prop:proofBFproposed_regression} means that the Bayes factors favoring the true model grows exponentially fast in the number of samples $n$. 

\begin{proof}
Results from hypothesis testing with disjunct support for the null and alternative hypothesis as in \citep{johnson2010use,walker1969asymptotic} can be applied here, though several assumptions must be checked.
Instead, we give here a direct proof. 

First, in order to approximate the marginal likelihood $p(\mathbf{y}_n | X_n, S)$, we use the Laplace approximation from Theorem 1 in \citep{kass1990validity}.
The likelihood function of the normal linear model is Laplace regular (see proof in \cite{kass1990validity}), which means that the conditions on the likelihood function in Theorem 1 \citep{kass1990validity} hold. 
Let us denote by $\Theta_S$ the Cartesian product of the support of the priors $p(\boldsymbol{\beta} | S)$ and $p(\sigma_r^2)$ (for technical reasons we may exclude the points at $\delta$ and $-\delta$ to make $\Theta_S$ an open subset of $\mathbb{R}^{d+1}$). Since the densities of the Cauchy distribution, the normal distribution, and the scaled inverse chi-square distribution, are four times continuously differentiable, we have that the priors $p(\boldsymbol{\beta} | S)$ and $p(\sigma_r^2)$ are four times continuously differentiable on its support. 

Let $\hat{\boldsymbol{\theta}}_n$ be the maximum likelihood estimate (MLE) for $\log p(\mathbf{y}_n | X_n, \boldsymbol{\theta})$. 
Note that by the consistency of the MLE, we have that $\hat{\boldsymbol{\theta}}_n \stackrel{p}{\rightarrow}  \boldsymbol{\theta}_t$ (see for example Theorem 4.17. in \cite{MathematicalStatisticsJunShao2003}),
therefore for any open ball around $\boldsymbol{\theta}_t$, denoted by $\mathcal{B}(\boldsymbol{\theta}_t)$, we have 
$P(\hat{\boldsymbol{\theta}}_n \in \mathcal{B}(\boldsymbol{\theta}_t)) \rightarrow 1$, and therefore $P(\hat{\boldsymbol{\theta}}_n \in \Theta_S) \rightarrow 1$.

Therefore, all conditions of Theorem 1 in \citep{kass1990validity} are met.
Let us define $p(\boldsymbol{\theta} | S) := p(\boldsymbol{\beta} | S) \cdot p(\sigma_r^2)$, and $h(\boldsymbol{\theta}) := - \frac{1}{n} \log p(\mathbf{y}_n | X_n, \boldsymbol{\theta})$.
Next, applying Theorem 7 and 1 from \citep{kass1990validity}, we have almost surely that \footnote{We use here the notation $\text{det}(X)$ for the determinant of a matrix $X$ in order to avoid confusion with the absolute value function.}
\begin{align*} 
p(\mathbf{y}_n | X_n, S) 
& = \int_{\Theta_S} p(\mathbf{y}_n | X_n, \boldsymbol{\theta}) p(\boldsymbol{\theta} | S) d \boldsymbol{\theta}  \\
& = (2 \pi)^{d+1} \cdot \Big[ \text{det}(n \cdot \frac{\partial^2}{\partial^2 \boldsymbol{\theta}} h(\hat{\boldsymbol{\theta}}_n) ) \Big]^{-\frac{1}{2}} \cdot p(\mathbf{y}_n | X_n,\hat{\boldsymbol{\theta}}_n) 
\cdot ( p(\hat{\boldsymbol{\theta}}_n | S) + O(n^{-1})) \, .
\end{align*}
Furthermore, we have that  
\begin{align*} 
\frac{\partial^2}{\partial^2 \boldsymbol{\theta}} h(\hat{\boldsymbol{\theta}}_n) 
&= - \frac{1}{n} \sum_{i=1}^n  \frac{\partial^2}{\partial^2 \boldsymbol{\theta}} \log p(y_i | \mathbf{x}_i, \hat{\boldsymbol{\theta}}_n)  \\
&\stackrel{a.s.}{\rightarrow}   - \E_{\mathbf{x}, y} \Big[ \frac{\partial^2}{\partial^2  \boldsymbol{\theta}} \log p(y | \mathbf{x}, \hat{\boldsymbol{\theta}}_n) \Big] \, .
\end{align*}
Since $\boldsymbol{\theta} \mapsto \E_{\mathbf{x}, y} \Big[ \frac{\partial^2}{\partial^2  \boldsymbol{\theta}} \log p(y | \mathbf{x}, \boldsymbol{\theta}) \Big]$ is a continuous function, 
and $\hat{\boldsymbol{\theta}}_n \stackrel{p}{\rightarrow}  \boldsymbol{\theta}_t$, we have by the continuous mapping theorem that 
\begin{align*} 
\E_{\mathbf{x}, y} \Big[ \frac{\partial^2}{\partial^2  \boldsymbol{\theta}} \log p(y | \mathbf{x}, \hat{\boldsymbol{\theta}}_n) \Big] \stackrel{p}{\rightarrow}   \E_{\mathbf{x}, y} \Big[ \frac{\partial^2}{\partial^2  \boldsymbol{\theta}} \log p(y | \mathbf{x}, \boldsymbol{\theta}_t) \Big] \, ,
\end{align*}
and since the matrix $- \E_{\mathbf{x}, y} \Big[ \frac{\partial^2}{\partial^2  \boldsymbol{\theta}} \log p(y | \mathbf{x}, \boldsymbol{\theta}_t ) \Big]$ is positive definite with every entry in $O(1)$, 
we have that 
$\log \text{det}(\frac{\partial^2}{\partial^2 \boldsymbol{\theta}} h(\hat{\boldsymbol{\theta}}_n) )  \in O_p(1)$. 
In summary, we have
\begin{align}
\log p(\mathbf{y}_n | X_n, S) 
& = (d + 1) \log (2 \pi) - \frac{d+1}{2} \log n \nonumber \\
& \quad - \frac{1}{2} \log \text{det}(\frac{\partial^2}{\partial^2 \boldsymbol{\theta}} h(\hat{\boldsymbol{\theta}}_n) ) + \log p(\mathbf{y}_n | X_n,\hat{\boldsymbol{\theta}}_n) 
+ \log ( p(\hat{\boldsymbol{\theta}}_n | S) + O(n^{-1})) \nonumber \\
& =  - \frac{d+1}{2} \log n + \log p(\mathbf{y}_n | X_n,\hat{\boldsymbol{\theta}}_n) + O_p(1) \, . \label{eq:nominatorApprox}
\end{align}

\paragraph{Upper bound for $p(\mathbf{y}_n | X_n, S')$}
First, note that the true parameter $\boldsymbol{\beta}_t$ is not not contained in the support of the prior $p(\boldsymbol{\beta} | S')$, since $S' \neq S$. Therefore, the regularity conditions for the Laplace approximation are not fulfilled. However, we can easily derive an upper bound as follows.
Let us define 
\begin{align*}
\hat{\boldsymbol{\theta}}_{S', n} := \argmax_{\boldsymbol{\theta} : p(\boldsymbol{\theta} | S') > 0}  p(\mathbf{y}_n | X_n, \boldsymbol{\theta}) \, 
\end{align*}
then we have
\begin{align} \label{eq:denominatorBound}
p(\mathbf{y}_n | X_n, S') = \int p(\mathbf{y}_n | X_n, \boldsymbol{\theta}) p(\boldsymbol{\theta} | S') d \boldsymbol{\theta} \leq  p(\mathbf{y}_n | X_n, \hat{\boldsymbol{\theta}}_{S', n})  \, .
\end{align}

\paragraph{Lower bound on $\log \frac{p(\mathbf{y}_n | X_n, S)}{p(\mathbf{y}_n | X_n, S')}$}

Putting together the results from Equations \eqref{eq:nominatorApprox} and \eqref{eq:denominatorBound}, we get
\begin{align*}
\log \frac{p(\mathbf{y}_n | X_n, S)}{p(\mathbf{y}_n | X_n, S')} 
&\geq \log p(\mathbf{y}_n | X_n, \hat{\boldsymbol{\theta}}_{n}) - \frac{d+1}{2} \log n + O_p(1)  - \log p(\mathbf{y}_n | X_n, \hat{\boldsymbol{\theta}}_{S', n}) \\
&= n \Big( \frac{1}{n} \sum_{i=1}^n \log p(y_i | \mathbf{x}_i, \hat{\boldsymbol{\theta}}_{n}) - \frac{1}{n} \sum_{i=1}^n  \log p(y_i | \mathbf{x}_i, \hat{\boldsymbol{\theta}}_{S', n}) \Big) - \frac{d + 1}{2} \log n + O_p(1) \\
&\stackrel{P}{\rightarrow} n \Big( \E_{\mathbf{x}} \Big[ g(\boldsymbol{\theta}_t) \Big]   -  \E_{\mathbf{x}} \Big[ g(\boldsymbol{\theta}_{S'}) \Big]   \Big) - \frac{d + 1}{2} \log n + O_p(1)  \, ,
\end{align*}
where 
\begin{align*}
g(\boldsymbol{\theta}) := \E_{y \sim p(y | \boldsymbol{\theta}, \mathbf{x})} \Big[  \log p(y | \boldsymbol{\theta}, \mathbf{x}) \Big] \, ,
\end{align*}
and $\boldsymbol{\theta}_{S'} := \argmax_{\boldsymbol{\theta} : p(\boldsymbol{\theta} | S') > 0}   \E_{\mathbf{x}} \Big[ g(\boldsymbol{\theta}) \Big]$.
Since $\boldsymbol{\theta}_t$ is the unique global maximum of $\E_{\mathbf{x}} \Big[ g(\boldsymbol{\theta}) \Big]$ (see Lemma 1 in Appendix A) and $p(\boldsymbol{\theta}_t | S') = 0$, we have that 
\begin{align*}
c_{\Delta} := \E_{\mathbf{x}} \Big[ g(\boldsymbol{\theta}_t) \Big]   -  \E_{\mathbf{x}} \Big[ g(\boldsymbol{\theta}_{S'}) \Big]  > 0
 \end{align*}
 and therefore
 \begin{align*}
\log \frac{p(\mathbf{y}_n | X_n, S)}{p(\mathbf{y}_n | X_n, S')} \geq n \cdot c_{\Delta} - \frac{d + 1}{2} \log n + O_p(1)  \stackrel{P}{\rightarrow} \infty  \,. 
\end{align*}
From the above line, we also see that the convergence of the Bayes factor $\frac{p(\mathbf{y}_n | X_n, S)}{p(\mathbf{y}_n | X_n, S')}$ is exponential in $n$. 
\end{proof}

Next, let us investigate the Bayes factors for full support spike and slab priors, as for example in \citep{chipman2001practical,ishwaran2005spike}.

\begin{theorem} \label{prop:proofBFprevious_regression}
Under the same assumptions as in Theorem \ref{prop:proofBFproposed_regression}, but assuming full support spike and slab priors for the evaluation of the marginal likelihoods $p(\mathbf{y}_n | X_n, S)$ and 
$p(\mathbf{y}_n | X_n, S')$, we have the following result:
\begin{align*}
\frac{p(\mathbf{y}_n | X_n, S)} {p(\mathbf{y}_n | X_n, S')} \stackrel{P}{\rightarrow} O_p(1)  \, .
\end{align*}
\end{theorem}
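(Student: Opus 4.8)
The plan is to mirror the proof of Theorem~\ref{prop:proofBFproposed_regression}, but to exploit the one feature that distinguishes full support priors: they assign the \emph{same} parameter space to both competing models. Concretely, under full support spike and slab priors all $d$ coefficients are present in both model $S$ and model $S'$, so $\boldsymbol{\theta} = (\boldsymbol{\beta}, \sigma_r^2)$ ranges over the same open set $\Theta = \mathbb{R}^d \times ]0,\infty[$ in either case, the likelihood $p(\mathbf{y}_n | X_n, \boldsymbol{\theta})$ is the same function, and the two models differ only through the prior density, which is continuous and \emph{strictly positive everywhere}, in particular at the true parameter $\boldsymbol{\theta}_t$. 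In particular the MLE $\hat{\boldsymbol{\theta}}_n$ (defined exactly as in the proof of Theorem~\ref{prop:proofBFproposed_regression}), the fitted likelihood $p(\mathbf{y}_n | X_n, \hat{\boldsymbol{\theta}}_n)$, and the function $h(\boldsymbol{\theta}) := -\tfrac{1}{n}\log p(\mathbf{y}_n | X_n, \boldsymbol{\theta})$ together with its Hessian are literally the same objects for the two marginal likelihoods.

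Next I would observe that, unlike in Theorem~\ref{prop:proofBFproposed_regression} where $\boldsymbol{\theta}_t$ was outside the support of $p(\boldsymbol{\theta}|S')$, here the regularity conditions for the Laplace approximation of \citep{kass1990validity} hold for \emph{both} $p(\mathbf{y}_n|X_n,S)$ and $p(\mathbf{y}_n|X_n,S')$: the densities of the normal and scaled inverse chi-square distributions are four times continuously differentiable on $\Theta$, and by consistency of the MLE, $P(\hat{\boldsymbol{\theta}}_n \in \Theta) \rightarrow 1$. Applying Theorems~1 and 7 of \citep{kass1990validity} exactly as before gives, almost surely,
\begin{align*}
\log p(\mathbf{y}_n | X_n, S) &= (d+1)\log(2\pi) - \frac{d+1}{2}\log n - \frac{1}{2}\log\text{det}\Big(\frac{\partial^2}{\partial^2\boldsymbol{\theta}} h(\hat{\boldsymbol{\theta}}_n)\Big) \\
&\quad + \log p(\mathbf{y}_n | X_n, \hat{\boldsymbol{\theta}}_n) + \log\big(p(\hat{\boldsymbol{\theta}}_n | S) + O(n^{-1})\big) \, ,
\end{align*}
and the identical expansion for $S'$ with $p(\hat{\boldsymbol{\theta}}_n | S')$ replacing $p(\hat{\boldsymbol{\theta}}_n | S)$ and every other term unchanged.

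Subtracting the two expansions, the terms $(d+1)\log(2\pi)$, $\tfrac{d+1}{2}\log n$, the Hessian determinant (which is $O_p(1)$ by the continuous mapping argument already used in the proof of Theorem~\ref{prop:proofBFproposed_regression}), and $\log p(\mathbf{y}_n|X_n,\hat{\boldsymbol{\theta}}_n)$ all cancel, since they depend only on the common likelihood, leaving
\begin{align*}
\log\frac{p(\mathbf{y}_n | X_n, S)}{p(\mathbf{y}_n | X_n, S')} = \log\big(p(\hat{\boldsymbol{\theta}}_n | S) + O(n^{-1})\big) - \log\big(p(\hat{\boldsymbol{\theta}}_n | S') + O(n^{-1})\big) \, .
\end{align*}
Since $\hat{\boldsymbol{\theta}}_n \stackrel{P}{\rightarrow} \boldsymbol{\theta}_t$ and both priors are continuous and strictly positive at $\boldsymbol{\theta}_t$, the continuous mapping theorem yields $p(\hat{\boldsymbol{\theta}}_n | S) \stackrel{P}{\rightarrow} p(\boldsymbol{\theta}_t | S) > 0$ and $p(\hat{\boldsymbol{\theta}}_n | S') \stackrel{P}{\rightarrow} p(\boldsymbol{\theta}_t | S') > 0$, so the right-hand side converges in probability to the finite constant $\log\frac{p(\boldsymbol{\theta}_t | S)}{p(\boldsymbol{\theta}_t | S')}$; in particular the Bayes factor is $O_p(1)$, which is the claim.

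The main obstacle is the bookkeeping behind the cancellation: one must argue that the $O(n^{-1})$ remainders and the Hessian contribution appearing in the two Laplace expansions are the \emph{same} random quantities and not merely of the same order, so that they genuinely drop out of the difference. This is legitimate precisely because the likelihood, the MLE, and hence $h$ and its Hessian are identical for the two marginal likelihoods — in the Kass--Vaidyanathan expansion the prior enters only through the single factor $p(\hat{\boldsymbol{\theta}}_n|\cdot) + O(n^{-1})$ — so the subtraction is exact. A minor point worth stating is that when $\sigma_0^2$ and $\sigma_1^2$ carry hyper-priors (as in \citep{ishwaran2005spike}), the resulting marginal prior on $\boldsymbol{\beta}$ is still continuous and strictly positive on $\mathbb{R}^d$, so positivity at $\boldsymbol{\theta}_t$ — the only property of the priors actually used — is preserved.
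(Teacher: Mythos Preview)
Your proof is correct and follows the same Laplace-approximation route as the paper --- apply the Kass expansion to both marginal likelihoods, note that the likelihood, MLE, dimension, and Hessian are common to $S$ and $S'$, and conclude that only the bounded prior ratio survives --- though you are considerably more explicit about the cancellation than the paper's three-line argument. One small quibble with your final paragraph: the $O(n^{-1})$ remainders in the Kass expansion do depend on the prior (through its higher derivatives), so they are not literally the \emph{same} random quantity for $S$ and $S'$; but this does not matter, since $\log\bigl(p(\hat{\boldsymbol{\theta}}_n\mid S)+O(n^{-1})\bigr) - \log\bigl(p(\hat{\boldsymbol{\theta}}_n\mid S')+O(n^{-1})\bigr)$ still converges in probability to the finite constant $\log\frac{p(\boldsymbol{\theta}_t\mid S)}{p(\boldsymbol{\theta}_t\mid S')}$ regardless, and your conclusion stands.
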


\begin{proof}
 Since the priors have full support, the posterior distribution also has full support. 
 Both posterior distributions contain the true regression coefficient vector $\boldsymbol{\beta}_t$, i.e. 
\begin{align*}
 p(\boldsymbol{\beta}_t | \mathbf{y}_n, X_n, S') > 0, \forall S' 
 \end{align*}
 Furthermore, since the likelihood function is the same as before in Theorem \ref{prop:proofBFproposed_regression}, we have, that 
 the regularity conditions for the Laplace approximation are fulfilled for \emph{all} models $S'$, and we have:
 \begin{align*}
 \log p(\mathbf{y}_n | X_n, S') 
 &\stackrel{P}{\rightarrow} \log p(\mathbf{y}_n | X_n, \hat{\boldsymbol{\theta}}_{S', n}) - \frac{d  + 1}{2} \log n + O_p(1)  \\
 &\stackrel{P}{\rightarrow} \log p(\mathbf{y}_n | X_n, \boldsymbol{\theta}_t) - \frac{d + 1}{2} \log n + O_p(1)  \, .
  \end{align*}
And therefore 
  \begin{align*}
 \log \frac{p(\mathbf{y}_n | X_n, S)}{p(\mathbf{y}_n | X_n, S')} \stackrel{P}{\rightarrow} O_p(1)   \, .
  \end{align*}
\end{proof}


The next theorem emphasizes that the disjunct support priors as in \citep{johnson2010use,johnson2012bayesian,rossell2017nonlocal}
are unsuitable for the quasi-sparse setting. For simplicity, we focus here on the product moment matching priors \citep{johnson2012bayesian}, but similar results hold for other disjunct support priors with difference in support only at $\{0\}$.

\begin{theorem} \label{prop:proofBF_nonlocalPriors_regression}
Let us assume that the true regression coefficient vector is quasi-sparse, i.e.  $\beta_{j,t} \neq 0$, for all $j \in \{1,\ldots, d\}$, and $|S| < d$, where $S$ is set of true relevant variables, as defined in Equation \eqref{eq:definitionS},
and let $A$ be the model with all variables, i.e. $A := \{1, \ldots, d\}$. 
Assume the product moment matching priors as proposed in \citep{johnson2012bayesian}, we have
\begin{align*}
\frac{p(\mathbf{y}_n | X_n, S)} {p(\mathbf{y}_n | X_n, A)} \stackrel{P}{\rightarrow} 0  \, .
\end{align*}
\end{theorem}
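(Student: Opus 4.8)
The plan is to re-run the argument in the proof of Theorem~\ref{prop:proofBFproposed_regression}, but with the roles of the two models exchanged: the full model $A$ now plays the part of the correctly specified model whose prior support contains the truth, and the sparse model $S$ plays the part of the misspecified model $S'$. Throughout I would write $\boldsymbol{\theta}_t := (\boldsymbol{\beta}_t, \sigma^2_{r,t})$, and $g$, $\boldsymbol{\theta}_{S'}$ would be exactly the objects introduced in the proof of Theorem~\ref{prop:proofBFproposed_regression}.

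First I would record the two structural facts that make the argument go through. Under the product moment matching priors the spike is a Dirac measure at $0$, so the support of $p(\boldsymbol{\beta} | S)$ is contained in the affine subspace $\{\boldsymbol{\beta} : \beta_j = 0 \text{ for all } j \notin S\}$; since $|S| < d$ there is at least one $j \notin S$, and by quasi-sparseness $\beta_{j,t} \neq 0$, so $\boldsymbol{\theta}_t$ sits at positive distance from this closed subspace — in particular $p(\boldsymbol{\theta}_t | S) = 0$, the situation of the misspecified model in Theorem~\ref{prop:proofBFproposed_regression}. On the other hand, the pMOM slab density is a polynomial times a Gaussian, hence $C^\infty$ on $\mathbb{R}^d$ and strictly positive off the coordinate hyperplanes $\{\beta_j = 0\}$; since every coordinate of $\boldsymbol{\beta}_t$ is nonzero, $p(\boldsymbol{\theta}_t | A) > 0$ and $p(\boldsymbol{\theta} | A)$ is four times continuously differentiable near $\boldsymbol{\theta}_t$. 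Combined with Laplace regularity of the normal linear likelihood and consistency of the MLE, this is precisely what the proof of Theorem~\ref{prop:proofBFproposed_regression} needs in order to apply the Laplace approximation of \citep{kass1990validity} to model $A$.

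I would then bound the two marginal likelihoods. For model $A$, the computation leading to Equation~\eqref{eq:nominatorApprox} would give
\begin{align*}
\log p(\mathbf{y}_n | X_n, A) = -\tfrac{d+1}{2}\log n + \log p(\mathbf{y}_n | X_n, \hat{\boldsymbol{\theta}}_n) + O_p(1) \, ,
\end{align*}
while for model $S$ the trivial bound of Equation~\eqref{eq:denominatorBound} gives $p(\mathbf{y}_n | X_n, S) \leq p(\mathbf{y}_n | X_n, \hat{\boldsymbol{\theta}}_{S,n})$, where $\hat{\boldsymbol{\theta}}_{S,n} := \argmax_{\boldsymbol{\theta} : p(\boldsymbol{\theta} | S) > 0} p(\mathbf{y}_n | X_n, \boldsymbol{\theta})$. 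Subtracting, and applying the uniform law of large numbers and the continuous mapping theorem exactly as in the proof of Theorem~\ref{prop:proofBFproposed_regression}, I would obtain
\begin{align*}
\frac{1}{n}\log \frac{p(\mathbf{y}_n | X_n, S)}{p(\mathbf{y}_n | X_n, A)}
&\leq \frac{1}{n} \log p(\mathbf{y}_n | X_n, \hat{\boldsymbol{\theta}}_{S,n}) - \frac{1}{n} \log p(\mathbf{y}_n | X_n, \hat{\boldsymbol{\theta}}_n) + O\!\left( \tfrac{\log n}{n} \right) \\
&\stackrel{P}{\rightarrow} \E_{\mathbf{x}}\big[ g(\boldsymbol{\theta}_S) \big] - \E_{\mathbf{x}}\big[ g(\boldsymbol{\theta}_t) \big] \, ,
\end{align*}
with $\boldsymbol{\theta}_S := \argmax_{\boldsymbol{\theta} : p(\boldsymbol{\theta} | S) > 0} \E_{\mathbf{x}}[ g(\boldsymbol{\theta}) ]$.

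To finish, I would invoke Lemma~1 in Appendix~A: $\boldsymbol{\theta} \mapsto \E_{\mathbf{x}}[g(\boldsymbol{\theta})]$ has $\boldsymbol{\theta}_t$ as its unique global maximiser, and $\boldsymbol{\theta}_S$ lies in the closure of $\{p(\boldsymbol{\theta} | S) > 0\}$, which does not contain $\boldsymbol{\theta}_t$, so $c'_{\Delta} := \E_{\mathbf{x}}[g(\boldsymbol{\theta}_t)] - \E_{\mathbf{x}}[g(\boldsymbol{\theta}_S)] > 0$ and
\begin{align*}
\log \frac{p(\mathbf{y}_n | X_n, S)}{p(\mathbf{y}_n | X_n, A)} \leq -n\, c'_{\Delta} + \tfrac{d+1}{2}\log n + O_p(1) \stackrel{P}{\rightarrow} -\infty \, ,
\end{align*}
which is the claim (and in fact shows the Bayes factor vanishes exponentially fast). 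The part I expect to take the most care is the strict inequality $c'_{\Delta} > 0$: one has to check that the constrained supremum $\E_{\mathbf{x}}[g(\boldsymbol{\theta}_S)]$ is genuinely attained and strictly below the unconstrained maximum — this uses coercivity of the expected Gaussian log-likelihood in $(\boldsymbol{\beta}, \sigma_r^2)$ (so $\boldsymbol{\theta}_S$ exists in the closed constraint set) together with the fact that the constraint set stays at positive distance from $\boldsymbol{\theta}_t$, so that $\boldsymbol{\theta}_S \neq \boldsymbol{\theta}_t$ and uniqueness of the maximiser bites. A secondary wrinkle worth spelling out is that models $S$ and $A$ carry different parameter dimensions ($|S|+1$ versus $d+1$), hence different Occam penalties; but these are only $O(\log n)$ and are dominated by the $O(n)$ gap $n c'_{\Delta}$.
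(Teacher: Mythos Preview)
Your argument is correct, but it is substantially more elaborate than what the paper does. The paper's own proof is a single sentence: it observes that under quasi-sparseness every $\beta_{j,t}$ is nonzero, so the full model $A = \{j : \beta_{j,t} \neq 0\}$ is the ``true'' model in the sense required by Theorem~1 of \citet{johnson2012bayesian}, and their consistency result for product moment matching priors immediately gives $p(\mathbf{y}_n | X_n, S)/p(\mathbf{y}_n | X_n, A) \stackrel{P}{\rightarrow} 0$ for any $S \neq A$. You instead re-derive this from first principles by transplanting the Laplace-approximation machinery of Theorem~\ref{prop:proofBFproposed_regression}, with $A$ in the role of the well-specified model and $S$ in the role of the misspecified one. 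What your route buys is self-containment within the paper's own framework and an explicit exponential rate $\exp(-n c'_\Delta + O(\log n))$, whereas the paper's route is a one-line appeal to an external theorem. Your handling of the two technical wrinkles --- the smoothness and positivity of the pMOM density at $\boldsymbol{\beta}_t$ (needed for Kass--Wasserman), and the fact that the dimension mismatch between $S$ and $A$ only shifts things by $O(\log n)$ --- is appropriate and correctly identified as the places where care is needed.
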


\begin{proof}
The proof is an immediate consequence of the consistency property of the product moment matching priors as proven in Theorem 1 in \citep{johnson2012bayesian},
since $A$ is the true model in the sense that $A = \{j  \, | \, \beta_{t,j} \neq 0 \, , j \in \{1,\ldots, d\}  \}$.
\end{proof}

\section{Estimation of model probabilities} \label{sec:proposedMethod_estimation_regression}

Calculating the marginal likelihood for each model explicitly is computationally challenging, due to the disjunct support priors on $\beta$: 
\begin{itemize}
\item A Laplace approximation is not valid anymore, since the true parameter might not be contained in the support of the prior distribution.
\item Chib's method \citep{chib1995marginal,chib2001marginal} is computationally very expensive since, though we can sample, the normalization constants of each conditional probability is not available.
\end{itemize}

Instead, we estimate $p(S | \mathbf{y}, X)$, by introducing a model indicator vector $\mathbf{z} \in \{0,1\}^d$, where $z_j$ indicates whether variable $j$ should be included in $S$ or not.
We sample $M$ samples from the posterior distribution of $\mathbf{z}$ using Algorithm \ref{alg:propGibsForZ}.

\begin{algorithm}[h]
\caption{Gibbs sampler for sampling from $p(\boldsymbol{z} | \sigma_0, \mathbf{y}, X)$. \label{alg:propGibsForZ}}
\begin{algorithmic}
\FOR {$t$ from 1 to $M$}
\FOR  {$j$ from $1$ to $d$}
	\STATE $p(z_j)$  := sample from $p(z_j | \boldsymbol{\beta}_{-j}, \mathbf{z}_{-j}, \sigma_r,  \sigma_1, \sigma_0, \mathbf{y},  X)$
	\STATE $p(\beta_j)$ := sample from $p(\beta_j | \boldsymbol{\beta}_{-j}, \mathbf{z}, \sigma_r , \sigma_1, \sigma_0, \mathbf{y},  X)$
\ENDFOR
\STATE $\sigma_r^2$ := sample from $p(\sigma_r^2 | \boldsymbol{\beta},  \mathbf{z}, \mathbf{y}, X)$
\STATE $\sigma_1^2$ := sample from $p(\sigma_1^2 | \boldsymbol{\beta}, \mathbf{z}, \mathbf{y}, X)$
\ENDFOR
\end{algorithmic}
\end{algorithm}
Sampling from each of the conditional distributions in Algorithm \ref{alg:propGibsForZ} is explained in the following.
We note that all of the conditional distributions, except $p(\sigma_1^2 | \boldsymbol{\beta}, \sigma_r^2, \mathbf{z}, \mathbf{y},  X)$, have an analytic solution that can be expressed by 
standard distributions. Therefore, we find that even for high-dimensional spaces, using Algorithm \ref{alg:propGibsForZ} is computationally feasible.  

\subsection{Analytic solution for $p(z_j | \boldsymbol{\beta}_{-j}, \mathbf{z}_{-j}, \sigma_r,  \sigma_1, \sigma_0, \mathbf{y},  X)$}

Let $\mathbf{x}_j$ denote the $j$-th column of $X$, and $X_{-j}$ the matrix X where column $j$ is removed. Then we have 
\begin{align*}
||\mathbf{y} - X \boldsymbol{\beta}||_2^2 = ||\mathbf{y} - (\mathbf{x}_j \beta_j + X_{-j} \boldsymbol{\beta}_{-j})||_2^2  = ||\tilde{\mathbf{y}} - \mathbf{x}_j \beta_j ||_2^2 \, ,
\end{align*}
with $\tilde{\mathbf{y}} := \mathbf{y} - X_{-j} \boldsymbol{\beta}_{-j}$.

\begin{align*}
p(z_j | \boldsymbol{\beta}_{-j}, \mathbf{z}_{-j}, \sigma_r,  \sigma_1, \sigma_0, \mathbf{y},  X)
&\propto \int p(\boldsymbol{\beta}, \mathbf{z}, \sigma_r, \sigma_1, \mathbf{y} |  X, \sigma_0)  d \beta_j \\
&= \int p(\mathbf{z}) \cdot C_0 \cdot   (\sigma_r^2)^{-\frac{n}{2}} e^{-\frac{1}{2\sigma_r^2} ||\mathbf{y} - X \boldsymbol{\beta}||_2^2}  \\
&\quad \cdot (\sigma_r^2)^{-(\frac{\nu_r}{2} + 1)} e^{-\frac{1}{2 \sigma_r^2} \nu_r \eta_r^2}  \\
& \quad  \cdot  \Big( \prod_{j \in \mathcal{C}}  \mathbbm{1}_{\mathcal{N}}(\beta_j) \cdot \frac{1}{\iota(\mathcal{N}, \sigma_0^2)}  \cdot e^{-\frac{1}{2 \sigma_0^2} \beta_j^2} \Big)  \\
&\quad \cdot \Big( \prod_{j \in \mathcal{S}}  \mathbbm{1}_{\mathcal{R}}(\beta_j) \cdot \frac{1}{\iota(\mathcal{R}, \sigma_1^2)}  e^{-\frac{1}{2 \sigma_1^2} \beta_j^2}    \Big) d \beta_j  \\
&\propto \frac{p(\mathbf{z})}{\iota(\mathcal{A}_{z_j}, \sigma_{z_j}^2)}  \cdot \int e^{-\frac{1}{2\sigma_r^2} ||\mathbf{y} - X \boldsymbol{\beta}||_2^2} \cdot  \mathbbm{1}_{\mathcal{A}_{z_j}}(\beta_j)  \cdot e^{-\frac{1}{2 \sigma_{z_j}^2} \beta_j^2}  d \beta_j \\
&=  \frac{p(\mathbf{z})}{\iota(\mathcal{A}_{z_j}, \sigma_{z_j}^2)}  \cdot \int e^{-\frac{1}{2\sigma_r^2} ( ||\tilde{\mathbf{y}}||^2_2 - 2 \tilde{\mathbf{y}}^T \mathbf{x}_j \beta_j  + || \mathbf{x}_j ||_2^2 \beta_j^2)} 
\cdot \mathbbm{1}_{\mathcal{A}_{z_j}}(\beta_j) \cdot e^{-\frac{1}{2 \sigma_{z_j}^2} \beta_j^2} d \beta_j \\
&\propto  \frac{p(\mathbf{z})}{\iota(\mathcal{A}_{z_j}, \sigma_{z_j}^2)}  \cdot \int e^{-\frac{1}{2\sigma_r^2} (- 2 \tilde{\mathbf{y}}^T \mathbf{x}_j \beta_j  + (|| \mathbf{x}_j ||_2^2  + \frac{\sigma_r^2}{\sigma_{z_j}^2}) \beta_j^2)} 
\cdot \mathbbm{1}_{\mathcal{A}_{z_j}}(\beta_j) d \beta_j \\
 &=  \frac{p(\mathbf{z})}{\iota(\mathcal{A}_{z_j}, \sigma_{z_j}^2)}  \cdot \int e^{-\frac{1}{2 \tilde{\sigma}^2} (\beta_j  - \tilde{\mu})^2} e^{ \frac{\tilde{\mu}}{2 \sigma_r^2} \tilde{\mathbf{y}}^T \mathbf{x}_j  }
\cdot \mathbbm{1}_{\mathcal{A}_{z_j}}(\beta_j) d \beta_j \\
 &= p(\mathbf{z}) \cdot e^{ \frac{\tilde{\mu}}{2 \sigma_r^2} \tilde{\mathbf{y}}^T \mathbf{x}_j  } \cdot  \frac{\iota(\mathcal{A}_{z_j},  \tilde{\mu}, \tilde{\sigma}^2)}{\iota(\mathcal{A}_{z_j}, \sigma_{z_j}^2)}  \, ,
\end{align*}
where $\tilde{\mu} := \frac{\tilde{\mathbf{y}}^T \mathbf{x}_j}{ || \mathbf{x}_j ||^2_2 + \frac{\sigma_r^2}{\sigma_{z_j}^2}}$, and $\tilde{\sigma}^2 := ( \frac{1}{\sigma_r^2}  || \mathbf{x}_j ||^2_2+ \frac{1}{\sigma_{z_j}^2} )^{-1}$,
and $\iota(\mathcal{A}_{z_j},  \tilde{\mu}, \tilde{\sigma}^2)$ is the normalization constant of a truncated normal distribution given by 
\begin{align*}
\iota(\mathcal{A}_{z_j},  \tilde{\mu}, \tilde{\sigma}^2) :=\int e^{-\frac{1}{2 \tilde{\sigma}^2} (\beta_j  - \tilde{\mu})^2} \cdot \mathbbm{1}_{\mathcal{A}_{z_j}}(\beta_j) d \beta_j  \, .
\end{align*}

\subsubsection{Case $\delta = 0$.}
In the case, where $\delta = 0$, some care is needed. First, consider $z_j  = 1$, then we can proceed as before 
\begin{align*}
p(z_j = 1 | \beta_{-j}, \sigma_r,  \sigma_0, \sigma_1, \mathbf{y}, X, \mathbf{z}_{-j}) = c \cdot  p(\mathbf{z}) \cdot e^{ \frac{\tilde{\mu}}{2 \sigma_r^2} \tilde{\mathbf{y}}^T \mathbf{x}_j  } \cdot  \frac{\iota(\mathcal{\mathbb{R}},   \tilde{\mu}, \tilde{\sigma}^2)}{\iota(\mathcal{\mathbb{R}}, \sigma_{1}^2)} \, ,
\end{align*}
where $c$ is a normalization constant.
Second, for $z_j  = 0$, the prior $p(\beta_j)$ is a Dirac measure with $1$ at position $0$, and otherwise 0. Therefore, we can use the same calculation as before, but replacing $\beta_j$ by $0$. This way, we get
\begin{align*}
p(z_j = 0 | \beta_{-j}, \sigma_r,  \sigma_0, \sigma_1, \mathbf{y}, X, \mathbf{z}_{-j}) = c \cdot  p(\mathbf{z}) \, .
\end{align*}
Note that in both cases, we can integrate over $\beta_j$, and therefore the reversible jump MCMC methodology \citep{green1995reversible,green2009reversible} is not necessary here. 

\subsection{Analytic solution for $p(\beta_j | \boldsymbol{\beta}_{-j}, \mathbf{z}, \sigma_r , \sigma_1, \sigma_0, \mathbf{y},  X)$}

For $\delta > 0$, we have
\begin{align*}
p(\beta_j | \boldsymbol{\beta}_{-j}, \mathbf{z}, \sigma_r , \sigma_1, \sigma_0, \mathbf{y},  X)
&\propto p(\boldsymbol{\beta}, \mathbf{z}, \sigma_r, \sigma_1, \mathbf{y} |  X, \sigma_0) \\
&= p(\mathbf{z}) \cdot C_0 \cdot   (\sigma_r^2)^{-\frac{n}{2}} e^{-\frac{1}{2\sigma_r^2} ||\mathbf{y} - X \boldsymbol{\beta}||_2^2}  \\
&\quad \cdot (\sigma_r^2)^{-(\frac{\nu_r}{2} + 1)} e^{-\frac{1}{2 \sigma_r^2} \nu_r \eta_r^2}  \\
& \quad  \cdot  \Big( \prod_{j \in \mathcal{C}}  \mathbbm{1}_{\mathcal{N}}(\beta_j) \cdot \frac{1}{\iota(\mathcal{N}, \sigma_0^2)}  \cdot e^{-\frac{1}{2 \sigma_0^2} \beta_j^2} \Big)  \\
&\quad \cdot \Big( \prod_{j \in \mathcal{S}}  \mathbbm{1}_{\mathcal{R}}(\beta_j) \cdot \frac{1}{\iota(\mathcal{R}, \sigma_1^2)}  e^{-\frac{1}{2 \sigma_1^2} \beta_j^2}    \Big)  \\
&\propto  e^{-\frac{1}{2\sigma_r^2} ||\mathbf{y} - X \boldsymbol{\beta}||_2^2} \cdot  \mathbbm{1}_{\mathcal{A}_{z_j}}(\beta_j)  \cdot e^{-\frac{1}{2 \sigma_{z_j}^2} \beta_j^2}  \\
&=  e^{-\frac{1}{2\sigma_r^2} ||\tilde{\mathbf{y}} - \mathbf{x}_j \beta_j ||_2^2} \cdot  \mathbbm{1}_{\mathcal{A}_{z_j}}(\beta_j)  \cdot e^{-\frac{1}{2 \sigma_{z_j}^2} \beta_j^2}  \\
&=  e^{-\frac{1}{2\sigma_r^2} ( ||\tilde{\mathbf{y}}||^2_2 - 2 \tilde{\mathbf{y}}^T \mathbf{x}_j \beta_j  + || \mathbf{x}_j ||_2^2 \beta_j^2 )} \cdot  \mathbbm{1}_{\mathcal{A}_{z_j}}(\beta_j)  \cdot e^{-\frac{1}{2 \sigma_{z_j}^2} \beta_j^2}  \\
&\propto e^{-\frac{1}{2\sigma_r^2} (- 2 \tilde{\mathbf{y}}^T \mathbf{x}_j \beta_j  + (|| \mathbf{x}_j ||_2^2  + \frac{\sigma_r^2}{\sigma_{z_j}^2}) \beta_j^2)}  \cdot \mathbbm{1}_{\mathcal{A}_{z_j}}(\beta_j) \\
 &=  e^{-\frac{1}{2 \tilde{\sigma}^2} (\beta_j  - \tilde{\mu})^2} e^{ \frac{\tilde{\mu}}{2 \sigma_r^2} \tilde{\mathbf{y}}^T \mathbf{x}_j  } \cdot \mathbbm{1}_{\mathcal{A}_{z_j}}(\beta_j) \\
&\propto  N_{\mathcal{A}_{z_j}}(\beta_j | \tilde{\mu},  \tilde{\sigma}^2) \, .
\end{align*}

Note that if $\delta = 0$, then 
\begin{align*}
p(\beta_j | \boldsymbol{\beta}_{-j}, \mathbf{z}, \sigma_r , \sigma_1, \sigma_0, \mathbf{y},  X) = 
\begin{cases} 
   N(\beta_j | \tilde{\mu},  \tilde{\sigma}^2)  & \text{if } z_j = 1 \, ,\\
   1_{\{0\}}(\beta_j)       & \text{if } z_j = 0 \, .
  \end{cases}
\end{align*}

\subsection{Analytic solution for $p(\sigma_r^2 | \boldsymbol{\beta},  \mathbf{z}, \mathbf{y}, X)$} \label{sec:sigma_r_conditional_prob}

For the conditional posterior $p(\sigma_r^2 | \boldsymbol{\beta},  \mathbf{z}, \mathbf{y}, X)$, we have a closed form solution given by 
\begin{align*}
p(\sigma_r^2 | \boldsymbol{\beta},  \mathbf{y}, \mathbf{z}, X) & \propto p(\boldsymbol{\beta}, \sigma_r, \mathbf{y}, \mathbf{z} |  X)  \\
&= p(\mathbf{z}) \cdot C_0 \cdot   (\sigma_r^2)^{-\frac{n}{2}} e^{-\frac{1}{2\sigma_r^2} ||\mathbf{y} - X \boldsymbol{\beta}||_2^2}  \\
&\quad \cdot (\sigma_r^2)^{-(\frac{\nu_r}{2} + 1)} e^{-\frac{1}{2 \sigma_r^2} \nu_r \eta_r^2}  \\
& \quad  \cdot  \Big( \prod_{j \in \mathcal{C}}  \mathbbm{1}_{\mathcal{N}}(\beta_j) \cdot \frac{1}{\iota(\mathcal{N}, \sigma_0^2)}  \cdot e^{-\frac{1}{2 \sigma_0^2} \beta_j^2} \Big)  \\
&\quad \cdot \Big( \prod_{j \in \mathcal{S}}  \mathbbm{1}_{\mathcal{R}}(\beta_j) \cdot \frac{1}{\iota(\mathcal{R}, \sigma_1^2)}  e^{-\frac{1}{2 \sigma_1^2} \beta_j^2}    \Big)  \\
&\propto  (\sigma_r^2)^{-\frac{n}{2}} e^{-\frac{1}{2\sigma_r^2} ||\mathbf{y} - X \boldsymbol{\beta}||_2^2} 
\cdot (\sigma_r^2)^{-(\frac{\nu_r}{2} + 1)} e^{-\frac{1}{2 \sigma_r^2} \nu_r \eta_r^2} \\
&\propto  (\sigma_r^2)^{-(\frac{\nu_r + n}{2} + 1)} e^{-\frac{1}{2 \sigma_r^2} ( ||\mathbf{y} - X \boldsymbol{\beta}||_2^2  + \nu_r \eta_r^2)} \\
&\propto  (\sigma_r^2)^{-(\frac{\nu_r + n}{2} + 1)} e^{-\frac{1}{2 \sigma_r^2}(\nu_r + n) \frac{||\mathbf{y} - X \boldsymbol{\beta}||_2^2  + \nu_r \eta_r^2}{\nu_r + n}} \\
&\propto  \text{Inv-}\chi^2(\sigma^2_r \, | \, \nu_r + n , \frac{||\mathbf{y} - X \boldsymbol{\beta}||_2^2  + \nu_r \eta_r^2}{\nu_r + n}) \, .
\end{align*}

\subsection{Sampling from $p(\sigma_1^2 | \boldsymbol{\beta}, \sigma_r^2, \mathbf{z}, \mathbf{y},  X)$}

For sampling from $p(\sigma_1^2 | \boldsymbol{\beta}, \sigma_r^2, \mathbf{z}, \mathbf{y},  X)$, we employ a Slice sampler as described in the following.
First note that 
\begin{align*}
p(\sigma_1^2 | \boldsymbol{\beta}, \sigma_r^2, \mathbf{y}, \mathbf{z},  X) 
&\propto p(\sigma_1^2, \boldsymbol{\beta}, \sigma_r^2, \mathbf{y}, \mathbf{z} |  X) \\
&= p(\mathbf{z}) \cdot C_0 \cdot   (\sigma_r^2)^{-\frac{n}{2}} e^{-\frac{1}{2\sigma_r^2} ||\mathbf{y} - X \boldsymbol{\beta}||_2^2}  \\
&\quad \cdot (\sigma_r^2)^{-(\frac{\nu_r}{2} + 1)} e^{-\frac{1}{2 \sigma_r^2} \nu_r \eta_r^2}  \\
& \quad  \cdot  \Big( \prod_{j \in \mathcal{C}}  \mathbbm{1}_{\mathcal{N}}(\beta_j) \cdot \frac{1}{\iota(\mathcal{N}, \sigma_0^2)}  \cdot e^{-\frac{1}{2 \sigma_0^2} \beta_j^2} \Big)  \\
&\quad \cdot \Big( \prod_{j \in \mathcal{S}}  \mathbbm{1}_{\mathcal{R}}(\beta_j) \cdot \frac{1}{\iota(\mathcal{R}, \sigma_1^2)}  e^{-\frac{1}{2 \sigma_1^2} \beta_j^2}    \Big)  \\
&\quad \cdot (\sigma_1^2)^{-(\frac{\nu_1}{2} + 1)} e^{-\frac{1}{2 \sigma_1^2} \nu_1 \eta_1^2}  \\
&\propto  \Big( \prod_{j \in \mathcal{S}}  \mathbbm{1}_{\mathcal{R}}(\beta_j) \cdot \frac{1}{\iota(\mathcal{R}, \sigma_1^2)}  \cdot e^{-\frac{1}{2 \sigma_1^2} \beta_j^2} \Big)  \\
&\quad \cdot (\sigma_1^2)^{-(\frac{\nu_1}{2} + 1)} e^{-\frac{1}{2 \sigma_1^2} \nu_1 \eta_1^2}  \\
&\propto \frac{1}{\iota(\mathcal{R}, \sigma_1^2)^s} \cdot (\sigma_1^2)^{-(\frac{\nu_1}{2} + 1)} e^{-\frac{1}{2 \sigma_1^2} (\nu_1 \eta_1^2 + \sum_{j \in \mathcal{S}} \beta_j^2 ) }  \, .
\end{align*}
If $\sigma_1^2 \gg 1$, and $\delta \ll 1$, we have \emph{approximately} that 
\begin{align} \label{eq:approximationForSamplingSigma1}
\iota(\mathcal{R}, \sigma_1^2) 
\propto \iota(\mathbb{R}, \sigma_1^2) = (2 \pi \sigma_1^2)^{\frac{1}{2}} \, ,
 \end{align}
 and we have \emph{exactly} (not approximately) that
\begin{align*}
p(\sigma_1^2 | \boldsymbol{\beta}, \sigma_r^2, \mathbf{y}, \mathbf{z},  X) 
&\propto  \Big( \frac{(2 \pi \sigma_1^2)^{\frac{s}{2}} }{\iota(\mathcal{R}, \sigma_1^2)^s} \Big) 
\cdot (2 \pi \sigma_1^2)^{-\frac{s}{2}}
\cdot  \Big( (\sigma_1^2)^{-(\frac{\nu_1}{2} + 1)} e^{-\frac{1}{2 \sigma_1^2} (\nu_1 \eta_1^2 + \sum_{j \in \mathcal{S}} \beta_j^2 ) }  \Big) \\
&\propto  \Big( \frac{(2 \pi \sigma_1^2)^{\frac{s}{2}} }{\iota(\mathcal{R}, \sigma_1^2)^s} \Big) 
\cdot  \Big( (\sigma_1^2)^{-(\frac{\nu_1 + s}{2} + 1)} e^{-\frac{1}{2 \sigma_1^2} (\nu_1 \eta_1^2 + \sum_{j \in \mathcal{S}} \beta_j^2 ) }  \Big) \\
&\propto  \Big( \frac{(2 \pi \sigma_1^2)^{\frac{s}{2}} }{\iota(\mathcal{R}, \sigma_1^2)^s} \Big) 
\cdot  \Big( (\sigma_1^2)^{-(\frac{\nu_1 + s}{2} + 1)} e^{-\frac{1}{2 \sigma_1^2} (\nu_1 + s) \frac{(\nu_1 \eta_1^2 + \sum_{j \in \mathcal{S}} \beta_j^2 )}{\nu_1 + s} }  \Big) \\
&\propto  \Big( \frac{(2 \pi \sigma_1^2)^{\frac{s}{2}} }{\iota(\mathcal{R}, \sigma_1^2)^s} \Big) 
\cdot  \text{Inv-}\chi^2(\nu_1 + s, \frac{(\nu_1 \eta_1^2 + \sum_{j \in \mathcal{S}} \beta_j^2 )}{\nu_1 + s} \Big)  \, .
\end{align*}
That means we have that 
\begin{align*}
p(\sigma^2_1 | \boldsymbol{\beta}, \sigma_r^2, \mathbf{y}, X, \mathcal{S})  
&\propto h(\sigma^2_1) \cdot  \text{Inv-}\chi^2(\sigma^2_1 \, | \, \tilde{\nu} , \tilde{\eta}^2) \, ,
\end{align*}
for $\tilde{\nu} := \nu_1 + s$, $\tilde{\eta}^2 := \frac{(\nu_1 \eta_1^2 + \sum_{j \in \mathcal{S}} \beta_j^2 )}{\nu_1 + s}$, and the function $h(\sigma_1^2) := \frac{(2 \pi \sigma_1^2)^{\frac{s}{2}} }{\iota(\mathcal{R}, \sigma_1^2)^s} $ is changing slowly with $\sigma_1^2$.  
Therefore, we use a slice sampler (see e.g. \cite{carlin2008bayesian}) as follows. We start from the (approximate) mode given by $\sigma^2_1 := \frac{\tilde{\nu} \tilde{\eta}^2}{\tilde{\nu} + 2}$, and then run the following two steps, until we retain a sample in the second step:\footnote{We assume that we started in a high probability region, and therefore use a burn-in of only 10.}
\begin{enumerate}
\item Sample $U \sim \text{Uniform}( [0, h(\sigma^2_1)] )$.
\item Sample $\sigma^2_1 \sim \text{Inv-}\chi^2( \tilde{\nu} , \tilde{\eta}^2)$, and retain the sample if $U < h(\sigma^2_1)$.
\end{enumerate}

Note that the sampling scheme is guaranteed to sample exactly from \\
$p(\sigma^2_1 | \boldsymbol{\beta}, \sigma_r^2, \mathbf{y}, X, \mathcal{S})$, independently of how well the approximation $h(\sigma^2_1) \propto 1$ holds. The correctness of the sampling scheme is shown in Appendix B. 
However, of course, the efficiency (whether we accept the sample in step 2) will depend on the closeness of the approximation in Equation \eqref{eq:approximationForSamplingSigma1}.
In practice, we observe that the sampling method is efficient if $s$ is small. 
In detail, for several settings, for $s = 1$, and $s = 10$, the lowest acceptance rates were around 97\% and 67\%, respectively, where
we tested $\sum_{j \in \mathcal{S}} \beta_j^2 \in \{0.1, 1.0, 10.0, 100.0\}$, and $\delta = \{0.8, 0.05, 0.001\}$.

\section{Specification of $\delta$} \label{sec:deltaSpecification_regression}

In some situations, where prior knowledge is given in the form of similar regression tasks from the past, it is possible to directly elicit a suitable threshold value $\delta$. 

As an alternative, several plausible values for $\delta$ might be evaluated in terms of the expected increase of mean squared error (MSE).
As the final model, we can then select the model that is the sparsest and does not increase MSE by more than, for example, 5\% when compared to the best model
(see \cite{piironen2017comparison,hahn2015decoupling} for similar ideas).

For the ``best model" we use the Bayesian model averaged (BMA) regression model, since it is often considered the gold standard due to its good theoretic and practical performance \citep{fernandez2001benchmark,piironen2017comparison}.
The BMA model for the prediction of a new datapoint $(\tilde{y}, \tilde{\mathbf{x}})$ is defined as 
\begin{align*}
p(\tilde{y} | \tilde{\mathbf{x}}) = \sum_{\mathbf{z}} \int p(\tilde{y} | \tilde{\mathbf{x}}, \mathbf{z} , \boldsymbol{\theta}) p(\mathbf{z}, \boldsymbol{\theta} | \mathbf{y}, X) d\boldsymbol{\theta} \, ,
\end{align*}
where $\boldsymbol{\theta}$ denotes all parameters. The BMA model is a meta-model since it still requires the specification of the model for $p(\mathbf{z}, \boldsymbol{\theta}, y | X)$.
Here, we use for $p(\mathbf{z}, \boldsymbol{\theta}, y | X)$, our proposed model with $\delta = 0$.

The expected mean squared error of BMA is therefore given by
\begin{align*}
\text{MSE}_{\text{bma}} :=  \E_{\mathbf{z}} [ \E_{\sigma_r^2} [\sigma_r^2 | \mathbf{z}, \mathbf{y}, X]  | \mathbf{y}, X] \, ,
\end{align*}
which we estimate from the samples of our MCMC algorithm in Algorithm \ref{alg:propGibsForZ}.

Given a threshold $\delta^*$, and the best subset of variables specified by $\mathbf{z}^*$, we estimate the MSE as follows 
\begin{align*}
\text{MSE}_{\delta^*} := \E_{\sigma_r^2} [\sigma_r^2 | \mathbf{z}^*, \mathbf{y}, X_{| \mathbf{z}^*}, \delta^*] \, ,
\end{align*}
where $X_{| \mathbf{z}^*}$ means that only the covariates index by $\mathbf{z}^*$ are used, where 
\begin{align*} 
\mathbf{z}^* := \argmax_{\mathbf{z}} p(\mathbf{z} | \mathbf{y}, X, \delta^*) \, .
\end{align*}
We can now estimate for each threshold $\delta$ the expected increase in MSE when compared to $\text{MSE}_{\text{bma}}$, i.e.:
\begin{align} \label{eq:MSE_increaseEstimation}
\text{expected increase in MSE} = \frac{\text{MSE}_{\delta^*}}{\text{MSE}_{\text{bma}}} - 1.0 \, . 
\end{align}
We then select the most parsimonious model that has an expected increase in MSE of less than 5\%.

\section{Evaluation on synthetic data} \label{sec:evaluation_synthetic_regression}

We study two settings, the low-dimensional setting with $d < n$ and the high-dimensional setting with $d \geq n$.

For the low-dimensional experiments,  we use the same regression setting as in \citep{tibshirani1996regression}, namely the regression coefficient vector is set to
\begin{align*}
\boldsymbol{\beta}^T = ({\bf  3}, {\bf  1.5}, 0.0, 0.0, {\bf  2.0}, 0.0, 0.0, 0.0)^T \, ,
\end{align*}
and the response noise is set to $\sigma_r = 3.0$.
For each sample, we draw a covariate vector $\mathbf{x} \sim N(\mathbf{0}, \Sigma)$, where $\Sigma_{ij} = 0.5^{|i-j|}$.
The number of samples is varied from $n = 10$ to $n = 100000$. 

For the high-dimensional experiments, we use the same setting as in \citep{rovckova2014emvs}, with $d = 1000$ and $n \in \{100, 1000\}$, where 
the first three covariate are set to $3$ ,$2$, and $1$, and all others are set to zero.
The covariate vector is drawn from $\mathbf{x} \sim N(\mathbf{0}, \Sigma)$, where $\Sigma_{ij} = 0.6^{|i-j|}$.

Furthermore, in the noise setting, we replace each zero entry of the original regression coefficient vector by a value sampled from $\text{Uniform}([-\eta, \eta])$, where $\eta \in \{0.2, 0.5\}$.
For example, when $\eta = 0.5$, the new regression coefficient vector for the low-dimensional experiment becomes
\begin{align*}
\boldsymbol{\beta}^T = ({\bf 3}, {\bf 1.5}, -0.12, -0.35, {\bf 2.0}, 0.16, 0.26, -0.01)^T \, ,
\end{align*}
where the relevant variables are marked by bold font. 
The expected increase in mean squared error (MSE) for choosing the parsimonious model without the noise coefficients is about $0.4\%$ and $2.8\%$, for $\eta = 0.2$, and $\eta = 0.5$, respectively.

In the high-dimensional noise setting, we replace only $1\%$ of the original zero entries (following the largest entries 3, 2, and 1).
For choosing the parsimonious model (i.e. only relevant variables), this leads to an expected increase in mean squared error of about $3.1\%$ for $\eta = 0.2$.\footnote{For the high-dimensional setting we do not consider $\eta = 0.5$, since this would correspond to an expected MSE increase of $19.0\%$.}

All methods are evaluated in terms of identifying the set of relevant variables.

\subsection{Analysis of Bayes factors} \label{sec:expAnalysisBF}
First, we investigate the advantage of disjunct support priors in contrast to full support priors, and the product moment matching priors from \citep{johnson2012bayesian}.
For the full support priors, we replace the truncated normal spike and slab priors by non-truncated ones, i.e.
the model specification from Section \ref{sec:proposedMethod_regression} stays the same except that, 
if $j \in S$, then $\beta_j \sim N (0,  \sigma_1^2)$, else $\beta_j \sim N (0,  \sigma_0^2)$.
Sampling from the posterior probabilities is analogously to Algorithm \ref{alg:propGibsForZ}. 
Furthermore, we compare also to the product moment matching priors (MOM) from \citep{johnson2012bayesian}.\footnote{Implemented in the R package 'mombf'.}
The critical hyper-parameter of MOM is $\tau$ which controls the definition of practical relevance. In particular, following \citep{johnson2010use}, we set $\tau$ such that $P(\beta \in [-\delta, \delta]) = 0.01$.
For all methods, we fix the threshold of practical relevance $\delta$ to $0.5$.
We evaluate the Bayes factors (BF) in favor for the true model, defined as 
\begin{align*}
\text{BF in favor for true model} = \frac{p(\mathbf{y} | X, S)} {p(\mathbf{y} | X, S')}  \, ,
\end{align*}
where $S$ is the true model and $S'$ is the most frequently selected model $S_1$, in case where $S \neq S_1$, otherwise we set $S'$ to the second most frequently selected model $S_2$.
This means, $\text{BF} <  1$, if the most frequent model was not the true model, and $\text{BF} >  1$, denotes the Bayes factor compared to the second best model. 
We hope to observe that $\text{BF}$ grows with increasing sample size $n$. 
As can be seen in the results in Table \ref{tab:simData_lowDimensional_BF}, this is indeed the case for the proposed model with disjunct support priors, but not always the case for full support priors and the MOM priors.
This confirms the asymptotic results from Theorem \ref{prop:proofBFproposed_regression}, \ref{prop:proofBFprevious_regression} and \ref{prop:proofBF_nonlocalPriors_regression}.
In particular, when $n \geq 1000$, we observe in both, the low and high-dimensional setting, higher Bayes factors for the proposed method with disjunct support priors than for full support and MOM priors.

  \begin{table*}[h]
  \footnotesize
  \caption{Shows the Bayes factors favoring the true model with disjunct support priors, full support priors and moment matching priors (MOM) from \citep{johnson2012bayesian}. Low-dimensional setting, $d = 8$ and $n \in \{10, 50, 100, 1000, 100000\}$, and high-dimensional setting, $d = 1000$ and $n \in \{100, 1000\}$.
  The threshold $\delta$ is fixed to $0.5$. In case where the true model was always selected (i.e. no second most frequent model), we set the Bayes factor to $\infty$. All experiments are repeated 10 times and the average and standard deviation (in brackets) are reported.}
  \label{tab:simData_lowDimensional_BF}
  
  \begin{flushleft}
  \begin{tabular}{llllll}
  \\
     \multicolumn{6}{c}{\footnotesize Low-dimensional setting ($d$ = 8)} \\
  \toprule 
      \multicolumn{6}{c}{\footnotesize No noise on regression coefficients} \\
      \midrule
 \footnotesize & \footnotesize 10 & \footnotesize 50 & \footnotesize 100 & \footnotesize 1000 & \footnotesize 100000 \\
\midrule
\footnotesize disjunct support   & \footnotesize 0.1 (0.12) & 2.13 (2.44) & 18.34 (18.88) & $\infty$ (-) & $\infty$ (-) \\
\footnotesize full support   & \footnotesize 0.26 (0.27) & 1.81 (1.83) & 6.68 (4.42) & 18.41 (2.67) & 30.23 (0.68) \\
\footnotesize MOM   & \footnotesize 0.02 (0.02) & 2.48 (3.49) & 93.98 (172.34) & 8755.93 (9011.52) & 8886733.9 (11906002.92) \\
 \midrule
\multicolumn{6}{c}{Noise on regression coefficients ($\eta = 0.5$)} \\
\midrule
disjunct support   & 0.12 (0.29) & 5.39 (6.87) & 12.67 (14.19) & 120.98 (151.28) & $\infty$ (-) \\
full support   & 0.37 (0.59) & 3.72 (4.65) & 5.68 (4.13) & 6.24 (3.79) & 4.24 (0.51) \\
MOM   & 0.02 (0.03) & 7.98 (17.19) & 45.4 (89.39) & 48.56 (76.55) & 0.0 (0.0) \\
\bottomrule
\end{tabular}
$ $ \\
\vspace{0.5cm}
    \begin{tabular}{lll}
   \multicolumn{3}{c}{High-dimensional setting ($d$ = 1000)} \\
  \toprule 
      \multicolumn{3}{c}{No noise on regression coefficients} \\
      \midrule
 & 100 & 1000  \\
\midrule
disjunct support   & 2.12 (6.3) & $\infty$ (-) \\
full support   & 7.02 (21.06) & 55924.83 (44821.0) \\
MOM   & 1444.77 (3444.03) & 252.36 (520.61) \\
 \midrule
\multicolumn{3}{c}{Noise on regression coefficients ($\eta = 0.2$)} \\
\midrule
disjunct support   & 0.0 (0.0) & 77308.22 (109895.97) \\
full support   & 0.01 (0.03) & 3085.49 (3492.8) \\
MOM   & 12.87 (11.64) & 8.32 (14.38) \\
\bottomrule
    \end{tabular}
    \end{flushleft}
 \end{table*}

\subsection{Comparison to other model selection methods} \label{sec:synExperimentsToOtherMethod}
We evaluate our proposed method for $\delta \in \{0.8, 0.5, 0.05, 0.01, 0.001, 0.0\}$, and select the most parsimonious model that is estimated to lead to an increase in MSE of not more than 5\% as was described in Section \ref{sec:deltaSpecification_regression}. For MCMC we use 10000 samples, out of which 10\% are used for burn in.

We compare to the Gaussian and Laplace spike-and-slab priors combined with the EM-algorithm as proposed in \citep{rovckova2014emvs,rovckova2018spike} which we denote as ``EMVS" and ``SSLASSO", respectively.\footnote{Implemented in the R package 'EMVS' and 'SSLASSO'.}
Note that EMVS and SSLASSO do not provide model or variable inclusion posterior probabilities. Here we show only the results for SSLASSO. The results for EMVS were always similar or worse than SSLASSO and are given in Appendix C. Comparison to the robust objective prior proposed in \citep{bayarri2012criteria} are also given in Appendix C.

The above methods cannot account for negligible noise on the coefficient vectors. Therefore, we introduce another baseline using the horseshoe prior \citep{carvalho2010horseshoe} as follows.\footnote{Implemented in the R package 'horseshoe'.}
First, using the horseshoe prior, we estimate the mean coefficient vector $\boldsymbol{\beta}$ and the mean response variance $\sigma_{r, full}^2$ for the full model. 
Then,  for each $\delta$, we hard threshold $\boldsymbol{\beta}$, and this way get a model candidate $\mathbf{z}_{\delta}$. 
Finally, using again the horseshoe prior for the linear regression model but reduced to the covariates $\mathbf{z}_{\delta}$,  we estimate the mean response variance $\sigma_{r, \mathbf{z}_{\delta}}^2$, and then select the most parsimonious model that has lower expected increase in MSE than 5\%.
To estimate the expected increase in MSE, we use Formula \eqref{eq:MSE_increaseEstimation},  where we replace $\text{MSE}_{\delta^*}$ and $\text{MSE}_{\text{bma}}$ by $\sigma_{r, \mathbf{z}_{\delta}}^2$ and 
$\sigma_{r, full}^2$, respectively.

Finally, we compared to three frequentist methods for model search.
We used the Least Angle Regression (LARS) method \citep{efron2004least} or Lasso \citep{tibshirani1996regression} to get a set of candidate models, and then ranked each model using either Akaike information criterion (AIC) \citep{akaike1973information}, the Bayesian information criterion (BIC) and its extensions \citep{schwarz1978estimating,chen2008extended,foygel2010extended}, or stability selection \citep{meinshausen2010stability}. We show here only the results for BIC. The other results are given in Appendix C.

\paragraph{Low-dimensional setting}
The results for the low-dimensional setting, with and without noise, are shown in Figure \ref{fig:syntheticResults_lowDim}.
Overall, we see that the proposed method and the horseshoe prior method perform best and can identify only the relevant set of variables in the noise setting, assuming sufficiently large $n$. 
Note that BIC and SSLASSO also perform good in the noise setting when the sample size is only small or moderate. 
This phenomena is likely to be due to that the sampling noise and the noise on the regression coefficients cannot be distinguished anymore, and as a consequence, BIC and SSLASSO tend to select the more parsimonious models. However, in the noise setting when $n \geq 1000$, sampling noise and the signal from the regression coefficients can be distinguished even for regression coefficients with very small magnitude, and as a consequence BIC and SSLASSO start to select also the irrelevant variables.  %

\paragraph{High-dimensional setting}
The results for the high-dimensional setting, with and without noise, are shown in Figure \ref{fig:syntheticResults_highDim}.
Overall, the horseshoe prior method performs somehow unsatisfactory, tending to select too many variables. Inspecting the results for different $\delta$ confirmed this (details given in Appendix C). 
BIC performed very poorly in this setting, selecting too many variables. One reason seems to stem from the numerical instability of the maximum likelihood estimate for $d \leq n$.\footnote{As an ad-hoc remedy we tried to combine it with a ridge estimate, but this did not seem to help.} The proposed method and SSLASSO performed best in this setting. Interestingly, even in the noise setting, SSLASSO correctly selected mostly only the relevant variables, which is likely to be due to the same phenomena as described in the low-dimensional setting.

\paragraph{Comparison to moment matching priors}
Finally, we also compare to the product moment matching priors (MOM) from \citep{johnson2012bayesian}. 
The results are show in Tables \ref{tab:simDataNoise0_lowDimensional_deltaAnalysisVS_MOM} and \ref{tab:simDataNoise05_lowDimensional_deltaAnalysisVS_MOM}.
From these results, we can draw two important conclusions. 
First, as seen in Table \ref{tab:simDataNoise05_lowDimensional_deltaAnalysisVS_MOM}, when $n$ grows large enough, MOM starts to select also the noise variables. This is expected, since the MOM prior densities are only exactly zero at $\beta = 0$, but otherwise have strictly positive support.
The second observation is that for small values of the practical relevance threshold $\delta$ (values smaller or equal to $0.05$), MOM tends to select too few or too many variables.
The results for the high-dimensional setting are similar and show in Appendix C.

 \begin{figure*}[h]
  \centering
  \includegraphics[scale=0.35, trim=10cm 0cm 10cm 0cm]{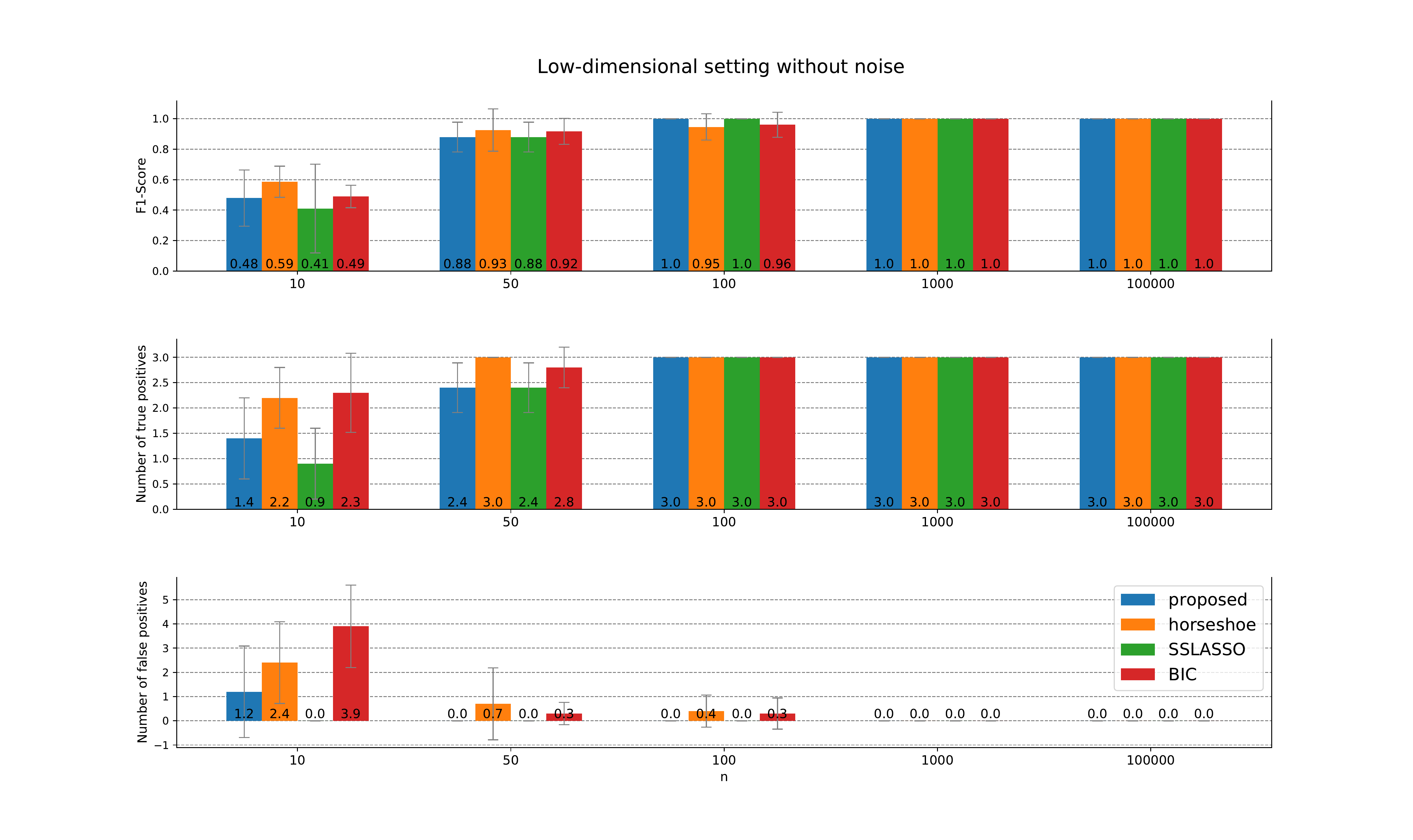}  
  \includegraphics[scale=0.35, trim=10cm 0cm 10cm 2cm]{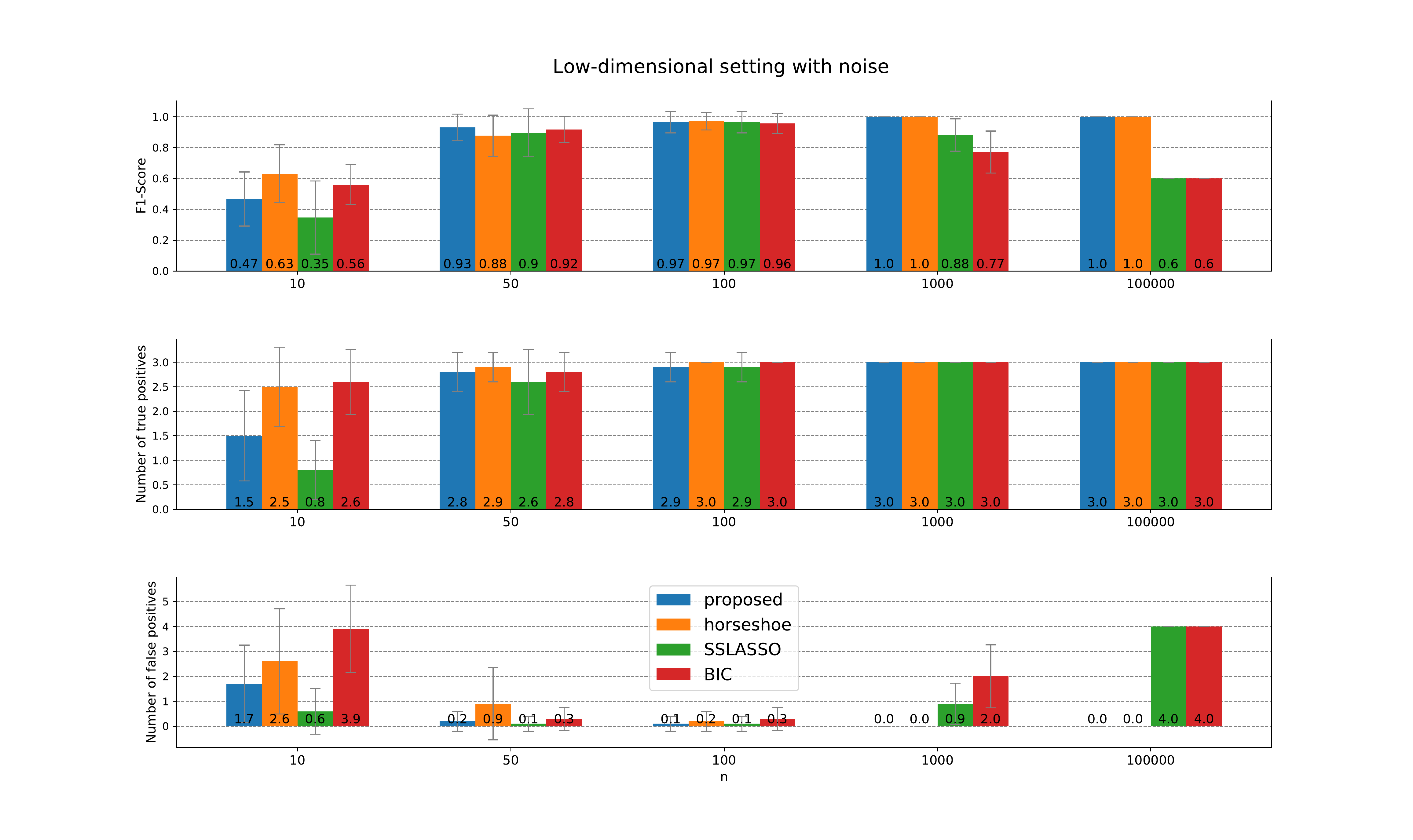}  
    \caption{Low-dimensional setting, $d = 8$ and $n \in \{10, 50, 100, 1000, 100000\}$. Evaluation results with no noise and noise ($\eta = 0.5$) on regression coefficients, shown in upper and lower half, respectively.}
  \label{fig:syntheticResults_lowDim}
\end{figure*} 

 \begin{figure*}[h]
  \centering
  \includegraphics[scale=0.35, trim=10cm 0cm 10cm 0cm]{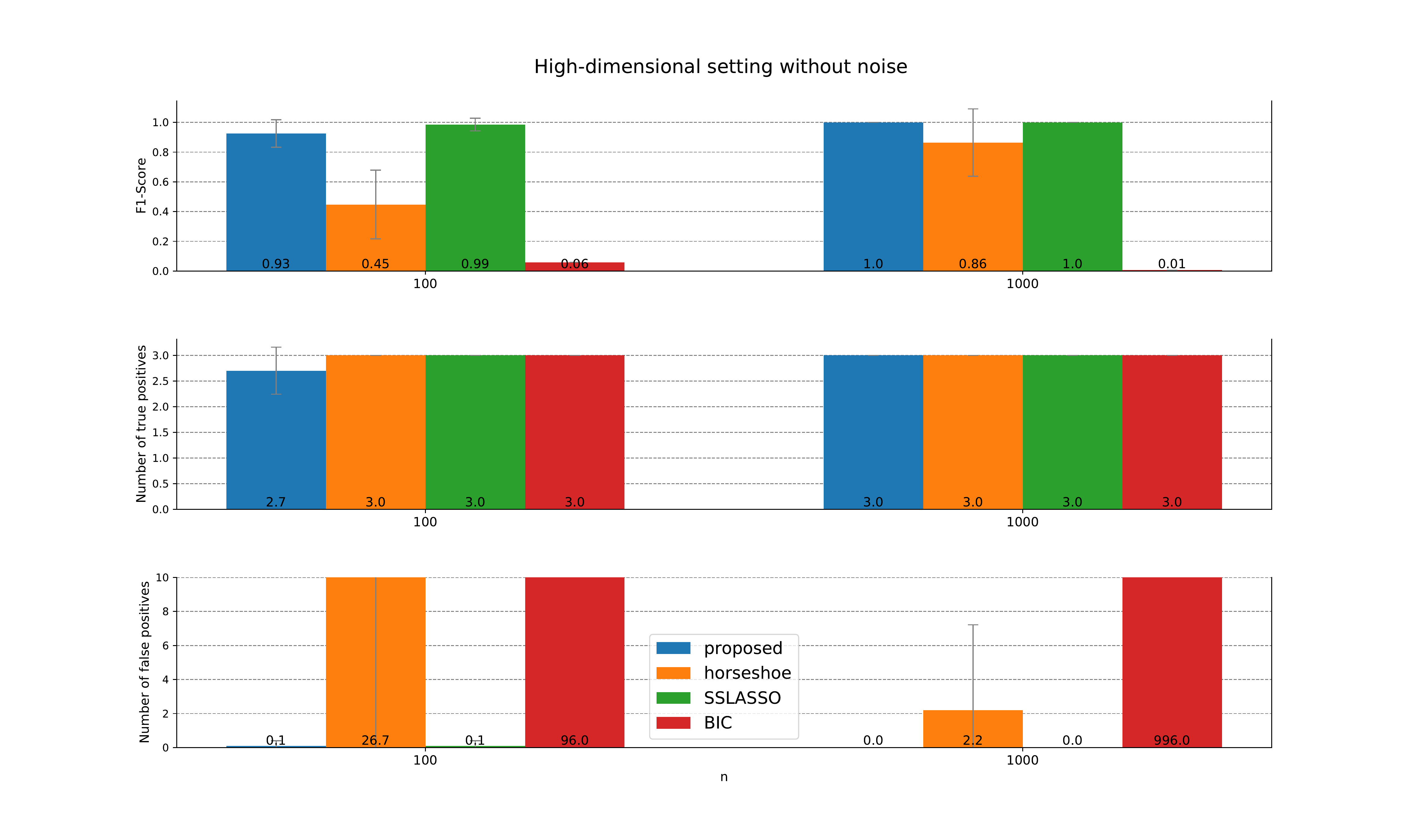}  
   \includegraphics[scale=0.35, trim=10cm 0cm 10cm 2cm]{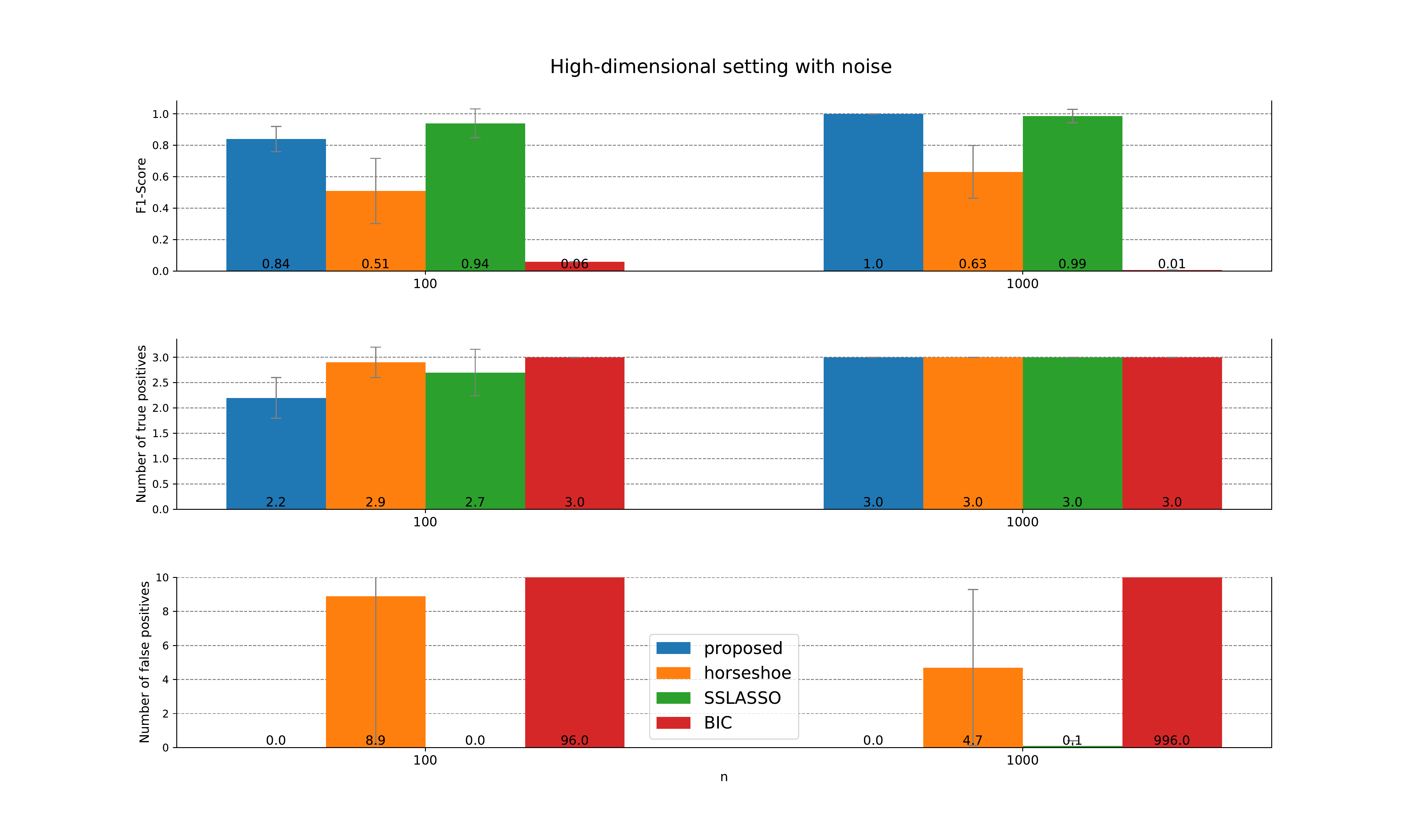}  
    \caption{High-dimensional setting, $d = 1000$ and $n \in \{100, 1000\}$. Evaluation results with no noise and noise ($\eta = 0.2$) on regression coefficients, shown in upper and lower half, respectively.}
  \label{fig:syntheticResults_highDim}
\end{figure*}

  \begin{table*}[h]
 \center
  \caption{Low-dimensional setting, $d = 8$ and $n \in \{10, 50, 100, 1000, 100000\}$. Evaluation results with no noise on regression coefficients. For different $\delta$, comparison of the proposed method and product moment matching priors (MOM) from \citep{johnson2012bayesian}.}
  \label{tab:simDataNoise0_lowDimensional_deltaAnalysisVS_MOM}
  \begin{tabular}{llllll}
  \toprule 
      \multicolumn{6}{c}{F1-Scores} \\
      \midrule
 & 10 & 50 & 100 & 1000 & 100000 \\
\midrule
proposed ($\delta$ = 0.8)  & 0.47 (0.18) & 0.88 (0.1) & 1.0 (0.0) & 1.0 (0.0) & 1.0 (0.0) \\
proposed ($\delta$ = 0.5)  & 0.47 (0.18) & 0.88 (0.1) & 1.0 (0.0) & 1.0 (0.0) & 1.0 (0.0) \\
proposed ($\delta$ = 0.05)  & 0.47 (0.19) & 0.89 (0.09) & 1.0 (0.0) & 1.0 (0.0) & 1.0 (0.0) \\
proposed ($\delta$ = 0.01)  & 0.47 (0.19) & 0.89 (0.09) & 1.0 (0.0) & 1.0 (0.0) & 1.0 (0.0) \\
proposed ($\delta$ = 0.001)  & 0.48 (0.19) & 0.87 (0.09) & 0.98 (0.08) & 1.0 (0.0) & 1.0 (0.0) \\
\midrule
MOM ($\delta$ = 0.8)  & 0.52 (0.2) & 0.81 (0.13) & 0.98 (0.06) & 1.0 (0.0) & 1.0 (0.0) \\
MOM ($\delta$ = 0.5)  & 0.53 (0.2) & 0.88 (0.1) & 1.0 (0.0) & 1.0 (0.0) & 1.0 (0.0) \\
MOM ($\delta$ = 0.05)  & 0.38 (0.25) & 0.59 (0.14) & 0.77 (0.23) & 0.99 (0.04) & 1.0 (0.0) \\
MOM ($\delta$ = 0.01)  & 0.0 (0.0) & 0.55 (0.0) & 0.55 (0.0) & 0.55 (0.0) & 0.97 (0.06) \\
MOM ($\delta$ = 0.001)  & 0.0 (0.0) & 0.0 (0.0) & 0.0 (0.0) & 0.55 (0.0) & 0.55 (0.0) \\
\midrule
      \multicolumn{6}{c}{Average number of selected variables} \\
          \midrule
 & 10 & 50 & 100 & 1000 & 100000 \\
\midrule
proposed ($\delta$ = 0.8)  & 1.9 (1.87) & 2.4 (0.49) & 3.0 (0.0) & 3.0 (0.0) & 3.0 (0.0) \\
proposed ($\delta$ = 0.5)  & 1.9 (1.87) & 2.4 (0.49) & 3.0 (0.0) & 3.0 (0.0) & 3.0 (0.0) \\
proposed ($\delta$ = 0.05)  & 2.3 (2.65) & 2.6 (0.66) & 3.0 (0.0) & 3.0 (0.0) & 3.0 (0.0) \\
proposed ($\delta$ = 0.01)  & 2.3 (2.65) & 2.6 (0.66) & 3.0 (0.0) & 3.0 (0.0) & 3.0 (0.0) \\
proposed ($\delta$ = 0.001)  & 3.0 (3.1) & 2.5 (0.67) & 3.2 (0.6) & 3.0 (0.0) & 3.0 (0.0) \\
\midrule
MOM ($\delta$ = 0.8)  & 3.1 (2.77) & 2.1 (0.54) & 2.9 (0.3) & 3.0 (0.0) & 3.0 (0.0) \\
MOM ($\delta$ = 0.5)  & 3.0 (2.68) & 2.4 (0.49) & 3.0 (0.0) & 3.0 (0.0) & 3.0 (0.0) \\
MOM ($\delta$ = 0.05)  & 5.6 (3.67) & 7.5 (1.5) & 5.5 (2.5) & 3.1 (0.3) & 3.0 (0.0) \\
MOM ($\delta$ = 0.01)  & 0.0 (0.0) & 8.0 (0.0) & 8.0 (0.0) & 8.0 (0.0) & 3.2 (0.4) \\
MOM ($\delta$ = 0.001)  & 0.0 (0.0) & 0.0 (0.0) & 0.0 (0.0) & 8.0 (0.0) & 8.0 (0.0) \\
     \bottomrule
    \end{tabular}
 \end{table*}

    \begin{table*}[h]
 \center
  \caption{Low-dimensional setting, $d = 8$ and $n \in \{10, 50, 100, 1000, 100000\}$. Evaluation results with noise on regression coefficients $\eta = 0.5$. For different $\delta$, comparison of the proposed method and product moment matching priors (MOM) from \citep{johnson2012bayesian}.}
  \label{tab:simDataNoise05_lowDimensional_deltaAnalysisVS_MOM}
  \begin{tabular}{llllll}
  \toprule 
      \multicolumn{6}{c}{F1-Scores} \\
      \midrule
 & 10 & 50 & 100 & 1000 & 100000 \\
\midrule
proposed ($\delta$ = 0.8)  & 0.36 (0.25) & 0.85 (0.19) & 0.97 (0.07) & 1.0 (0.0) & 1.0 (0.0) \\
proposed ($\delta$ = 0.5)  & 0.45 (0.18) & 0.93 (0.09) & 0.97 (0.07) & 1.0 (0.0) & 1.0 (0.0) \\
proposed ($\delta$ = 0.05)  & 0.34 (0.24) & 0.93 (0.09) & 0.95 (0.08) & 0.8 (0.11) & 0.6 (0.0) \\
proposed ($\delta$ = 0.01)  & 0.34 (0.24) & 0.93 (0.09) & 0.95 (0.08) & 0.8 (0.11) & 0.59 (0.02) \\
proposed ($\delta$ = 0.001)  & 0.34 (0.24) & 0.93 (0.09) & 0.95 (0.08) & 0.8 (0.11) & 0.59 (0.02) \\
\midrule
MOM ($\delta$ = 0.8)  & 0.42 (0.24) & 0.74 (0.21) & 0.95 (0.08) & 0.97 (0.06) & 0.6 (0.0) \\
MOM ($\delta$ = 0.5)  & 0.46 (0.2) & 0.81 (0.17) & 0.97 (0.07) & 0.96 (0.07) & 0.6 (0.0) \\
MOM ($\delta$ = 0.05)  & 0.27 (0.27) & 0.55 (0.0) & 0.77 (0.23) & 0.76 (0.1) & 0.6 (0.0) \\
MOM ($\delta$ = 0.01)  & 0.0 (0.0) & 0.55 (0.0) & 0.55 (0.0) & 0.55 (0.0) & 0.58 (0.02) \\
MOM ($\delta$ = 0.001)  & 0.0 (0.0) & 0.0 (0.0) & 0.0 (0.0) & 0.55 (0.0) & 0.55 (0.0) \\
\midrule
      \multicolumn{6}{c}{Average number of selected variables} \\
          \midrule
 & 10 & 50 & 100 & 1000 & 100000 \\
\midrule
proposed ($\delta$ = 0.8)  & 2.4 (2.11) & 2.5 (0.92) & 3.0 (0.45) & 3.0 (0.0) & 3.0 (0.0) \\
proposed ($\delta$ = 0.5)  & 3.2 (2.36) & 3.0 (0.63) & 3.0 (0.45) & 3.0 (0.0) & 3.0 (0.0) \\
proposed ($\delta$ = 0.05)  & 2.5 (2.42) & 3.0 (0.63) & 3.1 (0.54) & 4.6 (0.92) & 7.0 (0.0) \\
proposed ($\delta$ = 0.01)  & 2.5 (2.42) & 3.0 (0.63) & 3.1 (0.54) & 4.6 (0.92) & 7.1 (0.3) \\
proposed ($\delta$ = 0.001)  & 2.5 (2.42) & 3.0 (0.63) & 3.1 (0.54) & 4.6 (0.92) & 7.1 (0.3) \\
\midrule
MOM ($\delta$ = 0.8)  & 3.3 (2.57) & 1.9 (0.83) & 2.9 (0.54) & 3.2 (0.4) & 7.0 (0.0) \\
MOM ($\delta$ = 0.5)  & 3.4 (2.37) & 2.5 (1.02) & 3.0 (0.45) & 3.3 (0.46) & 7.0 (0.0) \\
MOM ($\delta$ = 0.05)  & 4.0 (4.0) & 8.0 (0.0) & 5.5 (2.5) & 5.0 (0.89) & 7.0 (0.0) \\
MOM ($\delta$ = 0.01)  & 0.0 (0.0) & 8.0 (0.0) & 8.0 (0.0) & 8.0 (0.0) & 7.3 (0.46) \\
MOM ($\delta$ = 0.001)  & 0.0 (0.0) & 0.0 (0.0) & 0.0 (0.0) & 8.0 (0.0) & 8.0 (0.0) \\
     \bottomrule
    \end{tabular}
 \end{table*}

 \section{Evaluation on real data} \label{sec:evaluation_real_regression}

In this section, we compare the results of our proposed and all baselines on two real data sets: crime data \citep{raftery1997bayesian,liang2008mixtures} and ozone data  \citep{garcia2013sampling}.
Details of all datasets, preprocessing, and additional results on GDP growth data (SDM) \citep{sala2004determinants} are given in Appendix D. 

We also show the results for AIC, the extended Bayesian information criterion (EBIC), stability selection, EMVS and the robust objective prior \citep{bayarri2012criteria}, which we denote as ``GibbsBvs".
Note that EBIC with $\gamma = 0$, is equal to ordinary BIC \citep{schwarz1978estimating}.
For the experiments with the real data we use 100000 MCMC-samples for the proposed method, GibbsBvs, and the horseshoe prior.\footnote{Out of which 10\% are used for burn in.}


The results for the ozone and crime data are shown in Tables \ref{tab:realData_ozone} and \ref{tab:realData_crime}, respectively.
We see that the horseshoe method performs similar as in the simulated data, tending to select models with relatively many variables.
For ozone, our model suggests that the model using x6.x7, x6.x8, x7.x7, and x6.x6, have relatively high regression coefficients, but not all of them are together in one model, possibly due to high correlation. 
For crime, our model suggests that all variables should be considered as relevant, whereas in particular M, Ed, Po1, Ineq have high regression coefficients.

To further analyze the results of our proposed method, we inspect the top 10 model probabilities and variable inclusion probabilities calculated for $\delta =0$ and $\delta = 0.5$.
The model probabilities for ozone and crime  are shown in Tables \ref{tab:rankingModels_ozone} and \ref{tab:rankingModels_crime}, respectively.
Considering the low model probabilities, it is clear that there is no clearly winning model, and that care is needed when drawing conclusions from only the top model. 

In order to investigate the importance of each individual variable, we also show the variable inclusion probabilities for ozone and crime in Tables \ref{tab:rankingVars_ozone} and \ref{tab:rankingVars_crime}, respectively. In each of the Tables, we also show the results that were reported in previous studies. 
From the difference in the probabilities between previous studies, $\delta =0$ and $\delta = 0.5$,  we can draw some interesting conclusions.

\paragraph{Ozone data}
In Table \ref{tab:rankingVars_ozone}, we show the inclusion probabilities of the proposed method together with the results reported in \citep{garcia2013sampling}.
Comparing those results to the result of the proposed method, we find that the discrepancy between the results is not large, except in two cases.
First, the importance of the variable x9, including its interaction terms, is much higher in \citep{garcia2013sampling}.
Second, the squared term x7.x7 is considered as relevant by the proposed method, even when $\delta = 0.5$, which is in contrast to \citep{garcia2013sampling}, where an inclusion probability of only $45\%$ is reported.
Comparing the proposed method between $\delta = 0.0$ and $\delta = 0.5$, we see that the interaction variable x6.x8 is the most likely to be included for $\delta = 0.0$, with probability around 70\%. 
However, looking at the result with $\delta =0.5$, the effect size of x7.x7 is likely to be larger than x6.x8.

 \paragraph{Crime data}
In Table \ref{tab:rankingVars_crime}, we show the inclusion probabilities of the proposed method together with the results reported in \citep{liang2008mixtures}.
For the proposed method with $\delta =0$, we see good agreement with the results in \citep{liang2008mixtures}. 
This is in particular true with respect to the median probability model that includes all variables with probability larger than or equal to 0.5.
However, inspecting the inclusion probabilities for $\delta = 0.5$, there is not enough evidence that the effect size of Po2 and U2 is high.
 
   \begin{table*}[h]
 \center
  \caption{Selected variables for the ozone data. For the proposed method and horseshoe method, we denote by "MSE inc" the expected increase in mean squared error compared to choosing the full model.}
  \label{tab:realData_ozone}
  \begin{tabular}{ll}
  \toprule 
 method & selected variables \\
\midrule
  \footnotesize proposed ($\delta$ =  0.8, MSE inc = 37.31\%) & \footnotesize x7.x7 \\
  \footnotesize proposed ($\delta$  =  0.5, MSE inc = 19.5\%) & \footnotesize x6.x6, x6.x7 \\
  \footnotesize proposed ($\delta$  =  0.05, MSE inc = 5.43\%) & \footnotesize x6.x7, x6.x8, x7.x7 \\
  \footnotesize proposed ($\delta$  =  0.01, MSE inc = 4.91\%) & \footnotesize x6.x6, x6.x7, x6.x8 \\
  \footnotesize proposed ($\delta$  =  0.001, MSE inc =  4.94\%) & \footnotesize  x6.x6, x6.x7, x6.x8 \\
  \footnotesize proposed ($\delta$  =  0.0, MSE inc = 5.44\% ) & \footnotesize x6.x7, x6.x8, x7.x7 \\
  \midrule
   \footnotesize horseshoe ($\delta$  =  0.8, MSE inc = 15.47\%) & \footnotesize  x6.x7, x7.x7, x7.x10 \\ 
  \footnotesize horseshoe ($\delta$  =  0.5, MSE inc = 5.11\%) & \footnotesize  x6.x7, x6.x8, x7.x7, x7.x8, x7.x10 \\ 
  \footnotesize horseshoe ($\delta$  =  0.05, MSE inc = 0.0\%) & \footnotesize  all except x5, x4.x5, x4.x8, x5.x8, x6.x9, x6.x10, x8.x8, x8.x9, x9.x10 \\
    \footnotesize horseshoe ($\delta$  =  0.01, MSE inc = 0.0\%) & \footnotesize  all except x5.x8, x6.x9 \\ 
  \footnotesize horseshoe ($\delta$  =  0.001, MSE inc = 0.0\%) & \footnotesize  all except x6.x9 \\
  \footnotesize horseshoe ($\delta$  =  0.0, MSE inc = 0.0\%) & \footnotesize  all \\
    \midrule
    \footnotesize GibbsBvs & \footnotesize x6.x6, x6.x7, x6.x8  \\
\footnotesize EMVS & \footnotesize none \\
\footnotesize SSLASSO & \footnotesize x6.x7, x6.x8, x7.x7 \\
\footnotesize MOM ($\delta = 0.8$) & \footnotesize x6.x6, x6.x7, x6.x8 \\
\footnotesize MOM ($\delta = 0.5$) & \footnotesize  x6.x6, x6.x7, x6.x8 \\
\footnotesize MOM ($\delta = 0.05$) & \footnotesize all \\
\footnotesize AIC & \footnotesize x9, x4.x4, x6.x7, x6.x8, x7.x7, x7.x8, x7.x10, x8.x10, x9.x9  \\
\footnotesize EBIC ($\gamma = 0$) & \footnotesize x6.x7, x6.x8, x7.x7, x7.x8, x7.x10, x8.x10, x9.x9 \\
\footnotesize EBIC ($\gamma = 0.5$) & \footnotesize x6.x7, x7.x7, x7.x8, x7.x10, x9.x9 \\
\footnotesize EBIC ($\gamma = 1.0$) & \footnotesize x4.x8, x6.x7, x7.x7 \\
\footnotesize stability ($q = 0.1\cdot d$) & \footnotesize x7.x7 \\
\footnotesize stability ($q = 0.5\cdot d$) & \footnotesize x7.x10 \\
\footnotesize stability ($q = 0.8\cdot d$) & \footnotesize none \\
       \bottomrule
    \end{tabular}
 \end{table*}
 
   \begin{table*}[h]
 \center
  \caption{Selected variables for the crime data. For the proposed method and horseshoe method, we denote by "MSE inc" the expected increase in mean squared error compared to choosing the full model.}
  \label{tab:realData_crime}
  \begin{tabular}{ll}
  \toprule 
 method & selected variables \\
\midrule
  \footnotesize proposed ($\delta$  =  0.8, MSE inc = 65.62\%) & \footnotesize  Po1, Ineq \\
  \footnotesize proposed ($\delta$  =  0.5, MSE inc = 21.97\%) & \footnotesize  M, Ed, Po1, Ineq \\
  \footnotesize proposed ($\delta$  =  0.05, MSE inc = 0.0\%) & \footnotesize  all \\
  \footnotesize proposed ($\delta$  =  0.01, MSE inc = 0.0\%) & \footnotesize all \\
  \footnotesize proposed ($\delta$  =  0.001, MSE inc = 0.0\%) & \footnotesize all  \\
  \footnotesize proposed ($\delta$  =  0.0, MSE inc = 0.0\%) & \footnotesize all \\
  \midrule
  \footnotesize horseshoe ($\delta$  =  0.8, MSE inc = 17.23\%) & \footnotesize M, Ed, Po1, Po2, NW, Ineq, Prob \\ 
  \footnotesize horseshoe ($\delta$  =  0.5, MSE inc = 3.07\%) & \footnotesize all except So, LF, M.F, Pop, U1, Time \\
  \footnotesize horseshoe ($\delta$  =  0.05, MSE inc = 0.0\%) & \footnotesize all except M.F \\
  \footnotesize horseshoe ($\delta$  =  0.01, MSE inc = 0.0\%) & \footnotesize all except M.F \\
  \footnotesize horseshoe ($\delta$  =  0.001, MSE inc = 0.0\%) & \footnotesize  all \\
  \footnotesize horseshoe ($\delta$  =  0.0, MSE inc = 0.0\%) & \footnotesize all \\
  \midrule
  \footnotesize GibbsBvs & \footnotesize all  \\
\footnotesize EMVS & \footnotesize Po1, Ineq \\
\footnotesize SSLASSO & \footnotesize Ed, Po1, NW, Ineq \\
\footnotesize MOM ($\delta = 0.8$) & \footnotesize Po1, Ineq \\
\footnotesize MOM ($\delta = 0.5$) & \footnotesize  Po1, Ineq \\
\footnotesize MOM ($\delta = 0.05$) & \footnotesize all \\
\footnotesize AIC & \footnotesize all except So, Po2, M.F, U1 \\
\footnotesize EBIC ($\gamma = 0$) & \footnotesize all except So, Po2, LF, Pop, U1, GDP, Time \\
\footnotesize EBIC ($\gamma = 0.5$) & \footnotesize M, Ed, Po1, M.F, NW, Ineq, Prob  \\
\footnotesize EBIC ($\gamma = 1.0$) & \footnotesize Po1, NW \\
\footnotesize stability ($q = 0.1\cdot d$) & \footnotesize Po1 \\
\footnotesize stability ($q = 0.5\cdot d$) & \footnotesize NW \\
\footnotesize stability ($q = 0.8\cdot d$) & \footnotesize none \\
       \bottomrule
    \end{tabular}
 \end{table*}

   \begin{table*}[h]
 \center
  \caption{Top 10 selected models using the proposed method with $\delta = 0.0$ and $\delta = 0.5$ for the ozone data. Last column also shows the highest posterior probability model reported in \citep{garcia2013sampling} using a $g$-prior where inclusion probabilities are calculated exactly (i.e. no MCMC).}
  \label{tab:rankingModels_ozone}
  \begin{tabular}{ll}
  \toprule 
         model & probability \\
\midrule
\multicolumn{2}{c}{$\delta = 0.5$} \\
\midrule
x6.x6, x6.x7 & 0.066 \\
x6.x7, x7.x7 & 0.034 \\
x4.x10, x7.x7, x7.x10 & 0.027 \\
x7.x7 & 0.026 \\
x10, x4.x7, x7.x10 & 0.02 \\
x4.x7, x4.x10, x7.x10 & 0.019 \\
x10, x7.x7, x7.x10 & 0.019 \\
x7, x6.x7, x7.x7 & 0.016 \\
x7.x7, x7.x10 & 0.015 \\
x6.x6, x6.x7, x7.x8 & 0.013 \\
\midrule
      \multicolumn{2}{c}{$\delta = 0.0$} \\
\midrule
x6.x7, x6.x8, x7.x7 & 0.031 \\
x6.x6, x6.x7, x6.x8 & 0.029 \\
x10, x6.x7, x6.x8, x7.x7, x7.x10 & 0.018 \\
x4.x10, x6.x7, x6.x8, x7.x7, x7.x10 & 0.018 \\
x4.x6, x4.x10, x6.x8, x7.x7, x7.x10 & 0.016 \\
x10, x4.x6, x6.x8, x7.x7, x7.x10 & 0.013 \\
x6.x7, x7.x7, x7.x8 & 0.01 \\
x6, x4.x10, x6.x8, x7.x7, x7.x10 & 0.01 \\
x4.x6, x6.x8, x7.x7 & 0.009 \\
x6, x6.x8, x7.x7 & 0.009 \\
\midrule
      \multicolumn{2}{c}{Gracia-Donato} \\
\midrule
x10, x4.x6, x6.x8, x7.x7, x7.x10 & 0.0009 \\
       \bottomrule
    \end{tabular}
 \end{table*}

\begin{table*}[h]
 \center
  \caption{All variable inclusion probabilities using the proposed method with $\delta = 0.0$ and $\delta = 0.5$ for the ozone data. Last column also shows the results reported in \citep{garcia2013sampling} using a $g$-prior where inclusion probabilities are calculated exactly (i.e. no MCMC).}
  \label{tab:rankingVars_ozone}
  \begin{tabular}{llll}
  \toprule 
variable & $\delta = 0.5$  & $\delta = 0.0$ & Gracia-Donato \\
\midrule
 x7.x7 & 0.58 & 0.67 & 0.450 \\ 
x6.x7 & 0.568 & 0.603 & 0.636 \\
x7.x10 & 0.5 & 0.649 & 0.743 \\
x6.x6 & 0.313 & 0.245 & 0.532 \\
x4.x10 & 0.233 & 0.334 &  0.361 \\
x6.x8 & 0.226 & 0.702 & 0.560 \\  
x10 & 0.226 & 0.291 & 0.368 \\
x4.x7 & 0.212 & 0.234 & 0.252 \\
x7.x8 & 0.179 & 0.279 &  0.349 \\
x4.x6 & 0.164 & 0.295 & 0.325 \\
x6 & 0.139 & 0.246 &  0.297 \\
x7 & 0.133 & 0.16 &  0.195 \\
x7.x9 & 0.09 & 0.072 &  0.431 \\
x8 & 0.076 & 0.139 & 0.200 \\
x4.x9 & 0.064 & 0.059 &  0.301 \\
x4.x8 & 0.064 & 0.132 &  0.208 \\
x9.x9 & 0.059 & 0.156 & 0.434 \\ 
x9 & 0.053 & 0.056 &  0.291 \\
x8.x10 & 0.037 & 0.112 & 0.236 \\
x10.x10 & 0.028 & 0.07 & 0.117 \\
x8.x8 & 0.028 & 0.067 & 0.142 \\
x8.x9 & 0.019 & 0.034 & 0.263 \\
x5.x10 & 0.019 & 0.036 &  0.124 \\
x6.x10 & 0.017 & 0.052 &  0.115 \\
x6.x9 & 0.012 & 0.036 & 0.126 \\
x4.x4 & 0.011 & 0.032 & 0.164 \\
x5.x6 & 0.011 & 0.027 &  0.107 \\
x4 & 0.011 & 0.031 &  0.164 \\
x5.x8 & 0.009 & 0.031 &  0.098 \\
x5.x5 & 0.008 & 0.024 &  0.124 \\
x5.x7 & 0.008 & 0.025 & 0.094 \\
x9.x10 & 0.007 & 0.024 & 0.103 \\
x5 & 0.006 & 0.019 &  0.096 \\
x4.x5 & 0.006 & 0.02 & 0.095 \\
x5.x9 & 0.005 & 0.022 &  0.088 \\
       \bottomrule
    \end{tabular}
 \end{table*}

   \begin{table*}[h]
 \center
  \caption{Top 10 selected models using the proposed method with $\delta = 0.0$ and $\delta = 0.5$ for the crime data. }
  \label{tab:rankingModels_crime}
  \begin{tabular}{ll}
  \toprule 
         model & probability \\
\midrule
\multicolumn{2}{c}{$\delta = 0.5$} \\
\midrule
M, Ed, Po1, Ineq & 0.021 \\
M, Ed, Po1, NW, Ineq, Prob & 0.017 \\
Po1, Ineq & 0.017 \\
Ed, Po1, Ineq & 0.016 \\
M, Ed, Po1, NW, U2, Ineq, Prob & 0.015 \\
M, Ed, Po1, Ineq, Prob & 0.015 \\
Ed, Po1, NW, Ineq, Prob & 0.013 \\
M, Ed, Po1, U2, Ineq & 0.011 \\
M, Ed, Po1, U2, Ineq, Prob & 0.011 \\
M, Ed, Po1, NW, Ineq, Prob, Time & 0.011 \\
\midrule
      \multicolumn{2}{c}{$\delta = 0.0$} \\
\midrule
all  & 0.02 \\
M, Ed, Po1, Ineq & 0.01 \\
M, Ed, Po1, NW, U2, Ineq, Prob & 0.01 \\
Ed, Po1, Ineq & 0.008 \\
M, Ed, Po1, NW, Ineq, Prob & 0.008 \\
all except So, Po2, LF, M.F, Pop, U1, GDP & 0.008 \\
Po1, Ineq & 0.007 \\
all except So, LF, M.F, Pop, U1, GDP, Time & 0.007 \\
M, Ed, Po1, NW, Ineq, Prob, Time & 0.007 \\
M, Ed, Po1, U2, Ineq, Prob & 0.007 \\
       \bottomrule
    \end{tabular}
 \end{table*}

    \begin{table*}[h]
 \center
  \caption{All variable inclusion probabilities using the proposed method with $\delta = 0.5$ and $\delta = 0.0$ for the crime data. Last column also shows the results reported in \citep{liang2008mixtures} with the Zellner-Siow Prior with the null-model as the reference model.}
  \label{tab:rankingVars_crime}
  \begin{tabular}{llll}
  \toprule 
variable & $\delta = 0.5$  & $\delta = 0.0$ & Liang \\
\midrule
Ineq & 0.993 & 0.995 & 1.0 \\
Ed & 0.906 & 0.943 & 0.97 \\
Prob & 0.758 & 0.833 & 0.90 \\
Po1 & 0.742 & 0.792 & 0.67 \\
M & 0.731 & 0.808 & 0.85 \\
NW & 0.604 & 0.711 & 0.69 \\
Po2 & 0.52 & 0.591 & 0.45 \\
U2 & 0.425 & 0.557 & 0.61 \\
GDP & 0.381 & 0.481 & 0.36 \\
Time & 0.256 & 0.395 & 0.37 \\
Pop & 0.244 & 0.368 & 0.37 \\
So & 0.207 & 0.32 & 0.27 \\
U1 & 0.134 & 0.269 & 0.25 \\
M.F & 0.12 & 0.238 & 0.20 \\
LF & 0.115 & 0.233 & 0.20 \\
       \bottomrule
    \end{tabular}
 \end{table*}

 \section{Conclusions} \label{sec:conclusions_regression}
 
 We proposed a new type of spike-and-slab prior that is particularly well suited for the situation where there are small negligible, but non-zero regression coefficients (quasi-sparseness). 
 These small negligible regression coefficients are considered as noise, since they can lead to the selection of overly complex models (i.e. models with many variables), although, only few variables should be considered as practically relevant.
The proposed method uses disjunct support priors on the regression coefficients with a threshold  parameter $\delta > 0$ in order to ignore small coefficients. 
We showed that in the quasi-sparse setting, the proposed method leads to consistent Bayes factors, which is not the case for full support priors as originally proposed in \citep{chipman2001practical}, and the moment matching priors (MOM) \citep{johnson2010use,johnson2012bayesian,rossell2017nonlocal}.
  
 Due to the non-conjugacy of the priors proposed by our method, estimating the marginal likelihood explicitly is computationally infeasible. 
 We therefore introduced a latent variable indicator vector $\mathbf{z}$, and proposed an efficient Gibbs sampler to sample from its posterior distribution.
 
For synthetic data with ground truth, we showed that the proposed method leads to good model selection performance in various settings: with/without noise and low/high dimensions.
For real data, we showed that by inspecting the model and variable inclusion probabilities for different threshold values $\delta$, we can draw interesting conclusions about the effect size (the absolute magnitude) of regression coefficients.
Together with an estimate of the mean squared error (MSE) of the final model, this allows for a trade-off between sparsity and prediction accuracy, similar to the practical advise given in \citep{hahn2015decoupling}.

\section*{Appendix A: Asymptotic Results}
\begin{lemma}  \label{lemma:uniqueMaximum}
The function
\begin{align*}
\boldsymbol{\theta} \rightarrow \E_{\mathbf{x}} \Big[ g(\boldsymbol{\theta}) \Big]  \, ,
\end{align*}
has a unique maximum, where 
\begin{align*}
g(\boldsymbol{\theta}) := \E_{y \sim p(y | \boldsymbol{\theta}, \mathbf{x})} \Big[  \log p(y | \boldsymbol{\theta}, \mathbf{x}) \Big] \, .
\end{align*}
and $\mathbf{x}$ is distributed according to some non-degenerated distribution with mean zero and positive definite covariance matrix $C$.
\end{lemma}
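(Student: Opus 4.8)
The plan is to show that the unique maximiser is the true parameter $\boldsymbol{\theta}_t=(\boldsymbol{\beta}_t,\sigma^2_{r,t})$, using the information (Gibbs) inequality to locate the maximum at $\boldsymbol{\theta}_t$ and identifiability of the Gaussian linear model for uniqueness. Here I read $g$ as the expected log-density of one observation under the true sampling law, i.e.
\begin{align*}
\E_{\mathbf{x}}\big[g(\boldsymbol{\theta})\big]
= \E_{\mathbf{x}}\,\E_{y\sim p(y\mid\mathbf{x},\boldsymbol{\theta}_t)}\big[\log p(y\mid\mathbf{x},\boldsymbol{\theta})\big],
\end{align*}
which is the population limit of $\tfrac1n\sum_{i=1}^n\log p(y_i\mid\mathbf{x}_i,\boldsymbol{\theta})$ used in the proof of Theorem~\ref{prop:proofBFproposed_regression}.

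First I would record, for the normal linear model $p(y\mid\mathbf{x},\boldsymbol{\theta})=N(y\mid\mathbf{x}^T\boldsymbol{\beta},\sigma^2_r)$ and using $\E[\mathbf{x}]=\mathbf{0}$, $\mathrm{Cov}(\mathbf{x})=C$, that
\begin{align*}
\E_{\mathbf{x}}\big[g(\boldsymbol{\theta})\big]
= -\tfrac12\log(2\pi\sigma^2_r)
-\frac{1}{2\sigma^2_r}\Big(\sigma^2_{r,t}+(\boldsymbol{\beta}-\boldsymbol{\beta}_t)^T C (\boldsymbol{\beta}-\boldsymbol{\beta}_t)\Big),
\end{align*}
so $\E_{\mathbf{x}}[g(\boldsymbol{\theta})]$ is finite for every $\boldsymbol{\theta}$ with $\sigma^2_r>0$ because $C$ is finite. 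Then, more cleanly, I would write
\begin{align*}
\E_{\mathbf{x}}\big[g(\boldsymbol{\theta})\big]-\E_{\mathbf{x}}\big[g(\boldsymbol{\theta}_t)\big]
= -\,\E_{\mathbf{x}}\Big[\mathrm{KL}\big(p(\cdot\mid\mathbf{x},\boldsymbol{\theta}_t)\,\big\|\,p(\cdot\mid\mathbf{x},\boldsymbol{\theta})\big)\Big]\le 0,
\end{align*}
with equality if and only if $p(\cdot\mid\mathbf{x},\boldsymbol{\theta})=p(\cdot\mid\mathbf{x},\boldsymbol{\theta}_t)$ for $p(\mathbf{x})$-almost every $\mathbf{x}$. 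This already shows the supremum is attained, and only at points $\boldsymbol{\theta}$ whose conditional law agrees with the truth almost everywhere.

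It then remains to turn ``conditional laws agree a.e.'' into ``$\boldsymbol{\theta}=\boldsymbol{\theta}_t$''. Two Gaussian densities coincide iff their means and variances coincide, so the equality forces $\sigma^2_r=\sigma^2_{r,t}$ and $\mathbf{x}^T(\boldsymbol{\beta}-\boldsymbol{\beta}_t)=0$ for $p(\mathbf{x})$-a.e.\ $\mathbf{x}$; since $\E[\mathbf{x}^T(\boldsymbol{\beta}-\boldsymbol{\beta}_t)]=0$, the latter is equivalent to $(\boldsymbol{\beta}-\boldsymbol{\beta}_t)^T C(\boldsymbol{\beta}-\boldsymbol{\beta}_t)=0$, and positive-definiteness of $C$ yields $\boldsymbol{\beta}=\boldsymbol{\beta}_t$. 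Hence $\boldsymbol{\theta}=\boldsymbol{\theta}_t$ and the maximiser is unique.

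I do not anticipate a real difficulty; the only step needing care is this last identifiability reduction, which is precisely where the hypotheses ``non-degenerated distribution'' and ``positive definite covariance matrix $C$'' enter (a rank-deficient $C$ would leave $\boldsymbol{\beta}$ only partially identified), and where the mean-zero assumption is used to pass from ``$\mathbf{x}^T(\boldsymbol{\beta}-\boldsymbol{\beta}_t)=0$ a.s.'' to ``vanishing second moment''.
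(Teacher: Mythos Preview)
Your proof is correct. Both you and the paper first arrive at the explicit formula
\[
\E_{\mathbf{x}}[g(\boldsymbol{\theta})]=-\tfrac12\log(2\pi\sigma_r^2)-\tfrac{1}{2\sigma_r^2}\big(\sigma_{r,t}^2+(\boldsymbol{\beta}-\boldsymbol{\beta}_t)^TC(\boldsymbol{\beta}-\boldsymbol{\beta}_t)\big),
\]
but the route to uniqueness differs. The paper proceeds by direct calculus: it reparametrises via $\tau=\sigma_r^{-2}$, minimises the quadratic form in $\boldsymbol{\beta}$ using positive definiteness of $C$, and then notes that the remaining function $\tfrac12\log\tau-\tfrac{\tau}{2\tau_t}$ is strictly concave in $\tau$ with unique maximiser $\tau_t$. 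You instead invoke the information inequality, writing the gap as $-\E_{\mathbf{x}}[\mathrm{KL}(p(\cdot\mid\mathbf{x},\boldsymbol{\theta}_t)\,\|\,p(\cdot\mid\mathbf{x},\boldsymbol{\theta}))]\le 0$, and then reduce equality to identifiability of the Gaussian family. Your argument is more conceptual and transparently generalises to any identifiable regular family; the paper's is more elementary and self-contained. Both ultimately hinge on positive definiteness of $C$ to force $\boldsymbol{\beta}=\boldsymbol{\beta}_t$, and you are right to flag that the lemma statement's ``$y\sim p(y\mid\boldsymbol{\theta},\mathbf{x})$'' must be read as ``$y\sim p(y\mid\boldsymbol{\theta}_t,\mathbf{x})$'' for the claim to be meaningful---the paper's own proof silently adopts the same reading.
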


\begin{proof}
First of all, let us do a change of variable using the one-to-one mapping $\tau := \sigma_r^{-2}$.
For simplicity, let us denote $\boldsymbol{\theta} := (\tau,  \boldsymbol{\beta})$, and the true parameter vector as $\boldsymbol{\theta}_t$.
We have 
\begin{align*}
\log p(y | \boldsymbol{\theta}, \mathbf{x}) = \frac{1}{2} \log \tau - \frac{1}{2} \tau (y - \mathbf{x}^T \boldsymbol{\beta})^2 - \frac{1}{2} \log 2\pi \,, 
\end{align*}
and 
\begin{align*}
\E_{\mathbf{x}} \Big[ g(\boldsymbol{\theta}) \Big] = \E_{\mathbf{x}, y} \Big[ \log p(y | \boldsymbol{\theta}, \mathbf{x})  \Big] 
= \frac{1}{2} \log \tau - \frac{1}{2} \tau \E_{\mathbf{x}, y} \Big[  (y - \mathbf{x}^T \boldsymbol{\beta})^2 \Big] - \frac{1}{2} \log 2\pi \,. 
\end{align*}
Since $C$ is positive definite, we have that $\E_{\mathbf{x}, y} \Big[  (y - \mathbf{x}^T \boldsymbol{\beta})^2 \Big]$ has a unique minimum at $\boldsymbol{\beta} = \boldsymbol{\beta}_t$.
To see this note that 
\begin{align*}
\E_{\mathbf{x}, y} \Big[  (y - \mathbf{x}^T \boldsymbol{\beta})^2 \Big] 
&= \E_{\mathbf{x}, \epsilon} \Big[  (\mathbf{x}^T \boldsymbol{\beta}_t + \epsilon - \mathbf{x}^T \boldsymbol{\beta})^2  \Big] \\
&= \E_{\mathbf{x}, \epsilon} \Big[ \big( \mathbf{x}^T (\boldsymbol{\beta}_t - \boldsymbol{\beta}) + \epsilon \big)^2  \Big] \\
&= \E_{\mathbf{x}} \Big[ \big( \mathbf{x}^T (\boldsymbol{\beta}_t - \boldsymbol{\beta}) \big)^2 \Big] + 2 \E_{\mathbf{x}, \epsilon} \Big[\epsilon \cdot  \mathbf{x}^T (\boldsymbol{\beta}_t - \boldsymbol{\beta})  \Big] + E_{\epsilon} \Big[  \epsilon ^2  \Big] \\
&= (\boldsymbol{\beta}_t - \boldsymbol{\beta})^T \E_{\mathbf{x}} \Big[ \mathbf{x} \mathbf{x}^T \Big]  (\boldsymbol{\beta}_t - \boldsymbol{\beta}) + 2 \E_{\epsilon} \Big[\epsilon \Big] \cdot  \E_{\mathbf{x}} \Big[ \mathbf{x}^T \Big](\boldsymbol{\beta}_t - \boldsymbol{\beta})   + E_{\epsilon} \Big[  \epsilon ^2  \Big] \\
&= (\boldsymbol{\beta}_t - \boldsymbol{\beta})^T C (\boldsymbol{\beta}_t - \boldsymbol{\beta}) + \sigma_{r, t}^2 \, .
\end{align*}
where we used that $\E_{\mathbf{x}} \Big[ \mathbf{x} \Big] = \mathbf{0}$, and $C = \E_{\mathbf{x}} \Big[ \mathbf{x} \mathbf{x}^T \Big]$. 
For $\boldsymbol{\beta} = \boldsymbol{\beta}_t$, we have $\E_{\mathbf{x}, y} \Big[  (y - \mathbf{x}^T \boldsymbol{\beta})^2 \Big] = \frac{1}{\tau_t}$.
Furthermore, since 
\begin{align*}
\E_{\mathbf{x}} \Big[ g(\tau, \boldsymbol{\beta}_t) \Big] = \frac{1}{2} \log \tau - \frac{1}{2} \tau \frac{1}{\tau_t} - \frac{1}{2} \log 2\pi 
\end{align*} 
is strictly concave with respect to $\tau$, with unique maximum $\tau_r$, 
we have that the unique maximum of $\E_{\mathbf{x}} \Big[ g(\boldsymbol{\theta}) \Big]$ is given by $(\tau_t, \boldsymbol{\beta}_t)$. 
\end{proof}

\section*{Appendix B: Slice Sampler}

First let us introduce the auxiliary random variable $U$, and the following joint distribution:
\begin{align*}
p(U, \sigma^2_1) = 
\begin{cases}
\frac{1}{L} \cdot  \text{Inv-}\chi^2(\sigma^2_1 |  \tilde{\nu} , \tilde{\eta}^2) & \text{if } 0 <  U < h(\sigma^2_1),\\
0 & \text{else. }
\end{cases} \, ,
 \end{align*} 
 where $L$ is an appropriate normalization constant.
 We then have that 
  \begin{align*}
p(\sigma^2_1) &= \int_0^{h(\sigma^2_1)} p(u, \sigma^2_1) du  \\ 
&= \frac{1}{L} \cdot  \text{Inv-}\chi^2(\sigma^2_1 |   \tilde{\nu} , \tilde{\eta}^2) \int_0^{h(\sigma^2_1)} 1 du \\
&= \frac{1}{L} \cdot  \text{Inv-}\chi^2(\sigma^2_1 |   \tilde{\nu} , \tilde{\eta}^2) \big[ u \big]_0^{h(\sigma^2_1)} \\
&= \frac{1}{L} \cdot  h(\sigma^2_1) \cdot \text{Inv-}\chi^2(\sigma^2_1 |   \tilde{\nu} , \tilde{\eta}^2)  \\
&\propto h(\sigma^2_1) \cdot \text{Inv-}\chi^2(\sigma^2_1 |   \tilde{\nu} , \tilde{\eta}^2)  \\
 \end{align*}
 
 In order to sample from the joint distribution $p(U, \sigma^2_1)$, we employ a Gibbs sampler, where
 \begin{align*}
p(U | \sigma^2_1) = \text{Uniform}( [0, h(\sigma^2_1)] )
 \end{align*}
 
 and 
 
 \begin{align*}
p(\sigma^2_1 | u) = 
\begin{cases}
\frac{1}{\tilde{L}} \cdot  \text{Inv-}\chi^2(\sigma^2_1 |  \tilde{\nu} , \tilde{\eta}^2) & \text{if }  h(\sigma^2_1) > u,\\
0 & \text{else. }
\end{cases} \, ,
 \end{align*} 
 for an appropriate normalization constant $\tilde{L}$.

\section*{Appendix C: Additional results synthetic data}

We show the results for $\delta \in \{0.8, 0.5, 0.05, 0.01, 0.001, 0.0\}$, and the results of the most parsimonious model that is estimated to lead to an increase in MSE of not more than 5\%.
For all methods based on MCMC we use 10000 samples, out of which 10\% are used for burn in.

As our first baseline, we use the robust objective prior proposed in \citep{bayarri2012criteria} together with a Gibbs sampler to explore the space of models, which we denote as ``GibbsBvs".\footnote{Implemented in the R package 'BayesVarSel'. As suggested by the authors, we use the g-Zellner prior \citep{zellner1986assessing} in cases where the robust prior from \citep{bayarri2012criteria} fails.}
Furthermore, we use the Gaussian and Laplace spike-and-slab priors combined with EM-algorithm as proposed in \citep{rovckova2014emvs,rovckova2018spike} which we denote as "EMVS" and "SSLASSO", respectively.\footnote{Implemented in the R package 'EMVS' and 'SSLASSO'.}
Note that EMVS and SSLASSO do not provide model or variable inclusion posterior probabilities.

Finally, we include also three frequentist methods for model search.
As a first frequentist method, we use the popular Least Angle Regression (LARS) method \citep{efron2004least} to get a set of candidate models.
We then select the model using the Extended Bayesian information criterion (EBIC) with $\gamma \in \{0, 0.5, 1\}$ \citep{chen2008extended,foygel2010extended}, or the Akaike information criterion (AIC) \citep{akaike1973information}.
Note that EBIC with $\gamma = 0$, is equal to the Bayesian information criterion (BIC) \citep{schwarz1978estimating}.
As a third frequentist method, we use linear regression with Lasso \citep{tibshirani1996regression} combined with stability selection \citep{meinshausen2010stability}.
Stability selection has two hyper-parameters that need to be specified: the "upper bound for the per-family error rate" (PFER) and "the number of (unique) selected variables" (denoted by $q$) 
as in the R package 'stabs'. For PFER we set always 1. However, we found that stability selection can be sensitive to the choice of $q$, and therefore show all results for three different values. 

We evaluate all methods in terms of F1-Score. All experiments are repeated 10 times and we report average and standard deviations (shown in brackets).
For large $n$, GibbsBvs did not execute correctly, which we mark as "-". For the high-dimensional setting GibbsBvs did not finish computation due to high memory requirements.
When we selected the threshold value $\delta$ automatically by using the estimated increase in MSE, we mark this in all Tables by "$^*$".
If not reported otherwise, we use for all baselines the default settings.

\paragraph{Low-dimensional setting}
The results for the low-dimensional setting, with and without noise, are shown in Tables \ref{tab:simDataNoise0_lowDimensional}, \ref{tab:simDataNoise02_lowDimensional} and \ref{tab:simDataNoise05_lowDimensional}.
Overall, we see that the proposed method and the horseshoe prior method perform best. 

GibbsBvs, SSLASSO, EBIC and Stability selection (with $q \geq 4$) perform good for no noise or small noise. However, for $\eta = 0.5$, GibbsBvs, SSLASSO, EBIC and Stability Selection start to select more irrelevant variables with increasing sample size $n$.
Asymptotically, all four methods are expected to select all variables with coefficient regressions $\beta_j \neq 0$, no matter how small $\beta_j$ is. 
However, if the sample size is small ($n \leq 100$), then all three methods are not influenced by the noise, i.e. they ignore the negligible small regression coefficients.

AIC performs similar to EBIC for $n \leq 100$, but for larger sample sizes it tends to select too many variables, even in the no-noise setting.
This is not too surprising, since it is well known that AIC is not model selection consistent (see e.g. \citep{yang2005can}).

Interestingly, in the noise setting ($\eta = 0.2$ and $\eta = 0.5$), even for large $n$, EMVS find the correct relevant variables. 
However, for small sample sizes EMVS tends to select too few variables. This suggests that EMVS has a strong inductive bias for sparse models, which can be helpful in the noise setting,
but is deteriorating performance for small to medium-sized $n$.

\paragraph{High-dimensional setting}
The results for the high-dimensional setting, with and without noise, are shown in Tables \ref{tab:simDataNoise0_highDimensional}, and \ref{tab:simDataNoise02_highDimensional}.
Overall, we see that the proposed method, SSLASSO, Stability selection (with $q \geq 50$)  and EMVS perform best. In this setting the EMVS seems to profit from its inductive bias for sparse models.
On the other hand, the horseshoe prior method performs somehow unsatisfactory, tending to select too many variables.
AIC and EBIC performed very poorly in this setting, selecting too many variables. One reason seems to stem from the numerical instability of the maximum likelihood estimate for $d \leq n$. As an ad-hoc remedy we tried to combine it with a ridge estimate, but this did not seem to help.

\paragraph{Analysis of different $\delta$}
In Tables \ref{tab:simDataNoise0_lowDimensional_deltaAnalysis}, \ref{tab:simDataNoise02_lowDimensional_deltaAnalysis}, and \ref{tab:simDataNoise05_lowDimensional_deltaAnalysis}, we show the results for different fixed $\delta$ in the low-dimensional setting, and in Tables \ref{tab:simDataNoise0_highDimensional_deltaAnalysis} and \ref{tab:simDataNoise02_highDimensional_deltaAnalysis} for the high-dimensional setting.
The proposed method is less sensitive to the choice of $\delta$ and tends to select sparse models even in the high-dimensional setting. However, as expected, the horseshoe prior method is highly sensitive to the choice of $\delta$.
 
 \paragraph{Comparison to moment matching priors}
In Tables \ref{tab:simDataNoise0_highDimensional_deltaAnalysisVS_MOM} and \ref{tab:simDataNoise02_highDimensional_deltaAnalysisVS_MOM}, we show the comparison of the proposed method and the product moment matching priors (MOM) from \citep{johnson2012bayesian} for the high-dimensional settings. Similar to the low-dimensional setting, we observe that for small values of the practical relevance threshold $\delta$, MOM tends to select too few or too many variables.

  \begin{table*}[h]
 \center
  \caption{Low-dimensional setting, $d = 8$ and $n \in \{10, 50, 100, 1000, 100000\}$. Evaluation results with no noise on regression coefficients.}
  \label{tab:simDataNoise0_lowDimensional}
  \begin{tabular}{llllll}
  \toprule 
      \multicolumn{6}{c}{F1-Scores} \\
      \midrule
 & 10 & 50 & 100 & 1000 & 100000 \\
\midrule
proposed$^*$  & 0.48 (0.18) & 0.88 (0.1) & 1.0 (0.0) & 1.0 (0.0) & 1.0 (0.0) \\
horseshoe$^*$  & 0.59 (0.1) & 0.93 (0.14) & 0.95 (0.09) & 1.0 (0.0) & 1.0 (0.0) \\
GibbsBvs & 0.48 (0.16) & 0.89 (0.09) & 0.95 (0.09) & 1.0 (0.0) & - \\
EMVS & 0.25 (0.25) & 0.13 (0.27) & 0.0 (0.0) & 0.96 (0.08) & 1.0 (0.0) \\
SSLASSO & 0.41 (0.29) & 0.88 (0.1) & 1.0 (0.0) & 1.0 (0.0) & 1.0 (0.0) \\
AIC & 0.49 (0.07) & 0.89 (0.1) & 0.88 (0.13) & 0.88 (0.09) & 0.85 (0.15) \\
EBIC ($\gamma = 0.0$)  & 0.49 (0.07) & 0.92 (0.09) & 0.96 (0.08) & 1.0 (0.0) & 1.0 (0.0) \\
EBIC ($\gamma = 0.5$)  & 0.53 (0.08) & 0.91 (0.1) & 1.0 (0.0) & 1.0 (0.0) & 1.0 (0.0) \\
EBIC ($\gamma = 1.0$)  & 0.47 (0.16) & 0.9 (0.1) & 1.0 (0.0) & 1.0 (0.0) & 1.0 (0.0) \\
stability ($q = 1$)  & 0.2 (0.24) & 0.3 (0.24) & 0.35 (0.23) & 0.5 (0.0) & 0.5 (0.0) \\
stability ($q = 4$)  & 0.1 (0.2) & 0.9 (0.1) & 0.98 (0.06) & 1.0 (0.0) & 1.0 (0.0) \\
stability ($q = 6$)  & 0.05 (0.15) & 0.91 (0.1) & 0.99 (0.04) & 1.0 (0.0) & 1.0 (0.0) \\
\midrule
      \multicolumn{6}{c}{Average number of selected variables} \\
          \midrule
 & 10 & 50 & 100 & 1000 & 100000 \\
\midrule
proposed$^*$  & 2.6 (2.58) & 2.4 (0.49) & 3.0 (0.0) & 3.0 (0.0) & 3.0 (0.0) \\
horseshoe$^*$  & 4.6 (2.2) & 3.7 (1.49) & 3.4 (0.66) & 3.0 (0.0) & 3.0 (0.0) \\
GibbsBvs & 5.1 (3.56) & 2.6 (0.66) & 3.4 (0.66) & 3.0 (0.0) & - \\
EMVS & 0.5 (0.5) & 0.3 (0.64) & 0.0 (0.0) & 2.8 (0.4) & 3.0 (0.0) \\
SSLASSO & 0.9 (0.7) & 2.4 (0.49) & 3.0 (0.0) & 3.0 (0.0) & 3.0 (0.0) \\
AIC & 6.2 (2.4) & 3.8 (0.75) & 4.0 (1.18) & 3.9 (0.7) & 4.3 (1.49) \\
EBIC ($\gamma = 0.0$)  & 6.2 (2.4) & 3.1 (0.7) & 3.3 (0.64) & 3.0 (0.0) & 3.0 (0.0) \\
EBIC ($\gamma = 0.5$)  & 5.2 (2.89) & 2.7 (0.64) & 3.0 (0.0) & 3.0 (0.0) & 3.0 (0.0) \\
EBIC ($\gamma = 1.0$)  & 3.8 (3.19) & 2.5 (0.5) & 3.0 (0.0) & 3.0 (0.0) & 3.0 (0.0) \\
stability ($q = 1$)  & 0.4 (0.49) & 0.6 (0.49) & 0.7 (0.46) & 1.0 (0.0) & 1.0 (0.0) \\
stability ($q = 4$)  & 0.2 (0.4) & 2.5 (0.5) & 2.9 (0.3) & 3.0 (0.0) & 3.0 (0.0) \\
stability ($q = 6$)  & 0.1 (0.3) & 2.7 (0.64) & 3.1 (0.3) & 3.0 (0.0) & 3.0 (0.0) \\
       \bottomrule
    \end{tabular}
 \end{table*}

   \begin{table*}[h]
 \center
  \caption{Low-dimensional setting, $d = 8$ and $n \in \{10, 50, 100, 1000, 100000\}$. Evaluation results with noise on regression coefficients $\eta = 0.2$.}
  \label{tab:simDataNoise02_lowDimensional}
  \begin{tabular}{llllll}
  \toprule 
      \multicolumn{6}{c}{F1-Scores} \\
      \midrule
 & 10 & 50 & 100 & 1000 & 100000 \\
\midrule
proposed$^*$  & 0.41 (0.16) & 0.95 (0.08) & 0.97 (0.07) & 1.0 (0.0) & 1.0 (0.0) \\
horseshoe$^*$  & 0.6 (0.21) & 0.88 (0.13) & 0.99 (0.04) & 1.0 (0.0) & 1.0 (0.0) \\
GibbsBvs & 0.49 (0.09) & 0.93 (0.09) & 0.97 (0.06) & 0.99 (0.04) & - \\
EMVS & 0.21 (0.26) & 0.05 (0.15) & 0.0 (0.0) & 1.0 (0.0) & 1.0 (0.0) \\
SSLASSO & 0.28 (0.23) & 0.86 (0.21) & 0.97 (0.07) & 1.0 (0.0) & 0.63 (0.03) \\
AIC & 0.56 (0.14) & 0.87 (0.09) & 0.87 (0.11) & 0.81 (0.15) & 0.58 (0.02) \\
EBIC ($\gamma = 0.0$)  & 0.56 (0.13) & 0.92 (0.09) & 0.97 (0.06) & 1.0 (0.0) & 0.63 (0.03) \\
EBIC ($\gamma = 0.5$)  & 0.55 (0.14) & 0.91 (0.12) & 0.99 (0.04) & 1.0 (0.0) & 0.63 (0.03) \\
EBIC ($\gamma = 1.0$)  & 0.54 (0.14) & 0.91 (0.12) & 0.97 (0.07) & 1.0 (0.0) & 0.64 (0.03) \\
stability ($q = 1$)  & 0.1 (0.2) & 0.35 (0.23) & 0.5 (0.0) & 0.5 (0.0) & 0.5 (0.0) \\
stability ($q = 4$)  & 0.1 (0.2) & 0.89 (0.16) & 0.98 (0.06) & 1.0 (0.0) & 0.93 (0.07) \\
stability ($q = 6$)  & 0.0 (0.0) & 0.89 (0.2) & 1.0 (0.0) & 0.99 (0.04) & 0.67 (0.0) \\
\midrule
      \multicolumn{6}{c}{Average number of selected variables} \\
          \midrule
 & 10 & 50 & 100 & 1000 & 100000 \\
\midrule
proposed$^*$  & 2.5 (2.06) & 2.9 (0.54) & 3.0 (0.45) & 3.0 (0.0) & 3.0 (0.0) \\
horseshoe$^*$  & 4.8 (2.23) & 3.8 (1.54) & 3.1 (0.3) & 3.0 (0.0) & 3.0 (0.0) \\
GibbsBvs & 5.7 (2.9) & 3.0 (0.63) & 3.2 (0.4) & 3.1 (0.3) & - \\
EMVS & 1.1 (2.02) & 0.1 (0.3) & 0.0 (0.0) & 3.0 (0.0) & 3.0 (0.0) \\
SSLASSO & 1.1 (0.94) & 2.7 (0.78) & 3.0 (0.45) & 3.0 (0.0) & 6.6 (0.49) \\
AIC & 6.6 (2.15) & 3.7 (0.9) & 4.0 (1.1) & 4.7 (1.49) & 7.3 (0.46) \\
EBIC ($\gamma = 0.0$)  & 6.5 (2.2) & 3.1 (0.7) & 3.2 (0.4) & 3.0 (0.0) & 6.6 (0.49) \\
EBIC ($\gamma = 0.5$)  & 5.1 (2.7) & 2.9 (0.54) & 3.1 (0.3) & 3.0 (0.0) & 6.6 (0.49) \\
EBIC ($\gamma = 1.0$)  & 4.5 (2.87) & 2.9 (0.54) & 3.0 (0.45) & 3.0 (0.0) & 6.4 (0.49) \\
stability ($q = 1$)  & 0.3 (0.46) & 0.7 (0.46) & 1.0 (0.0) & 1.0 (0.0) & 1.0 (0.0) \\
stability ($q = 4$)  & 0.2 (0.4) & 2.5 (0.67) & 2.9 (0.3) & 3.0 (0.0) & 3.5 (0.5) \\
stability ($q = 6$)  & 0.0 (0.0) & 2.7 (0.9) & 3.0 (0.0) & 3.1 (0.3) & 6.0 (0.0) \\
       \bottomrule
    \end{tabular}
 \end{table*}

    \begin{table*}[h]
 \center
  \caption{Low-dimensional setting, $d = 8$ and $n \in \{10, 50, 100, 1000, 100000\}$. Evaluation results with noise on regression coefficients $\eta = 0.5$.}
  \label{tab:simDataNoise05_lowDimensional}
  \begin{tabular}{llllll}
  \toprule 
      \multicolumn{6}{c}{F1-Scores} \\
      \midrule
 & 10 & 50 & 100 & 1000 & 100000 \\
\midrule
proposed$^*$  & 0.47 (0.18) & 0.93 (0.09) & 0.97 (0.07) & 1.0 (0.0) & 1.0 (0.0) \\
horseshoe$^*$  & 0.63 (0.19) & 0.88 (0.13) & 0.97 (0.06) & 1.0 (0.0) & 1.0 (0.0) \\
GibbsBvs & 0.49 (0.09) & 0.87 (0.14) & 0.96 (0.07) & 0.78 (0.09) & - \\
EMVS & 0.21 (0.26) & 0.05 (0.15) & 0.0 (0.0) & 0.98 (0.06) & 1.0 (0.0) \\
SSLASSO & 0.35 (0.24) & 0.9 (0.16) & 0.97 (0.07) & 0.88 (0.1) & 0.6 (0.0) \\
AIC & 0.56 (0.13) & 0.83 (0.09) & 0.9 (0.08) & 0.64 (0.08) & 0.55 (0.02) \\
EBIC ($\gamma = 0.0$)  & 0.56 (0.13) & 0.92 (0.09) & 0.96 (0.07) & 0.77 (0.14) & 0.6 (0.0) \\
EBIC ($\gamma = 0.5$)  & 0.56 (0.14) & 0.91 (0.12) & 0.95 (0.08) & 0.82 (0.14) & 0.6 (0.0) \\
EBIC ($\gamma = 1.0$)  & 0.55 (0.14) & 0.91 (0.12) & 0.97 (0.07) & 0.9 (0.1) & 0.6 (0.0) \\
stability ($q = 1$)  & 0.1 (0.2) & 0.35 (0.23) & 0.5 (0.0) & 0.5 (0.0) & 0.5 (0.0) \\
stability ($q = 4$)  & 0.0 (0.0) & 0.91 (0.16) & 0.98 (0.06) & 0.97 (0.06) & 0.9 (0.07) \\
stability ($q = 6$)  & 0.0 (0.0) & 0.9 (0.16) & 0.99 (0.04) & 0.91 (0.07) & 0.67 (0.0) \\
\midrule
      \multicolumn{6}{c}{Average number of selected variables} \\
          \midrule
 & 10 & 50 & 100 & 1000 & 100000 \\
\midrule
proposed$^*$  & 3.2 (2.23) & 3.0 (0.63) & 3.0 (0.45) & 3.0 (0.0) & 3.0 (0.0) \\
horseshoe$^*$  & 5.1 (2.62) & 3.8 (1.54) & 3.2 (0.4) & 3.0 (0.0) & 3.0 (0.0) \\
GibbsBvs & 5.6 (2.84) & 3.6 (1.62) & 3.3 (0.46) & 4.8 (0.75) & - \\
EMVS & 1.1 (2.02) & 0.1 (0.3) & 0.0 (0.0) & 2.9 (0.3) & 3.0 (0.0) \\
SSLASSO & 1.4 (1.02) & 2.7 (0.78) & 3.0 (0.45) & 3.9 (0.83) & 7.0 (0.0) \\
AIC & 6.5 (2.2) & 4.1 (1.14) & 3.7 (0.64) & 6.5 (1.02) & 7.9 (0.3) \\
EBIC ($\gamma = 0.0$)  & 6.5 (2.2) & 3.1 (0.7) & 3.3 (0.46) & 5.0 (1.26) & 7.0 (0.0) \\
EBIC ($\gamma = 0.5$)  & 5.7 (2.37) & 2.9 (0.54) & 3.1 (0.54) & 4.5 (1.28) & 7.0 (0.0) \\
EBIC ($\gamma = 1.0$)  & 4.3 (2.65) & 2.9 (0.54) & 3.0 (0.45) & 3.8 (0.87) & 7.0 (0.0) \\
stability ($q = 1$)  & 0.3 (0.46) & 0.7 (0.46) & 1.0 (0.0) & 1.0 (0.0) & 1.0 (0.0) \\
stability ($q = 4$)  & 0.0 (0.0) & 2.6 (0.66) & 2.9 (0.3) & 3.2 (0.4) & 3.7 (0.46) \\
stability ($q = 6$)  & 0.0 (0.0) & 2.7 (0.78) & 3.1 (0.3) & 3.6 (0.49) & 6.0 (0.0) \\
       \bottomrule
    \end{tabular}
 \end{table*}

  \begin{table*}[h]
 \center
  \caption{Low-dimensional setting, $d = 8$ and $n \in \{10, 50, 100, 1000, 100000\}$. Evaluation results with no noise on regression coefficients. Comparison of the proposed method and horseshoe for different $\delta$.}
  \label{tab:simDataNoise0_lowDimensional_deltaAnalysis}
  \begin{tabular}{llllll}
  \toprule 
      \multicolumn{6}{c}{F1-Scores} \\
      \midrule
 & 10 & 50 & 100 & 1000 & 100000 \\
\midrule
proposed ($\delta$ = 0.8)  & 0.47 (0.18) & 0.88 (0.1) & 1.0 (0.0) & 1.0 (0.0) & 1.0 (0.0) \\
proposed ($\delta$ = 0.5)  & 0.47 (0.18) & 0.88 (0.1) & 1.0 (0.0) & 1.0 (0.0) & 1.0 (0.0) \\
proposed ($\delta$ = 0.05)  & 0.47 (0.19) & 0.89 (0.09) & 1.0 (0.0) & 1.0 (0.0) & 1.0 (0.0) \\
proposed ($\delta$ = 0.01)  & 0.47 (0.19) & 0.89 (0.09) & 1.0 (0.0) & 1.0 (0.0) & 1.0 (0.0) \\
proposed ($\delta$ = 0.001)  & 0.48 (0.19) & 0.87 (0.09) & 0.98 (0.08) & 1.0 (0.0) & 1.0 (0.0) \\
proposed ($\delta$ = 0.0)  & 0.47 (0.19) & 0.87 (0.09) & 0.99 (0.04) & 1.0 (0.0) & 1.0 (0.0) \\
proposed$^*$  & 0.48 (0.18) & 0.88 (0.1) & 1.0 (0.0) & 1.0 (0.0) & 1.0 (0.0) \\
\midrule
horseshoe ($\delta$ = 0.8)  & 0.67 (0.14) & 0.91 (0.1) & 0.99 (0.04) & 1.0 (0.0) & 1.0 (0.0) \\
horseshoe ($\delta$ = 0.5)  & 0.7 (0.14) & 0.94 (0.08) & 0.95 (0.09) & 1.0 (0.0) & 1.0 (0.0) \\
horseshoe ($\delta$ = 0.05)  & 0.55 (0.05) & 0.6 (0.04) & 0.63 (0.09) & 0.71 (0.13) & 1.0 (0.0) \\
horseshoe ($\delta$ = 0.01)  & 0.53 (0.04) & 0.56 (0.02) & 0.56 (0.02) & 0.59 (0.04) & 0.84 (0.13) \\
horseshoe ($\delta$ = 0.001)  & 0.55 (0.0) & 0.55 (0.02) & 0.55 (0.0) & 0.55 (0.0) & 0.56 (0.02) \\
horseshoe ($\delta$ = 0.0)  & 0.55 (0.0) & 0.55 (0.0) & 0.55 (0.0) & 0.55 (0.0) & 0.55 (0.0) \\
horseshoe$^*$  & 0.59 (0.1) & 0.93 (0.14) & 0.95 (0.09) & 1.0 (0.0) & 1.0 (0.0) \\
\midrule
      \multicolumn{6}{c}{Average number of selected variables} \\
          \midrule
 & 10 & 50 & 100 & 1000 & 100000 \\
\midrule
proposed ($\delta$ = 0.8)  & 1.9 (1.87) & 2.4 (0.49) & 3.0 (0.0) & 3.0 (0.0) & 3.0 (0.0) \\
proposed ($\delta$ = 0.5)  & 1.9 (1.87) & 2.4 (0.49) & 3.0 (0.0) & 3.0 (0.0) & 3.0 (0.0) \\
proposed ($\delta$ = 0.05)  & 2.3 (2.65) & 2.6 (0.66) & 3.0 (0.0) & 3.0 (0.0) & 3.0 (0.0) \\
proposed ($\delta$ = 0.01)  & 2.3 (2.65) & 2.6 (0.66) & 3.0 (0.0) & 3.0 (0.0) & 3.0 (0.0) \\
proposed ($\delta$ = 0.001)  & 3.0 (3.1) & 2.5 (0.67) & 3.2 (0.6) & 3.0 (0.0) & 3.0 (0.0) \\
proposed ($\delta$ = 0.0)  & 2.3 (2.65) & 2.5 (0.67) & 3.1 (0.3) & 3.0 (0.0) & 3.0 (0.0) \\
proposed$^*$  & 2.6 (2.58) & 2.4 (0.49) & 3.0 (0.0) & 3.0 (0.0) & 3.0 (0.0) \\
\midrule
horseshoe ($\delta$ = 0.8)  & 2.6 (0.92) & 2.7 (0.64) & 3.1 (0.3) & 3.0 (0.0) & 3.0 (0.0) \\
horseshoe ($\delta$ = 0.5)  & 3.6 (0.92) & 3.2 (0.6) & 3.4 (0.66) & 3.0 (0.0) & 3.0 (0.0) \\
horseshoe ($\delta$ = 0.05)  & 7.6 (0.49) & 7.1 (0.7) & 6.6 (1.11) & 5.7 (1.35) & 3.0 (0.0) \\
horseshoe ($\delta$ = 0.01)  & 7.9 (0.3) & 7.7 (0.46) & 7.7 (0.46) & 7.2 (0.75) & 4.3 (1.27) \\
horseshoe ($\delta$ = 0.001)  & 8.0 (0.0) & 7.9 (0.3) & 8.0 (0.0) & 8.0 (0.0) & 7.7 (0.46) \\
horseshoe ($\delta$ = 0.0)  & 8.0 (0.0) & 8.0 (0.0) & 8.0 (0.0) & 8.0 (0.0) & 8.0 (0.0) \\
horseshoe$^*$  & 4.6 (2.2) & 3.7 (1.49) & 3.4 (0.66) & 3.0 (0.0) & 3.0 (0.0) \\
     \bottomrule
    \end{tabular}
 \end{table*}

   \begin{table*}[h]
 \center
  \caption{Low-dimensional setting, $d = 8$ and $n \in \{10, 50, 100, 1000, 100000\}$. Evaluation results with noise on regression coefficients $\eta = 0.2$. Comparison of the proposed method and horseshoe for different $\delta$.}
  \label{tab:simDataNoise02_lowDimensional_deltaAnalysis}
  \begin{tabular}{llllll}
  \toprule 
      \multicolumn{6}{c}{F1-Scores} \\
      \midrule
 & 10 & 50 & 100 & 1000 & 100000 \\
\midrule
proposed ($\delta$ = 0.8)  & 0.39 (0.22) & 0.87 (0.19) & 0.97 (0.07) & 1.0 (0.0) & 1.0 (0.0) \\
proposed ($\delta$ = 0.5)  & 0.4 (0.16) & 0.95 (0.08) & 0.97 (0.07) & 1.0 (0.0) & 1.0 (0.0) \\
proposed ($\delta$ = 0.05)  & 0.39 (0.22) & 0.95 (0.08) & 0.99 (0.04) & 1.0 (0.0) & 0.68 (0.04) \\
proposed ($\delta$ = 0.01)  & 0.4 (0.22) & 0.95 (0.08) & 0.99 (0.04) & 0.99 (0.04) & 0.62 (0.03) \\
proposed ($\delta$ = 0.001)  & 0.38 (0.21) & 0.95 (0.08) & 0.99 (0.04) & 0.99 (0.04) & 0.62 (0.03) \\
proposed ($\delta$ = 0.0)  & 0.37 (0.2) & 0.95 (0.08) & 0.99 (0.04) & 0.99 (0.04) & 0.62 (0.03) \\
proposed$^*$  & 0.41 (0.16) & 0.95 (0.08) & 0.97 (0.07) & 1.0 (0.0) & 1.0 (0.0) \\
\midrule
horseshoe ($\delta$ = 0.8)  & 0.61 (0.29) & 0.95 (0.08) & 0.99 (0.04) & 1.0 (0.0) & 1.0 (0.0) \\
horseshoe ($\delta$ = 0.5)  & 0.59 (0.29) & 0.92 (0.08) & 0.99 (0.04) & 1.0 (0.0) & 1.0 (0.0) \\
horseshoe ($\delta$ = 0.05)  & 0.59 (0.11) & 0.61 (0.1) & 0.62 (0.06) & 0.65 (0.07) & 0.66 (0.04) \\
horseshoe ($\delta$ = 0.01)  & 0.55 (0.0) & 0.56 (0.04) & 0.56 (0.02) & 0.57 (0.03) & 0.58 (0.03) \\
horseshoe ($\delta$ = 0.001)  & 0.55 (0.0) & 0.55 (0.0) & 0.55 (0.0) & 0.55 (0.0) & 0.55 (0.0) \\
horseshoe ($\delta$ = 0.0)  & 0.55 (0.0) & 0.55 (0.0) & 0.55 (0.0) & 0.55 (0.0) & 0.55 (0.0) \\
horseshoe$^*$  & 0.6 (0.21) & 0.88 (0.13) & 0.99 (0.04) & 1.0 (0.0) & 1.0 (0.0) \\
\midrule
      \multicolumn{6}{c}{Average number of selected variables} \\
          \midrule
 & 10 & 50 & 100 & 1000 & 100000 \\
\midrule
proposed ($\delta$ = 0.8)  & 2.5 (2.16) & 2.6 (0.92) & 3.0 (0.45) & 3.0 (0.0) & 3.0 (0.0) \\
proposed ($\delta$ = 0.5)  & 2.6 (2.06) & 2.9 (0.54) & 3.0 (0.45) & 3.0 (0.0) & 3.0 (0.0) \\
proposed ($\delta$ = 0.05)  & 2.5 (2.25) & 2.9 (0.54) & 3.1 (0.3) & 3.0 (0.0) & 5.9 (0.54) \\
proposed ($\delta$ = 0.01)  & 2.8 (2.6) & 2.9 (0.54) & 3.1 (0.3) & 3.1 (0.3) & 6.7 (0.46) \\
proposed ($\delta$ = 0.001)  & 2.6 (2.46) & 2.9 (0.54) & 3.1 (0.3) & 3.1 (0.3) & 6.7 (0.46) \\
proposed ($\delta$ = 0.0)  & 2.7 (2.61) & 2.9 (0.54) & 3.1 (0.3) & 3.1 (0.3) & 6.7 (0.46) \\
proposed$^*$  & 2.5 (2.06) & 2.9 (0.54) & 3.0 (0.45) & 3.0 (0.0) & 3.0 (0.0) \\
\midrule
horseshoe ($\delta$ = 0.8)  & 3.1 (1.04) & 2.9 (0.54) & 3.1 (0.3) & 3.0 (0.0) & 3.0 (0.0) \\
horseshoe ($\delta$ = 0.5)  & 3.8 (0.98) & 3.3 (0.64) & 3.1 (0.3) & 3.0 (0.0) & 3.0 (0.0) \\
horseshoe ($\delta$ = 0.05)  & 7.1 (1.22) & 7.0 (1.34) & 6.7 (0.9) & 6.4 (1.02) & 6.1 (0.54) \\
horseshoe ($\delta$ = 0.01)  & 8.0 (0.0) & 7.8 (0.6) & 7.8 (0.4) & 7.6 (0.49) & 7.4 (0.49) \\
horseshoe ($\delta$ = 0.001)  & 8.0 (0.0) & 8.0 (0.0) & 8.0 (0.0) & 8.0 (0.0) & 8.0 (0.0) \\
horseshoe ($\delta$ = 0.0)  & 8.0 (0.0) & 8.0 (0.0) & 8.0 (0.0) & 8.0 (0.0) & 8.0 (0.0) \\
horseshoe$^*$  & 4.8 (2.23) & 3.8 (1.54) & 3.1 (0.3) & 3.0 (0.0) & 3.0 (0.0) \\
     \bottomrule
    \end{tabular}
 \end{table*}

    \begin{table*}[h]
 \center
  \caption{Low-dimensional setting, $d = 8$ and $n \in \{10, 50, 100, 1000, 100000\}$. Evaluation results with noise on regression coefficients $\eta = 0.5$. Comparison of the proposed method and horseshoe for different $\delta$.}
  \label{tab:simDataNoise05_lowDimensional_deltaAnalysis}
  \begin{tabular}{llllll}
  \toprule 
      \multicolumn{6}{c}{F1-Scores} \\
      \midrule
 & 10 & 50 & 100 & 1000 & 100000 \\
\midrule
proposed ($\delta$ = 0.8)  & 0.36 (0.25) & 0.85 (0.19) & 0.97 (0.07) & 1.0 (0.0) & 1.0 (0.0) \\
proposed ($\delta$ = 0.5)  & 0.45 (0.18) & 0.93 (0.09) & 0.97 (0.07) & 1.0 (0.0) & 1.0 (0.0) \\
proposed ($\delta$ = 0.05)  & 0.34 (0.24) & 0.93 (0.09) & 0.95 (0.08) & 0.8 (0.11) & 0.6 (0.0) \\
proposed ($\delta$ = 0.01)  & 0.34 (0.24) & 0.93 (0.09) & 0.95 (0.08) & 0.8 (0.11) & 0.59 (0.02) \\
proposed ($\delta$ = 0.001)  & 0.34 (0.24) & 0.93 (0.09) & 0.95 (0.08) & 0.8 (0.11) & 0.59 (0.02) \\
proposed ($\delta$ = 0.0)  & 0.39 (0.22) & 0.93 (0.09) & 0.95 (0.08) & 0.8 (0.11) & 0.59 (0.02) \\
proposed$^*$  & 0.47 (0.18) & 0.93 (0.09) & 0.97 (0.07) & 1.0 (0.0) & 1.0 (0.0) \\
\midrule
horseshoe ($\delta$ = 0.8)  & 0.57 (0.28) & 0.93 (0.09) & 0.99 (0.04) & 1.0 (0.0) & 1.0 (0.0) \\
horseshoe ($\delta$ = 0.5)  & 0.6 (0.28) & 0.91 (0.08) & 0.97 (0.06) & 1.0 (0.0) & 1.0 (0.0) \\
horseshoe ($\delta$ = 0.05)  & 0.59 (0.1) & 0.59 (0.06) & 0.61 (0.06) & 0.6 (0.05) & 0.6 (0.0) \\
horseshoe ($\delta$ = 0.01)  & 0.56 (0.04) & 0.56 (0.02) & 0.56 (0.02) & 0.56 (0.02) & 0.55 (0.0) \\
horseshoe ($\delta$ = 0.001)  & 0.55 (0.0) & 0.55 (0.0) & 0.55 (0.0) & 0.55 (0.0) & 0.55 (0.0) \\
horseshoe ($\delta$ = 0.0)  & 0.55 (0.0) & 0.55 (0.0) & 0.55 (0.0) & 0.55 (0.0) & 0.55 (0.0) \\
horseshoe$^*$  & 0.63 (0.19) & 0.88 (0.13) & 0.97 (0.06) & 1.0 (0.0) & 1.0 (0.0) \\
\midrule
      \multicolumn{6}{c}{Average number of selected variables} \\
          \midrule
 & 10 & 50 & 100 & 1000 & 100000 \\
\midrule
proposed ($\delta$ = 0.8)  & 2.4 (2.11) & 2.5 (0.92) & 3.0 (0.45) & 3.0 (0.0) & 3.0 (0.0) \\
proposed ($\delta$ = 0.5)  & 3.2 (2.36) & 3.0 (0.63) & 3.0 (0.45) & 3.0 (0.0) & 3.0 (0.0) \\
proposed ($\delta$ = 0.05)  & 2.5 (2.42) & 3.0 (0.63) & 3.1 (0.54) & 4.6 (0.92) & 7.0 (0.0) \\
proposed ($\delta$ = 0.01)  & 2.5 (2.42) & 3.0 (0.63) & 3.1 (0.54) & 4.6 (0.92) & 7.1 (0.3) \\
proposed ($\delta$ = 0.001)  & 2.5 (2.42) & 3.0 (0.63) & 3.1 (0.54) & 4.6 (0.92) & 7.1 (0.3) \\
proposed ($\delta$ = 0.0)  & 2.6 (2.33) & 3.0 (0.63) & 3.1 (0.54) & 4.6 (0.92) & 7.1 (0.3) \\
proposed$^*$  & 3.2 (2.23) & 3.0 (0.63) & 3.0 (0.45) & 3.0 (0.0) & 3.0 (0.0) \\
\midrule
horseshoe ($\delta$ = 0.8)  & 3.1 (1.22) & 3.0 (0.63) & 3.1 (0.3) & 3.0 (0.0) & 3.0 (0.0) \\
horseshoe ($\delta$ = 0.5)  & 3.6 (1.02) & 3.4 (0.66) & 3.2 (0.4) & 3.0 (0.0) & 3.0 (0.0) \\
horseshoe ($\delta$ = 0.05)  & 7.3 (1.27) & 7.2 (0.98) & 6.9 (0.83) & 7.1 (0.83) & 7.0 (0.0) \\
horseshoe ($\delta$ = 0.01)  & 7.8 (0.6) & 7.7 (0.46) & 7.8 (0.4) & 7.8 (0.4) & 8.0 (0.0) \\
horseshoe ($\delta$ = 0.001)  & 8.0 (0.0) & 8.0 (0.0) & 8.0 (0.0) & 8.0 (0.0) & 8.0 (0.0) \\
horseshoe ($\delta$ = 0.0)  & 8.0 (0.0) & 8.0 (0.0) & 8.0 (0.0) & 8.0 (0.0) & 8.0 (0.0) \\
horseshoe$^*$  & 5.1 (2.62) & 3.8 (1.54) & 3.2 (0.4) & 3.0 (0.0) & 3.0 (0.0) \\
     \bottomrule
    \end{tabular}
 \end{table*}

   \begin{table*}[h]
 \center
  \caption{High-dimensional setting, $d = 1000$ and $n \in \{100, 1000\}$. Evaluation results with no noise on regression coefficients.}
  \label{tab:simDataNoise0_highDimensional}
  \begin{tabular}{lll}
  \toprule 
      \multicolumn{3}{c}{F1-Scores} \\
      \midrule
 & 100 & 1000 \\
\midrule
proposed$^*$  & 0.93 (0.09) & 1.0 (0.0) \\
horseshoe$^*$  & 0.45 (0.23) & 0.86 (0.23) \\
EMVS & 0.94 (0.09) & 1.0 (0.0) \\
SSLASSO & 0.99 (0.04) & 1.0 (0.0) \\
AIC & 0.06 (0.0) & 0.01 (0.0) \\
EBIC ($\gamma = 0.0$)  & 0.06 (0.0) & 0.01 (0.0) \\
EBIC ($\gamma = 0.5$)  & 0.06 (0.0) & 0.01 (0.0) \\
EBIC ($\gamma = 1.0$)  & 0.06 (0.0) & 0.01 (0.0) \\
stability ($q = 1$)  & 0.25 (0.25) & 0.5 (0.0) \\
stability ($q = 50$)  & 1.0 (0.0) & 1.0 (0.0) \\
stability ($q = 100$)  & 0.88 (0.1) & 1.0 (0.0) \\
\midrule
      \multicolumn{3}{c}{Average number of selected variables} \\
          \midrule
 & 100 & 1000 \\
\midrule
proposed$^*$  & 2.8 (0.6) & 3.0 (0.0) \\
horseshoe$^*$  & 29.7 (53.92) & 5.2 (5.02) \\
EMVS & 2.7 (0.46) & 3.0 (0.0) \\
SSLASSO & 3.1 (0.3) & 3.0 (0.0) \\
AIC & 99.0 (0.0) & 999.0 (0.0) \\
EBIC ($\gamma = 0.0$)  & 99.0 (0.0) & 999.0 (0.0) \\
EBIC ($\gamma = 0.5$)  & 99.0 (0.0) & 999.0 (0.0) \\
EBIC ($\gamma = 1.0$)  & 99.0 (0.0) & 999.0 (0.0) \\
stability ($q = 1$)  & 0.5 (0.5) & 1.0 (0.0) \\
stability ($q = 50$)  & 3.0 (0.0) & 3.0 (0.0) \\
stability ($q = 100$)  & 2.4 (0.49) & 3.0 (0.0) \\
     \bottomrule
    \end{tabular}
 \end{table*}

   \begin{table*}[h]
 \center
  \caption{High-dimensional setting, $d = 1000$ and $n \in \{100, 1000\}$. Evaluation results with noise on regression coefficients $\eta = 0.2$.}
  \label{tab:simDataNoise02_highDimensional}
  \begin{tabular}{lll}
  \toprule 
      \multicolumn{3}{c}{F1-Scores} \\
      \midrule
 & 100 & 1000 \\
\midrule
proposed$^*$  & 0.84 (0.08) & 1.0 (0.0) \\
horseshoe$^*$  & 0.51 (0.21) & 0.63 (0.17) \\
EMVS & 0.86 (0.09) & 0.98 (0.06) \\
SSLASSO & 0.94 (0.09) & 0.99 (0.04) \\
AIC & 0.06 (0.0) & 0.01 (0.0) \\
EBIC ($\gamma = 0.0$)  & 0.06 (0.0) & 0.01 (0.0) \\
EBIC ($\gamma = 0.5$)  & 0.06 (0.0) & 0.01 (0.0) \\
EBIC ($\gamma = 1.0$)  & 0.06 (0.0) & 0.01 (0.0) \\
stability ($q = 1$)  & 0.45 (0.15) & 0.4 (0.2) \\
stability ($q = 50$)  & 0.94 (0.09) & 0.96 (0.07) \\
stability ($q = 100$)  & 0.88 (0.1) & 0.97 (0.06) \\
\midrule
      \multicolumn{3}{c}{Average number of selected variables} \\
          \midrule
 & 100 & 1000 \\
\midrule
proposed$^*$  & 2.2 (0.4) & 3.0 (0.0) \\
horseshoe$^*$  & 11.8 (8.78) & 7.7 (4.58) \\
EMVS & 2.3 (0.46) & 2.9 (0.3) \\
SSLASSO & 2.7 (0.46) & 3.1 (0.3) \\
AIC & 99.0 (0.0) & 999.0 (0.0) \\
EBIC ($\gamma = 0.0$)  & 99.0 (0.0) & 999.0 (0.0) \\
EBIC ($\gamma = 0.5$)  & 99.0 (0.0) & 999.0 (0.0) \\
EBIC ($\gamma = 1.0$)  & 99.0 (0.0) & 999.0 (0.0) \\
stability ($q = 1$)  & 0.9 (0.3) & 0.8 (0.4) \\
stability ($q = 50$)  & 2.7 (0.46) & 3.3 (0.46) \\
stability ($q = 100$)  & 2.4 (0.49) & 3.2 (0.4) \\
     \bottomrule
    \end{tabular}
 \end{table*}

   \begin{table*}[h]
 \center
  \caption{High-dimensional setting, $d = 1000$ and $n \in \{100, 1000\}$. Evaluation results with no noise on regression coefficients. Comparison of the proposed method and horseshoe for different $\delta$.}
  \label{tab:simDataNoise0_highDimensional_deltaAnalysis}
  \begin{tabular}{lll}
  \toprule 
      \multicolumn{3}{c}{F1-Scores} \\
      \midrule
 & 100 & 1000 \\
\midrule
proposed ($\delta$ = 0.8)  & 0.5 (0.0) & 0.82 (0.06) \\
proposed ($\delta$ = 0.5)  & 0.53 (0.09) & 1.0 (0.0) \\
proposed ($\delta$ = 0.05)  & 0.93 (0.09) & 1.0 (0.0) \\
proposed ($\delta$ = 0.01)  & 0.93 (0.09) & 1.0 (0.0) \\
proposed ($\delta$ = 0.001)  & 0.93 (0.09) & 1.0 (0.0) \\
proposed ($\delta$ = 0.0)  & 0.93 (0.09) & 1.0 (0.0) \\
proposed$^*$  & 0.93 (0.09) & 1.0 (0.0) \\
\midrule
horseshoe ($\delta$ = 0.8)  & 0.92 (0.1) & 1.0 (0.0) \\
horseshoe ($\delta$ = 0.5)  & 0.95 (0.08) & 1.0 (0.0) \\
horseshoe ($\delta$ = 0.05)  & 0.58 (0.2) & 0.94 (0.11) \\
horseshoe ($\delta$ = 0.01)  & 0.18 (0.06) & 0.3 (0.06) \\
horseshoe ($\delta$ = 0.001)  & 0.01 (0.0) & 0.01 (0.0) \\
horseshoe ($\delta$ = 0.0)  & 0.01 (0.0) & 0.01 (0.0) \\
horseshoe$^*$  & 0.45 (0.23) & 0.86 (0.23) \\
\midrule
      \multicolumn{3}{c}{Average number of selected variables} \\
          \midrule
 & 100 & 1000 \\
\midrule
proposed ($\delta$ = 0.8)  & 1.0 (0.0) & 2.1 (0.3) \\
proposed ($\delta$ = 0.5)  & 1.1 (0.3) & 3.0 (0.0) \\
proposed ($\delta$ = 0.05)  & 2.8 (0.6) & 3.0 (0.0) \\
proposed ($\delta$ = 0.01)  & 2.8 (0.6) & 3.0 (0.0) \\
proposed ($\delta$ = 0.001)  & 2.8 (0.6) & 3.0 (0.0) \\
proposed ($\delta$ = 0.0)  & 2.8 (0.6) & 3.0 (0.0) \\
proposed$^*$  & 2.8 (0.6) & 3.0 (0.0) \\
\midrule
horseshoe ($\delta$ = 0.8)  & 2.6 (0.49) & 3.0 (0.0) \\
horseshoe ($\delta$ = 0.5)  & 2.9 (0.54) & 3.0 (0.0) \\
horseshoe ($\delta$ = 0.05)  & 9.6 (7.53) & 3.5 (0.92) \\
horseshoe ($\delta$ = 0.01)  & 43.9 (48.84) & 17.7 (4.1) \\
horseshoe ($\delta$ = 0.001)  & 495.1 (119.73) & 404.2 (22.27) \\
horseshoe ($\delta$ = 0.0)  & 1000.0 (0.0) & 1000.0 (0.0) \\
horseshoe$^*$  & 29.7 (53.92) & 5.2 (5.02) \\
     \bottomrule
    \end{tabular}
 \end{table*}

    \begin{table*}[h]
 \center
  \caption{High-dimensional setting, $d = 1000$ and $n \in \{100, 1000\}$. Evaluation results with noise on regression coefficients $\eta = 0.2$. Comparison of the proposed method and horseshoe for different $\delta$.}
  \label{tab:simDataNoise02_highDimensional_deltaAnalysis}
  \begin{tabular}{lll}
  \toprule 
      \multicolumn{3}{c}{F1-Scores} \\
      \midrule
 & 100 & 1000 \\
\midrule
proposed ($\delta$ = 0.8)  & 0.45 (0.15) & 0.82 (0.06) \\
proposed ($\delta$ = 0.5)  & 0.5 (0.0) & 0.98 (0.06) \\
proposed ($\delta$ = 0.05)  & 0.84 (0.08) & 0.97 (0.06) \\
proposed ($\delta$ = 0.01)  & 0.84 (0.08) & 0.97 (0.06) \\
proposed ($\delta$ = 0.001)  & 0.84 (0.08) & 0.97 (0.06) \\
proposed ($\delta$ = 0.0)  & 0.84 (0.08) & 0.97 (0.06) \\
proposed$^*$  & 0.84 (0.08) & 1.0 (0.0) \\
\midrule
horseshoe ($\delta$ = 0.8)  & 0.86 (0.09) & 1.0 (0.0) \\
horseshoe ($\delta$ = 0.5)  & 0.92 (0.1) & 1.0 (0.0) \\
horseshoe ($\delta$ = 0.05)  & 0.69 (0.12) & 0.77 (0.13) \\
horseshoe ($\delta$ = 0.01)  & 0.2 (0.03) & 0.3 (0.05) \\
horseshoe ($\delta$ = 0.001)  & 0.01 (0.0) & 0.01 (0.0) \\
horseshoe ($\delta$ = 0.0)  & 0.01 (0.0) & 0.01 (0.0) \\
horseshoe$^*$  & 0.51 (0.21) & 0.63 (0.17) \\
\midrule
      \multicolumn{3}{c}{Average number of selected variables} \\
          \midrule
 & 100 & 1000 \\
\midrule
proposed ($\delta$ = 0.8)  & 0.9 (0.3) & 2.1 (0.3) \\
proposed ($\delta$ = 0.5)  & 1.0 (0.0) & 2.9 (0.3) \\
proposed ($\delta$ = 0.05)  & 2.2 (0.4) & 3.2 (0.4) \\
proposed ($\delta$ = 0.01)  & 2.2 (0.4) & 3.2 (0.4) \\
proposed ($\delta$ = 0.001)  & 2.2 (0.4) & 3.2 (0.4) \\
proposed ($\delta$ = 0.0)  & 2.2 (0.4) & 3.2 (0.4) \\
proposed$^*$  & 2.2 (0.4) & 3.0 (0.0) \\
\midrule
horseshoe ($\delta$ = 0.8)  & 2.3 (0.46) & 3.0 (0.0) \\
horseshoe ($\delta$ = 0.5)  & 2.6 (0.49) & 3.0 (0.0) \\
horseshoe ($\delta$ = 0.05)  & 5.6 (1.62) & 5.0 (1.18) \\
horseshoe ($\delta$ = 0.01)  & 28.8 (8.2) & 18.0 (3.9) \\
horseshoe ($\delta$ = 0.001)  & 495.1 (58.04) & 417.0 (20.53) \\
horseshoe ($\delta$ = 0.0)  & 1000.0 (0.0) & 1000.0 (0.0) \\
horseshoe$^*$  & 11.8 (8.78) & 7.7 (4.58) \\
     \bottomrule
    \end{tabular}
 \end{table*}
 
  \begin{table*}[h]
 \center
  \caption{High-dimensional setting, $d = 1000$ and $n \in \{100, 1000\}$. Evaluation results with no noise on regression coefficients. For different $\delta$, comparison of the proposed method and product moment matching priors (MOM) from \citep{johnson2012bayesian}.}
  \label{tab:simDataNoise0_highDimensional_deltaAnalysisVS_MOM}
  \begin{tabular}{lll}
  \toprule 
      \multicolumn{3}{c}{F1-Scores} \\
      \midrule
 & 100 & 1000 \\
\midrule
proposed ($\delta = 0.8$)  & 0.5 (0.0) & 0.82 (0.06) \\
proposed ($\delta = 0.5$)  & 0.53 (0.09) & 1.0 (0.0) \\
proposed ($\delta = 0.05$)  & 0.93 (0.09) & 1.0 (0.0) \\
proposed ($\delta = 0.01$)  & 0.93 (0.09) & 1.0 (0.0) \\
proposed ($\delta = 0.001$)  & 0.93 (0.09) & 1.0 (0.0) \\
\midrule
MOM ($\delta = 0.8$)  & 0.9 (0.1) & 1.0 (0.0) \\
MOM ($\delta = 0.5$)  & 0.89 (0.09) & 1.0 (0.0) \\
MOM ($\delta = 0.05$)  & 1.0 (0.0) & 1.0 (0.0) \\
MOM ($\delta = 0.01$)  & 0.0 (0.0) & 0.87 (0.04) \\
MOM ($\delta = 0.001$)  & 0.0 (0.0) & 0.05 (0.15) \\
\midrule
      \multicolumn{3}{c}{Average number of selected variables} \\
          \midrule
 & 100 & 1000 \\
\midrule
proposed ($\delta = 0.8$)  & 1.0 (0.0) & 2.1 (0.3) \\
proposed ($\delta = 0.5$)  & 1.1 (0.3) & 3.0 (0.0) \\
proposed ($\delta = 0.05$)  & 2.8 (0.6) & 3.0 (0.0) \\
proposed ($\delta = 0.01$)  & 2.8 (0.6) & 3.0 (0.0) \\
proposed ($\delta = 0.001$)  & 2.8 (0.6) & 3.0 (0.0) \\
\midrule
MOM ($\delta = 0.8$)  & 2.5 (0.5) & 3.0 (0.0) \\
MOM ($\delta = 0.5$)  & 2.6 (0.66) & 3.0 (0.0) \\
MOM ($\delta = 0.05$)  & 3.0 (0.0) & 3.0 (0.0) \\
MOM ($\delta = 0.01$)  & 0.0 (0.0) & 3.9 (0.3) \\
MOM ($\delta = 0.001$)  & 0.0 (0.0) & 0.1 (0.3) \\
     \bottomrule
    \end{tabular}
 \end{table*}

   \begin{table*}[h]
 \center
  \caption{High-dimensional setting, $d = 1000$ and $n \in \{100, 1000\}$. Evaluation results with noise on regression coefficients $\eta = 0.2$. For different $\delta$, comparison of the proposed method and product moment matching priors (MOM) from \citep{johnson2012bayesian}.}
  \label{tab:simDataNoise02_highDimensional_deltaAnalysisVS_MOM}
  \begin{tabular}{lll}
  \toprule 
      \multicolumn{3}{c}{F1-Scores} \\
      \midrule
 & 100 & 1000 \\
\midrule
proposed ($\delta = 0.8$)  & 0.45 (0.15) & 0.82 (0.06) \\
proposed ($\delta = 0.5$)  & 0.5 (0.0) & 0.98 (0.06) \\
proposed ($\delta = 0.05$)  & 0.84 (0.08) & 0.97 (0.06) \\
proposed ($\delta = 0.01$)  & 0.84 (0.08) & 0.97 (0.06) \\
proposed ($\delta = 0.001$)  & 0.84 (0.08) & 0.97 (0.06) \\
\midrule
MOM ($\delta = 0.8$)  & 0.84 (0.08) & 0.99 (0.04) \\
MOM ($\delta = 0.5$)  & 0.84 (0.08) & 0.99 (0.04) \\
MOM ($\delta = 0.05$)  & 0.98 (0.06) & 0.96 (0.07) \\
MOM ($\delta = 0.01$)  & 0.0 (0.0) & 0.89 (0.06) \\
MOM ($\delta = 0.001$)  & 0.0 (0.0) & 0.0 (0.0) \\
\midrule
      \multicolumn{3}{c}{Average number of selected variables} \\
          \midrule
 & 100 & 1000 \\
\midrule
proposed ($\delta = 0.8$)  & 0.9 (0.3) & 2.1 (0.3) \\
proposed ($\delta = 0.5$)  & 1.0 (0.0) & 2.9 (0.3) \\
proposed ($\delta = 0.05$)  & 2.2 (0.4) & 3.2 (0.4) \\
proposed ($\delta = 0.01$)  & 2.2 (0.4) & 3.2 (0.4) \\
proposed ($\delta = 0.001$)  & 2.2 (0.4) & 3.2 (0.4) \\
\midrule
MOM ($\delta = 0.8$)  & 2.2 (0.4) & 3.1 (0.3) \\
MOM ($\delta = 0.5$)  & 2.2 (0.4) & 3.1 (0.3) \\
MOM ($\delta = 0.05$)  & 2.9 (0.3) & 3.3 (0.46) \\
MOM ($\delta = 0.01$)  & 0.0 (0.0) & 3.8 (0.4) \\
MOM ($\delta = 0.001$)  & 0.0 (0.0) & 0.0 (0.0) \\
     \bottomrule
    \end{tabular}
 \end{table*}

\section*{Appendix D: Details of real data experiments and additional results}


The details of all data sets are shown in Table \ref{tab:realData_properties}; all variables are described in Tables \ref{tab:ozone_variables}, \ref{tab:crime_variables}  and Tables \ref{tab:SDM_variables1} and \ref{tab:SDM_variables2}. 
In order to make the choice of all hyper-parameters invariant to the scale, we normalize the observations to have roughly the same scale as for the synthetic data set.
In detail, we normalize the covariates to have mean 0 and variance 1, and the response variable to have mean 0 and variance 30. 
Furthermore, we log-transform the crime data as in \citep{liang2008mixtures}.

For the experiments with the real data we use 100000 MCMC-samples for the proposed method, GibbsBvs, and the horseshoe prior.\footnote{Out of which 10\% are used for burn in.}
Concerning the stability selection method, based on our findings from the simulated data, we set $q$ to the values $\{0.1 \cdot d, 0.5 \cdot d, 0.8 \cdot d\}$.

Tn Table \ref{tab:realData_SDM}, we also show the results on the GDP growth data (SDM) \citep{sala2004determinants}.


\paragraph{GDP growth data (SDM)}
For SDM, our proposed model suggests that only EAST and MALFAL66 have relatively high regression coefficients, but our method also shows that the expected increase in mean-squared error is around 27\% when compared to the Bayesian averaged model that uses all variables.
In Table \ref{tab:rankingVars_SDM}, we show the inclusion probabilities of the proposed method together with the results reported in \citep{sala2004determinants}.
We see that all the top 18 variables that have been considered as significant by \citep{sala2004determinants} are also listed in the top 18 of the proposed method ($\delta = 0$).
Moreover, the results of the proposed method with $\delta = 0.5$, suggest, that among those 18 variables, only 7 variables have a probability of more than $20\%$ of having a high effect size.
 In particular, it appears that DENS65C (density of costal population) seems to have only marginal influence on economic growth. 
 
 \begin{table*}[h]
  \center
  \caption{Statistics of real data sets}
  \label{tab:realData_properties}
  \begin{tabular}{llll}
  \toprule 
 & ozone & crime & SDM \\
\midrule
$n$ &  178 & 47 & 88 \\
$d$ &  35 & 15 & 67 \\
\bottomrule
    \end{tabular}
 \end{table*}

  \begin{table*}[h]
  \center
  \caption{Response $y$ and covariates  of ozone data. The data set contains all of the variables below, including all second-order terms and interactions. This table is partly copied from Table 5 in the supplement material of \citep{garcia2013sampling}.}
  \label{tab:ozone_variables}
  \begin{tabular}{ll}
    \toprule 
  y & Daily maximum 1-hour-average ozone reading (ppm) at Upland, CA \\
 x4 & 500-millibar pressure height (m) measured at Vandenberg AFB \\
x5 & Wind speed (mph) at Los Angeles International Airport (LAX) \\
x6 & Humidity (\%) at LAX \\
x7 & Temperature (Fahrenheit degrees) measured at Sandburg, CA \\
x8 & Inversion base height (feet) at LAX \\
x9 & Pressure gradient (mm Hg) from LAX to Daggett, CA \\
x10 & Visibility (miles) measured at LAX \\
\bottomrule
    \end{tabular}
 \end{table*}
 
   \begin{table*}[h]
  \center
  \caption{Response $y$ and covariates of crime data. This table is partly copied from Table 4 in \citep{raftery1997bayesian}.}
  \label{tab:crime_variables}
  \begin{tabular}{ll}
    \toprule 
   y & crime rate \\
M & Percentage of males age 14-24 \\
So & Indicator variable for southern state  \\
Ed & Mean years of schooling  \\
Po1 & Police expenditure in 1960  \\
Po2 & Police expenditure in 1959  \\
LF & Labor force participation rate  \\
M.F & Number of males per 1,000 females  \\
Pop & State population  \\
NW & Number of nonwhites per 1,000 people  \\
U1 & Unemployment rate of urban males age 14-24  \\
U2  & Unemployment rate of urban males, age 35-39  \\
GDP & Wealth  \\
Ineq & Income inequality  \\
Prob & Probability of imprisonment  \\ 
Time & Average time served in state prisons  \\
\bottomrule
    \end{tabular}
 \end{table*}

   \begin{table*}[h]
  \center
  \caption{Response $y$ and covariates (first part) of SDM data. This table is copied from the description of R package 'BayesVarSel'.}
  \label{tab:SDM_variables1}
  \begin{tabular}{ll}
    \toprule 
\footnotesize y & \footnotesize Growth of GDP per capita at purchasing power parities between 1960 and 1996. \\
\footnotesize ABSLATIT & \footnotesize Absolute latitude. \\
\footnotesize AIRDIST & \footnotesize Logarithm of minimal distance (in km) from New York, Rotterdam, or Tokyo. \\
\footnotesize AVELF & \footnotesize Average of five different indices of ethnolinguistic fractionalization \\ 
\footnotesize BRIT & \footnotesize Dummy for former British colony after 1776. \\
\footnotesize BUDDHA & \footnotesize Fraction of population Buddhist in 1960. \\
\footnotesize CATH00 & \footnotesize Fraction of population Catholic in 1960. \\
\footnotesize CIV72 & \footnotesize Index of civil liberties index in 1972. \\
\footnotesize COLONY & \footnotesize Dummy for former colony. \\
\footnotesize CONFUC & \footnotesize Fraction of population Confucian. \\
\footnotesize DENS60 & \footnotesize Population per area in 1960. \\
\footnotesize DENS65C & \footnotesize Coastal (within 100 km of coastline) population per coastal area in 1965. \\
\footnotesize DENS65I & \footnotesize Interior (more than 100 km from coastline) population per interior area in 1965. \\
\footnotesize DPOP6090 & \footnotesize Average growth rate of population between 1960 and 1990. \\
\footnotesize EAST & \footnotesize Dummy for East Asian countries. \\
\footnotesize ECORG & \footnotesize Degree Capitalism index. \\
\footnotesize ENGFRAC & \footnotesize Fraction of population speaking English. \\
\footnotesize EUROPE & \footnotesize Dummy for European economies. \\
\footnotesize FERTLDC1 & \footnotesize Fertility in 1960's. \\
\footnotesize GDE1 & \footnotesize Average share public expenditures on defense as fraction of GDP between 1960 and 1965. \\
\footnotesize GDPCH60L & \footnotesize Logarithm of GDP per capita in 1960. \\
\footnotesize GEEREC1 & \footnotesize Average share public expenditures on education as fraction of GDP between 1960 and 1965. \\
\footnotesize GGCFD3 & \footnotesize Average share of expenditures on public investment as fraction of GDP between 1960 and 1965. \\
\footnotesize GOVNOM1 & \footnotesize Average share of nominal government spending to nominal GDP between 1960 and 1964. \\
\footnotesize GOVSH61 & \footnotesize Average share government spending to GDP between 1960 and 1964. \\
\footnotesize GVR61 & \footnotesize Share of expenditures on government consumption to GDP in 1961. \\
\footnotesize H60 & \footnotesize Enrollment rates in higher education. \\
\footnotesize HERF00 & \footnotesize Religion measure. \\ 
\footnotesize HINDU00 & \footnotesize Fraction of the population Hindu in 1960. \\
\footnotesize IPRICE1 & \footnotesize Average investment price level between 1960 and 1964 on purchasing power parity basis. \\
\footnotesize LAAM & \footnotesize Dummy for Latin American countries. \\
\footnotesize LANDAREA & \footnotesize Area in km. \\
\footnotesize LANDLOCK & \footnotesize Dummy for landlocked countries. \\
\bottomrule
    \end{tabular}
 \end{table*}
 
    \begin{table*}[h]
  \center
  \caption{Covariates (second part) of SDM data. This table is copied from the description of R package 'BayesVarSel'.}
  \label{tab:SDM_variables2}
  \begin{tabular}{ll}
    \toprule 
\footnotesize LHCPC & \footnotesize Log of hydrocarbon deposits in 1993. \\
\footnotesize LIFE060 & \footnotesize Life expectancy in 1960. \\
\footnotesize LT100CR & \footnotesize Proportion of country's land area within 100 km of ocean or ocean-navigable river. \\
\footnotesize MALFAL66 & \footnotesize Index of malaria prevalence in 1966. \\
\footnotesize MINING & \footnotesize Fraction of GDP in mining. \\
\footnotesize MUSLIM00 & \footnotesize Fraction of population Muslim in 1960. \\
\footnotesize NEWSTATE & \footnotesize National independence. \\ 
\footnotesize OIL & \footnotesize Dummy for oil-producing country. \\
\footnotesize OPENDEC1 & \footnotesize Ratio of exports plus imports to GDP, averaged over 1965 to 1974. \\
\footnotesize ORTH00 & \footnotesize Fraction of population Orthodox in 1960. \\
\footnotesize OTHFRAC & \footnotesize Fraction of population speaking foreign language. \\
\footnotesize P60 & \footnotesize Enrollment rate in primary education in 1960. \\
\footnotesize PI6090 & \footnotesize Average inflation rate between 1960 and 1990. \\
\footnotesize SQPI6090 & \footnotesize Square of average inflation rate between 1960 and 1990. \\
\footnotesize PRIGHTS & \footnotesize Political rights index. \\
\footnotesize POP1560 & \footnotesize Fraction of population younger than 15 years in 1960. \\ 
\footnotesize POP60 & \footnotesize Population in 1960 \\
\footnotesize POP6560 & \footnotesize Fraction of population older than 65 years in 1960. \\
\footnotesize PRIEXP70 & \footnotesize Fraction of primary exports in total exports in 1970. \\
\footnotesize PROT00 & \footnotesize Fraction of population Protestant in 1960. \\
\footnotesize RERD & \footnotesize Real exchange rate distortions. \\
\footnotesize REVCOUP & \footnotesize Number of revolutions and military coups. \\
\footnotesize SAFRICA & \footnotesize Dummy for Sub-Saharan African countries. \\
\footnotesize SCOUT & \footnotesize Measure of outward orientation. \\
\footnotesize SIZE60 & \footnotesize Logarithm of aggregate GDP in 1960. \\
\footnotesize SOCIALIST & \footnotesize Dummy for countries under Socialist rule for considerable time during 1950 to 1995. \\
\footnotesize SPAIN & \footnotesize Dummy variable for former Spanish colonies. \\
\footnotesize TOT1DEC1 & \footnotesize Growth of terms of trade in the 1960's. \\
\footnotesize TOTIND & \footnotesize Terms of trade ranking \\
\footnotesize TROPICAR & \footnotesize Proportion of country's land area within geographical tropics. \\
\footnotesize TROPPOP & \footnotesize Proportion of country's population living in geographical tropics. \\
\footnotesize WARTIME & \footnotesize Fraction of time spent in war between 1960 and 1990. \\
\footnotesize WARTORN & \footnotesize Indicator for countries that participated in external war between 1960 and 1990. \\
\footnotesize YRSOPEN & \footnotesize Number of years economy has been open between 1950 and 1994. \\
\footnotesize ZTROPICS & \footnotesize Fraction tropical climate zone. \\
\bottomrule
    \end{tabular}
 \end{table*}
 
    \begin{table*}[h]
 \center
  \caption{Selected variables for the SDM data. For proposed method and horseshoe method, we denote by "MSE inc" the expected increase in mean squared error compared to choosing the full model.}
  \label{tab:realData_SDM}
  \footnotesize
  \begin{tabular}{ll}
  \toprule 
 method & selected variables \\
\midrule
  \footnotesize proposed ($\delta$  =  0.8, MSE inc = 110.29\%) & \footnotesize EAST \\
  \footnotesize proposed ($\delta$  =  0.5, MSE inc = 27.07\%) & \footnotesize EAST, MALFAL66 \\
  \footnotesize proposed ($\delta$  =  0.05, MSE inc = 27.25\%) & \footnotesize EAST, MALFAL66 \\
  \footnotesize proposed ($\delta$  =  0.01, MSE inc = 27.2\%) & \footnotesize  EAST, MALFAL66 \\
  \footnotesize proposed ($\delta$  =  0.001, MSE inc = 27.14\%) & \footnotesize EAST, MALFAL66 \\
  \footnotesize proposed ($\delta$  =  0.0, MSE inc = 27.22\%) & \footnotesize EAST, MALFAL66 \\
  \midrule
  \footnotesize horseshoe ($\delta$  =  0.8, MSE inc = 69.31\%) & \footnotesize EAST, GDPCH60L, IPRICE1, P60 \\
  \footnotesize horseshoe ($\delta$  =  0.5, MSE inc = 20.16\%) & \footnotesize CONFUC, EAST, GDPCH60L, IPRICE1, LIFE060, P60, TROPICAR \\
  \footnotesize horseshoe ($\delta$  =  0.05, MSE inc = 0.0\%) & \footnotesize all except DENS65I, DPOP6090, ECORG, EUROPE, HERF00, \\
  & \footnotesize LANDAREA, LANDLOCK, OIL, ORTH00, PI6090, SQPI6090, \\
  & \footnotesize POP6560, SIZE60, TOT1DEC1, TOTIND, WARTIME, WARTORN \\
  \footnotesize horseshoe ($\delta$  =  0.01, MSE inc = 0.0\%) & \footnotesize all except DENS65I, ECORG, LANDAREA, SQPI6090, WARTIME \\ 
  \footnotesize horseshoe ($\delta$  =  0.001, MSE inc = 0.0\%) & \footnotesize  all \\
  \footnotesize horseshoe ($\delta$  =  0.0, MSE inc = 0.0\%) & \footnotesize all  \\
  \midrule
    \midrule
  \footnotesize GibbsBvs & \footnotesize DENS65C, EAST, GDPCH60L, IPRICE1, P60, TROPICAR  \\
\footnotesize EMVS & \footnotesize none \\
\footnotesize SSLASSO & \footnotesize EAST, P60, TROPICAR \\
\footnotesize MOM ($\delta = 0.8$) & \footnotesize EAST, MALFAL66 \\
\footnotesize MOM  ($\delta = 0.5$) & \footnotesize  EAST, MALFAL66 \\
\footnotesize MOM ($\delta = 0.05$) & \footnotesize BUDDHA, CONFUC, EAST, GVR61, IPRICE1, P60, SAFRICA, \\
& \footnotesize TROPICAR \\
\footnotesize AIC & \footnotesize AVELF, BUDDHA, CIV72, CONFUC, DENS65C, EAST, GDPCH60L, \\ 
& \footnotesize  GGCFD3,  GOVNOM1, GVR61, HINDU00, IPRICE1, MALFAL66,  \\
& \footnotesize  MINING, MUSLIM00,  OPENDEC1, OTHFRAC, P60, POP60, RERD,  \\
& \footnotesize  REVCOUP, SAFRICA, SPAIN, TROPICAR, TROPPOP, YRSOPEN \\
\footnotesize EBIC ($\gamma = 0$) & \footnotesize CONFUC, EAST, MALFAL66, P60, TROPPOP, YRSOPEN \\
\footnotesize EBIC ($\gamma = 0.5$) & \footnotesize EAST, TROPPOP, YRSOPEN  \\
\footnotesize EBIC ($\gamma = 1.0$) & \footnotesize EAST, TROPPOP, YRSOPEN  \\
\footnotesize stability ($q = 0.1\cdot d$) & \footnotesize EAST, YRSOPEN \\
\footnotesize stability ($q = 0.5\cdot d$) & \footnotesize none \\
\footnotesize stability ($q = 0.8\cdot d$) & \footnotesize - (did not terminate) \\
       \bottomrule
    \end{tabular}
 \end{table*}
 
    \begin{table*}[h]
 \center
  \caption{Top 10 selected models using the proposed method with $\delta = 0.0$ and $\delta = 0.5$ for the SDM data.}
  \label{tab:rankingModels_SDM}
  \begin{tabular}{ll}
  \toprule 
         model & probability \\
\midrule
\multicolumn{2}{c}{$\delta = 0.5$} \\
\midrule
EAST, MALFAL66 & 0.12 \\
EAST, P60, TROPICAR & 0.035 \\
EAST, P60 & 0.034 \\
EAST, MALFAL66, P60 & 0.026 \\
EAST, TROPICAR & 0.019 \\
EAST, LIFE060 & 0.016 \\
EAST, GDPCH60L, LIFE060, MALFAL66 & 0.012 \\
EAST, IPRICE1, P60, TROPICAR & 0.011 \\
EAST, IPRICE1, P60 & 0.011 \\
EAST, GDPCH60L, IPRICE1, LIFE060 & 0.01 \\
\midrule
      \multicolumn{2}{c}{$\delta = 0.0$} \\
\midrule
EAST, MALFAL66 & 0.055 \\
DENS65C, EAST, GDPCH60L, IPRICE1, P60, TROPICAR & 0.02 \\
EAST, MALFAL66, P60 & 0.015 \\
EAST, P60, TROPICAR & 0.012 \\
EAST, MALFAL66, P60, SPAIN & 0.007 \\
EAST, MALFAL66, SPAIN & 0.007 \\
EAST, GVR61, MALFAL66 & 0.006 \\
EAST, GDPCH60L, LIFE060, MALFAL66 & 0.006 \\
EAST, LIFE060, MALFAL66 & 0.006 \\
EAST, MALFAL66, YRSOPEN & 0.006 \\
       \bottomrule
    \end{tabular}
 \end{table*}

\begin{table*}[h]
 \center
  \caption{Top 30 variable inclusion probabilities using the proposed method with $\delta = 0.0$ and $\delta = 0.5$ for the SDM data. For reference, we also show the results that were reported in \citep{sala2004determinants} using the BACE method.
  Variables in bold mark the 18 variables that were considered as significant in \citep{sala2004determinants}.}
  \label{tab:rankingVars_SDM}
    \begin{tabular}{lll}
  \begin{tabular}{ll}
  \toprule
variable & $\delta = 0.5$  \\
\midrule
\bf EAST & 0.892 \\
\bf P60 & 0.485 \\
\bf MALFAL66 & 0.341 \\
\bf GDPCH60L & 0.34 \\
\bf IPRICE1 & 0.328 \\
\bf TROPICAR & 0.285 \\
\bf LIFE060 & 0.253 \\
\bf CONFUC & 0.113 \\
\bf YRSOPEN & 0.085 \\
\bf SAFRICA & 0.079 \\
RERD & 0.068 \\
\bf DENS65C & 0.063 \\
\bf GVR61 & 0.055 \\
\bf LAAM & 0.045 \\
TROPPOP & 0.045 \\
\bf AVELF & 0.044 \\
\bf MUSLIM00 & 0.043 \\
\bf BUDDHA & 0.042 \\
\bf MINING & 0.04 \\
OTHFRAC & 0.034 \\
\bf SPAIN & 0.032 \\
OPENDEC1 & 0.031 \\
ABSLATIT & 0.029 \\
PRIEXP70 & 0.028 \\
H60 & 0.027 \\
GOVSH61 & 0.024 \\
DENS60 & 0.022 \\
FERTLDC1 & 0.022 \\
POP1560 & 0.018 \\
PROT00 & 0.018 \\
       \bottomrule
    \end{tabular}
    & 
      \begin{tabular}{ll}
  \toprule
variable & $\delta = 0.0$ \\
\midrule
\bf EAST & 0.855 \\
\bf P60 & 0.623 \\
\bf IPRICE1 & 0.523 \\
\bf GDPCH60L & 0.475 \\
\bf MALFAL66 & 0.434 \\
\bf TROPICAR & 0.411 \\
\bf DENS65C & 0.248 \\
\bf LIFE060 & 0.231 \\
\bf CONFUC & 0.189 \\
\bf YRSOPEN & 0.14 \\
\bf SAFRICA & 0.136 \\
\bf LAAM & 0.128 \\
\bf SPAIN & 0.126 \\
\bf GVR61 & 0.109 \\
\bf MINING & 0.097 \\
\bf MUSLIM00 & 0.093 \\
\bf BUDDHA & 0.093 \\
\bf AVELF & 0.089 \\
RERD & 0.086 \\
TROPPOP & 0.071 \\
OPENDEC1 & 0.067 \\
OTHFRAC & 0.063 \\
PRIEXP70 & 0.062 \\
H60 & 0.061 \\
GOVSH61 & 0.058 \\
DENS60 & 0.055 \\
PRIGHTS & 0.053 \\
ABSLATIT & 0.052 \\
PROT00 & 0.048 \\
POP1560 & 0.046 \\
       \bottomrule
    \end{tabular} 
  &    
      \begin{tabular}{ll}
  \toprule
variable & BACE \\
\midrule
\bf EAST & 0.823 \\
\bf P60 & 0.774 \\
\bf IPRICE1 & 0.774 \\
\bf GDPCH60L & 0.685 \\
\bf TROPICAR & 0.563 \\
\bf DENS65C & 0.428 \\
\bf MALFAL66 & 0.252 \\
\bf LIFE060 & 0.209 \\
\bf CUNFUC & 0.206 \\
\bf SAFRICA & 0.154 \\
\bf LAAM & 0.149 \\
\bf MINING & 0.124 \\
\bf SPAIN & 0.123 \\
\bf YRSOPEN & 0.119 \\
\bf MUSLIM00 & 0.114 \\
\bf BUDDHA & 0.108 \\
\bf AVELF & 0.105 \\
\bf GVR61& 0.104 \\
DENS60 & 0.086 \\
RERD & 0.082 \\
OTHFRAC & 0.080 \\
OPENDEC1 & 0.076 \\
PRIGHTS & 0.066 \\
GOVSH61 & 0.063 \\
H60 & 0.061 \\
TROPPOP & 0.058 \\
PRIEXP70 & 0.053 \\
GGCFD3 & 0.048 \\
PROT00 & 0.046 \\
HINDU00 & 0.045 \\
       \bottomrule
    \end{tabular} 
        \end{tabular}
 \end{table*}

\bibliographystyle{plainnat}
\bibliography{paperReferences_new.bib}
\end{document}